\newcommand{\emi}{({\em i}\,)\xspace}
\newcommand{\emii}{({\em ii}\,)\xspace}
\newcommand{\emiii}{({\em iii}\,)\xspace}
\newcommand{\N}{\mathbb{N}}
\newcommand{\R}{\mathbb{R}}
\newcommand{\C}{\mathbb{C}}
\newcommand{\ab}{{\mathbf{a}}}
\newcommand{\bb}{{\mathbf{b}}}
\newcommand{\thetabf}{\boldsymbol{\theta}}
\newcommand{\wt}{{\tilde{w}}}
\newcommand{\Ft}{{\tilde{F}}}
\newcommand{\la}{\lambda}
\newcommand{\eps}{\epsilon}
\newcommand{\Ac}{{\cal A}}
\newcommand{\Sc}{{\cal S}}
\newcommand{\Hc}{{\cal H}}
\newcommand{\Bc}{{\cal B}}
\newcommand{\Mc}{{\cal M}}
\newcommand{\Qc}{{\cal Q}}
\newcommand{\Kc}{{\cal K}}
\newcommand{\Hf}{\mathfrak{H}}
\newcommand{\Kf}{\mathfrak{K}}
\newcommand{\Psit}{\tilde{\Psi}}
\newcommand{\Wt}{\tilde{W}}
\newcommand{\Ht}{\tilde{H}}
\newcommand{\Ct}{\tilde{\mathbb{C}}}
\newcommand{\scal}[2]{\langle #1| #2\rangle}
\newcommand{\bigscal}[2]{\big\langle #1 \big| #2\big\rangle}
\newcommand{\scl}[2]{( #1| #2 )}
\newcommand{\pder}[2]{\frac{\partial #1}{\partial #2}}
\newcommand{\ot}{\otimes}
\DeclareFontFamily{U}{mathx}{\hyphenchar\font45}
\DeclareFontShape{U}{mathx}{m}{n}{
      <5> <6> <7> <8> <9> <10>
      <10.95> <12> <14.4> <17.28> <20.74> <24.88>
      mathx10
      }{}
\DeclareSymbolFont{mathx}{U}{mathx}{m}{n}
\DeclareMathSymbol{\bigtimes}{1}{mathx}{"91}
\DeclareMathOperator{\Diff}{{\rm Diff}}
\DeclareMathOperator{\id}{{\rm id}}
\DeclareMathOperator{\Int}{{\rm Int}}
\DeclareMathOperator{\supp}{{\rm supp}}
\newcounter{mnotecount}[section]
\newtheorem{thr}{Theorem}
\newtheorem{df}[thr]{Definition}
\newtheorem{lm}[thr]{Lemma}
\numberwithin{equation}{section}
\numberwithin{thr}{section}
\begin{document}
\title{Hilbert spaces built over metrics of fixed signature}
\author{Andrzej Oko{\l}\'ow}
\date{February 23, 2022}
\maketitle
\begin{center}
{\it  Institute of Theoretical Physics, Warsaw University\\ ul. Pasteura 5, 02-093 Warsaw, Poland\smallskip\\
oko@fuw.edu.pl}
\end{center}
\medskip

\begin{abstract}
We construct two Hilbert spaces over the set of all metrics of arbitrary but fixed signature, defined on a manifold. Every state in one of the Hilbert spaces is built of an uncountable number of wave functions representing some elementary quantum degrees of freedom, while every state in the other space is built of a countable number of them. Each Hilbert space is unique up to natural isomorphisms and carries a unitary representation of the diffeomorphism group of the underlying manifold. The Hilbert spaces constructed in the case of signature $(3,0)$ may be possibly used for {canonical} quantization of the ADM formulation of general relativity.  
\end{abstract}\medskip

\begin{center}
\fbox{\begin{minipage}{0.9\textwidth}
\small This is the Accepted Manuscript version of an article accepted for publication in Classical and Quantum Gravity.  IOP Publishing Ltd is not responsible for any errors or omissions in this version of the manuscript or any version derived from it.  The Version of Record is available online at https://doi.org/10.1088/1361-6382/ac4b96 . This Accepted Manuscript is available for reuse under a CC BY-NC-ND licence after the 12 month embargo period provided that all the terms and conditions of the licence are adhered to.
\end{minipage}}
\end{center}\medskip

\section{Introduction}

In \cite{qs-metr} we constructed a space of quantum states and an algebra of quantum observables over the set of all metrics of arbitrary but fixed signature, defined on a manifold. This space and this algebra were obtained by means of the Kijowski's projective method \cite{kpt,q-nonl,proj-lt-I,proj-lt-II,mod-proj}. The motivation for this construction was a desire to find a space of quantum states, which could be used for quantization of the ADM formulation \cite{adm} of general relativity (GR), as the ``position'' part of the ADM phase space is the set of all Riemannian metrics defined on a three-dimensional manifold.

The space of quantum states built in \cite{qs-metr} is not a Hilbert space, but a convex set of mixed states. It turns out, however, that a structural component of that space can be used to obtain two distinct Hilbert spaces related to metrics of arbitrary signature. 

To outline the construction of these Hilbert spaces, which will be denoted by $\Hf$ and $\Kf$, let us first describe that structural component. To this end consider a manifold $\Mc$ and fix a metric signature $(p,p')$ such that $p+p'=\dim\Mc$. Given point $x\in\Mc$, denote by $\Gamma_x$ the set of values at $x$ of all metrics of signature $(p,p')$ defined on $\Mc$. The structural component is a {\em diffeomorphism invariant} assignment
\[
x\mapsto d\mu_x,
\]
where $d\mu_x$ is a natural measure on $\Gamma_x$. 

This assignment allows us to define for every $x\in \Mc$ a Hilbert space $H_x$ being the space of all complex functions on $\Gamma_x$ square integrable with respect to the measure $d\mu_x$. Each Hilbert space $H_x$ thus defined will be treated as an elementary quantum degree of freedom (d.o.f.).

To arrange the Hilbert spaces $\{H_x\}_{x\in\Mc}$ into the Hilbert space $\Hf$, we will proceed as follows. First, we will associate with every point $x$ the set $\Ht_x$  of all half-densities over the tangent space $T_x\Mc$ valued in $H_x$---a section of the bundle-like set $\bigcup_{x\in\Mc}\Ht_x$ will be a half-density on $\Mc$ valued in the Hilbert spaces $\{H_x\}$. Given two such half-densities $\Psit$ and $\Psit'$, we can pair their values $\Psit(x)$ and $\Psit'(x)$ using the inner product on $H_x$, obtaining as a result a complex-valued density over $T_x\Mc$. Doing this point by  point gives us a scalar density on $\Mc$, which can be naturally integrated over the manifold. This procedure, that is, the pairing and the integration, defines an inner product on a set of sections of $\bigcup_{x\in\Mc}\Ht_x$. This set equipped with the product will form a Hilbert space $\Hc_1$. 

The Hilbert space $\Hc_1$ alone will not be suitable for quantization of physical theories since it is more like an uncountable orthogonal sum of the Hilbert spaces $\{H_x\}$ and therefore it does not contain any tensor product of them. But these Hilbert spaces represent independent d.o.f. and a physically acceptable Hilbert space should contain tensor products of $\{H_x\}$. We will include these tensor products as follows.

We will fix a natural number $N\geq 2$ and will consider the set $\Mc_N$ of all $N$-element subsets of $\Mc$. Next we will equip $\Mc_N$ with a differential structure and will associate with each element $\{x_1,\ldots,x_N\}\in\Mc_N$ the Hilbert space  $H_{x_1}\ot\ldots\ot H_{x_N}$. Then, following the construction of the Hilbert space $\Hc_1$ outlined above, we will build a Hilbert space $\Hc_N$.

Finally, the Hilbert space $\Hf$ will be defined as an orthogonal sum of all the spaces $\{\Hc_N\}$.       

Regarding the Hilbert space $\Kf$, its construction will be similar to that of $\Hf$, but simpler: we will consider sections of $\bigcup_{x\in\Mc}H_x$, which are non-zero merely on countable subsets of $\Mc$, and will construct a Hilbert space $\Kc_1$ using these sections and the inner products on $\{H_x\}$. In an analogous way, we will obtain a Hilbert space $\Kc_N$ for $N\geq 2$ using the manifold $\Mc_N$ and the tensor products $\{H_{x_1}\ot\ldots\ot H_{x_N}\}$ assigned to points of the manifold. All the spaces $\{\Kc_N\}$ will be then merged into $\Kf$ by means of an orthogonal sum.          

Thus both Hilbert spaces $\Hf$ and $\Kf$, will be constructed of the same elementary quantum d.o.f. being the Hilbert spaces $\{H_x\}$. A difference between the spaces will be that a state in $\Hf$ will be built of an {\em uncountable} number of wave functions belonging to the spaces $\{H_x\}$ (and their finite tensor products), while a state in $\Kf$ will be built of a {\em countable} number of them.       
      
The diffeomorphism invariance of the assignment $x\mapsto d\mu_x$ used to build both $\Hf$ and $\Kf$ will allow us to define unitary representations of the diffeomorphism group of $\Mc$ on the Hilbert spaces. Moreover, as shown in \cite{qs-metr}, the assignment is unique up to a positive multiplicative constant. This will imply that each Hilbert space $\Hf$ and $\Kf$ is unique up to distinguished unitary maps. We will also show that the two Hilbert spaces $\{\Hf\}$ built over $\Mc=\R$ (for signature $(1,0)$ and $(0,1)$) are separable and that all the Hilbert spaces $\{\Kf\}$ are non-separable.

{With regard to the original motivation underlying this research, that is, to quantization of the ADM formalism: let us note that in the case of signature $(3,0)$ both Hilbert spaces $\Hf$ and $\Kf$ are constructed over the ``position'' part of the ADM phase space. Therefore one may try to apply these spaces to canonical quantization of the formalism. More precisely, since the formalism describes a constrained Hamiltonian system, an appropriate method to use here is the Dirac procedure of quantization of such systems. This procedure (see e.g. \cite{q-gauge-s,modern-cqg}) requires to construct so-called kinematical Hilbert space, which corresponds to} {the unconstrained phase space of the system, and the spaces $\Hf$ and $\Kf$ could possibly serve as such a space for the ADM formalism.}

{According to the Dirac procedure, constraints on the phase space are to be taken into account by \emi defining corresponding constraint operators on the kinematical Hilbert space and \emii by finding states, which are annihilated by the operators. The so-called vector constraints on the ADM phase space generate gauge transformations which coincide with the action of spatial diffeomorphisms on the canonical variables. On the other hand, in the case of signature $(3,0)$, the manifold $\Mc$ can be interpreted as a three-dimensional spatial slice of a spacetime and then the unitary representations of the diffeomorphism group of $\Mc$ on $\Hf$ and $\Kf$, are representations of spatial diffeomorphisms. As it is in loop quantum gravity (see e.g. \cite{cq-diff,rev}), these representations may be helpful in taking into account the vector constraints at the quantum level.}   

{A quantum model resulting from canonical quantization of the ADM formalism is called {\em quantum geometrodynamics} (see e.g. \cite{qgd-rev} and references therein). To the best of our knowledge, so far no kinematical Hilbert space for quantum geometrodynamics equipped with a (non-trivial) unitary representation of the diffeomorphism group, has been known \cite{qgd-rev,rev-1,wdw-rov}. Thus the spaces $\Hf$ and $\Kf$ obtained in the case of signature $(3,0)$ seem to be the first known spaces of this sort. It has to be however emphasized that although the constructions of $\Hf$ and $\Kf$ appear to be fairly natural, neither space consists of square integrable functions on the set of metrics of signature $(3,0)$. Therefore it is not obvious whether $\Hf$ or $\Kf$ can be actually useful for quantum geometrodynamics---a further research is needed to answer this question. As the first step towards this goal, in the forthcoming paper \cite{prep} we will define some operators on $\Hf$ and $\Kf$ related to the ADM canonical variables.}

The paper is organized as follows: Section 2 contains preliminaries---there we recall first of all some necessary notions and facts from \cite{qs-metr}. In Section 3 we construct the Hilbert space $\Hc_1$, and in Section 4 we define a unitary representation of diffeomorphisms of $\Mc$ on $\Hc_1$. Then, in Section 5, we build the Hilbert space $\Hf$, and in Section 6 the Hilbert space $\Kf$. Section 7 contains a summary and an outlook for future research. In Appendix A we define the differential structure on the set $\Mc_N$, and in Appendices B and C we present proofs of some lemmas.  

\section{Preliminaries}

\subsection{Manifold of scalar products of fixed signature}

Suppose that $V$ is a real vector space of non-zero finite dimension. Let us fix a pair $(p,p')$ of non-negative integers such that $p+p'=\dim V$ and denote by $\Gamma$ the set of all scalar products of signature $(p,p')$ defined on $V$. As shown in \cite{qs-metr}, $\Gamma$ is a noncompact connected real-analytic manifold of dimension $\dim V(\dim V+1)/2$. If $(e_i)_{i=1,2,\ldots,\dim V}$ is a basis of $V$, then the following map
\begin{equation}
\Gamma\ni\gamma \mapsto \chi(\gamma):=\big(\gamma(e_i,e_j)\big)_{i\leq j}\in \R^{\dim \Gamma}
\label{map-lin-coor}
\end{equation}
defines\footnote{To treat numbers $\big(\gamma(e_i,e_j)\big)_{i\leq j}$ as an element of $\R^{\dim \Gamma}$ it is necessary to choose an ordering of the numbers. In this case and in other similar cases considered in this paper, we will tacitly assume that an ordering is chosen and, if it is necessary or convenient (see e.g. the function \eqref{f-S} below), then two such orderings are compatible.} a global coordinate system on $\Gamma$. This system will be called here {\em linear coordinate system on} $\Gamma$ and denoted $(\gamma_{i\leq j})$, where 
\[
\Gamma\ni\gamma \mapsto\gamma_{i\leq j}(\gamma):=\gamma(e_i,e_j)\in \R.
\]
Occasionally we will use a single index $\alpha$ to label the coordinates: $(\gamma_{i\leq j})\equiv (\gamma^\alpha)_{\alpha=1,\ldots,\dim\Gamma}$.    

Consider now another real vector space $\check{V}$ such that $\dim\check{V}=\dim V$ and the set $\check{\Gamma}$ of all scalar products of signature $(p,p')$ defined on $\check{V}$. Obviously, $\dim\check{\Gamma}=\dim\Gamma$. Let $\check{\chi}:\check{\Gamma}\to\R^{\dim \Gamma}$ be a map given by a basis $(\check{e}_i)$ of $\check{V}$ via \eqref{map-lin-coor}. Then
\begin{equation}
\chi(\Gamma)=\check{\chi}(\check{\Gamma}).
\label{GG}
\end{equation}

To see that the equality above holds, we will use dual bases $(\omega^i)$ and $(\check{\omega}^i)$ to, respectively, $(e_i)$ and $(\check{e}_i)$. Assume that $(\gamma_{i\leq j})\in \chi(\Gamma)$---this holds if and only if 
\[
\gamma=\sum_i\gamma_{ii}\,\omega^i\ot\omega^i+\sum_{i<j}\gamma_{ij}(\omega^i\ot\omega^j+\omega^j\ot\omega^i)
\]   
is a scalar product on $V$ of signature $(p,p')$. Replacing in the formula above each covector $\omega^i$ by $\check{\omega}^i$ we obtain a scalar product on $\check{V}$ of the same signature, which means that $(\gamma_{i\leq j})\in\check{\chi}(\check{\Gamma})$. Consequently, $\chi(\Gamma)\subset\check{\chi}(\check{\Gamma})$. By virtue of an analogous reasoning, $\chi(\Gamma)\supset\check{\chi}(\check{\Gamma})$ and \eqref{GG} follows.

Thus $\Gamma$ is diffeomorphic to the open subset $\chi(\Gamma)$ of $\R^{\dim \Gamma}$, which since now will be denoted by $\Gamma_\R$:
\[
\chi(\Gamma)\equiv \Gamma_\R.
\]

In practice one often expresses a function on a manifold in terms of a coordinate system on the manifold. Here we will often express a function $f$ on $\Gamma$ in terms of a linear coordinate system i.e. we will use the pull-back $\chi^{-1*}f:\Gamma_\R\to\C$ instead of $f$. However, using coordinates $(\gamma_{i\leq j})$ with the restriction $i\leq j$ is a bit cumbersome and therefore we would like to express functions on $\Gamma_\R$ in terms of all components $(\gamma_{ij})$. Formally this can be achieved in the following way.        

Let $(t_{ij})_{i,j=1,\ldots,\dim V}$ be an element of $\R^{(\dim V)^2}$. The following map
\begin{equation}
\R^{(\dim V)^2}\ni (t_{ij})\mapsto S(t_{ij}):=\Big(\frac{t_{ij}+t_{ji}}{2}\Big)_{i\leq j}\in \R^{\dim\Gamma}
\label{f-S}
\end{equation}
restricted to the set
\[
A:=\{\ (t_{ij})\in \R^{(\dim V)^2} \ | \ t_{ij}=t_{ji} \ \}\cap S^{-1}(\Gamma_\R).
\]
is a bijection onto $\Gamma_{\R}$. In this paper we {\em will not distinguish} between a function $f:\Gamma_{\R}\to \C$ and the pull-back $(S|_A)^\star f:A\to\C$, denoting at the same time elements $(t_{ij})$ of $A$ by $(\gamma_{ij})$. Moreover, we will usually abuse slightly the notation of elements of $\Gamma_\R$ and the notation of a linear coordinate system on $\Gamma$ by dropping the restriction $i\leq j$ in $(\gamma_{i\leq j})$ and  will write simply $(\gamma_{ij})$.

\subsection{Invariant measure on the homogeneous space of scalar products \label{inv-dmu}}

The group $GL(V)$ of all linear automorphisms of $V$ acts naturally on the set $\Gamma$ via pull-back:
\begin{equation}
\begin{aligned}
&GL(V)\times\Gamma \ni(g,\gamma)\mapsto g\gamma:=g^{-1*}\gamma\in \Gamma,\\
&(g^{-1*}\gamma)(v,v')=\gamma(g^{-1}v,g^{-1}v'),\quad v,v'\in V. 
\end{aligned}
\label{g-gamma}
\end{equation}
The pair $(GL(V),\Gamma)$ together with the action \eqref{g-gamma} is a homogeneous space \cite{qs-metr} isomorphic to $GL(\dim V,\R)/O(p,p')$, where $O(p,p')$ is the (pseudo-)orthogonal group, consisting of all those elements of $GL(\dim V,\R)$, which preserve the matrix 
\[
{\rm diag}(\,\overbrace{1,\ldots,1}^p,\overbrace{-1,\ldots,-1}^{p'}\,).
\]

$\Gamma$ is a locally compact Hausdorff (l.c.H.) space being homeomorphic to the open set $\Gamma_\R\subset\R^{\dim\Gamma}$. Since now, unless stated otherwise, a measure will mean a regular Borel measure on a l.c.H.  space (see e.g. \cite{cohn}). The symbol $C^c(Y)$ will denote here the linear space of all real-valued continuous functions of compact support defined on a l.c.H.  space $Y$. If $Y'$ is another such a space, $\alpha:Y\to Y'$ a homeomorphism and $d\mu$ a (regular Borel) measure on $Y$, then there exists\footnote{The existence of $\alpha_\star d\mu$ follows from the Riesz representation theorem (see e.g. \cite{cohn}).} a unique (regular Borel) measure $\alpha_\star d\mu$ on $Y'$ called {\em push-forward measure} such that for every $h\in C^c(Y')$, 
\begin{equation}
\int_{Y'}h \,(\alpha_\star d\mu)=\int_Y (\alpha^\star h)\, d\mu,
\label{*dmu}
\end{equation}
where $\alpha^\star h$ denotes the pull-back of the function $h$: $(\alpha^\star h)(y)=h(\alpha(y))$.

Let $\bar{g}$ be the diffeomorphism
\[
\Gamma\ni \gamma\mapsto g\gamma\in \Gamma
\]
given by $g\in GL(V)$ and the action \eqref{g-gamma}. We say that a measure $d\mu$ on $\Gamma$ is {\em invariant} if for every $g\in GL(V)$ 
\begin{equation}
\bar{g}_\star d\mu=d\mu.
\label{inv-df}
\end{equation}
In \cite{qs-metr} we showed that on $\Gamma$ there exists a (non-zero) invariant measure and that it is unique up to a positive multiplicative constant. 
    
$\Gamma$ is second countable (i.e. $\Gamma$ has a countable base for its topology) being homeomorphic to the open subset $\Gamma_\R$ of $\R^{\dim \Gamma}$. On the other hand, each regular measure on second countable l.c.H. space is $\sigma$-finite\footnote{A measure $d\nu$ on $Y$ is $\sigma$-finite if $Y$ is a union of a sequence $(Y_n)$ of its subsets such that each $Y_n$ has a finite measure under $d\nu$.} \cite{cohn}, which means that every invariant measure on $\Gamma$ is $\sigma$-finite.

Consider now real vector spaces $V_0$, $V_1$ and $V_2$ of the same dimension, and suppose that $\Gamma_i$ ($i=0,1,2$) is the homogeneous space of all scalar products of signature $(p,p')$ on $V_i$ (the signature is fixed and does not depend on $i$). Every linear isomorphism $l_{ij}:V_j\to V_i$ defines a pull-back $l^*_{ij}:\Gamma_i\to \Gamma_j$, being a diffeomorphism between the manifolds.

\begin{lm}
If $d\mu_0$ is an invariant measure on $\Gamma_0$, then
\begin{enumerate}
\item $(l^*_{01})_\star d\mu_0$ is an invariant measure on $\Gamma_1$, which is independent of the choice of linear isomorphism $l_{01}$;     
\item for every triplet of linear isomorphisms $l_{01}$, $l_{02}$ and $l_{12}$ 
\[
(l^*_{12})_\star(l^*_{01})_\star d\mu_0=(l^*_{02})_\star d\mu_0.
\]   
\end{enumerate}
\label{inv-meas}
\end{lm}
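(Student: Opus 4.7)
The plan is to reduce everything to the invariance of $d\mu_0$ on $\Gamma_0$, using the contravariant functoriality of pull-back together with the compatibility of push-forward with composition. No new analytic input is needed; the whole argument is structural.

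For item 1, I would first establish an intertwining identity between the pull-back map $l_{01}^*:\Gamma_0\to\Gamma_1$ and the $GL$-actions on the two sides. Given $g\in GL(V_1)$, set $h:=l_{01}\circ g\circ l_{01}^{-1}\in GL(V_0)$. Unfolding the definitions,
\[
(\bar g\circ l_{01}^*)(\gamma_0)(v,v')=\gamma_0(l_{01}g^{-1}v,\,l_{01}g^{-1}v'),\qquad (l_{01}^*\circ\bar h)(\gamma_0)(v,v')=\gamma_0(h^{-1}l_{01}v,\,h^{-1}l_{01}v'),
\]
and the choice of $h$ makes these agree, so $\bar g\circ l_{01}^*=l_{01}^*\circ\bar h$. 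Pushing forward $d\mu_0$ and using functoriality gives $\bar g_\star(l_{01}^*)_\star d\mu_0=(l_{01}^*)_\star\bar h_\star d\mu_0=(l_{01}^*)_\star d\mu_0$, the final equality being invariance of $d\mu_0$ under $h\in GL(V_0)$. To prove independence of the choice of $l_{01}$, I would take a second isomorphism $l'_{01}:V_1\to V_0$ and set $h:=l_{01}\circ(l'_{01})^{-1}\in GL(V_0)$. A direct computation gives $l_{01}^*=(l'_{01})^*\circ h^*$, and $h^*$ coincides with the action $\overline{h^{-1}}$ of $h^{-1}\in GL(V_0)$ on $\Gamma_0$. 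Pushing forward and invoking invariance once more yields $(l_{01}^*)_\star d\mu_0=((l'_{01})^*)_\star d\mu_0$.

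For item 2, I would exploit the contravariance of pull-back: $l_{12}^*\circ l_{01}^*=(l_{01}\circ l_{12})^*$. Since $l_{01}\circ l_{12}:V_2\to V_0$ is itself a linear isomorphism, the independence established in item 1 gives $((l_{01}\circ l_{12})^*)_\star d\mu_0=(l_{02}^*)_\star d\mu_0$, which, combined with compatibility of push-forward with composition, is exactly the required identity.

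The only genuine obstacle is bookkeeping: the pull-back $l_{ij}^*$ runs from $\Gamma_i$ to $\Gamma_j$ while $l_{ij}$ itself runs from $V_j$ to $V_i$, and the $GL$-action involves the \emph{inverse} of the group element, so the conjugation $h=l_{01}gl_{01}^{-1}$ has to be chosen with the correct orientation. Once the intertwining identity is in place, the rest is a formal manipulation of push-forwards.
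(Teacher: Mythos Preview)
Your argument is correct. The intertwining identity $\bar g\circ l_{01}^*=l_{01}^*\circ\bar h$ with $h=l_{01}gl_{01}^{-1}$ is exactly what is needed, your verification of it is sound, and the reductions to invariance of $d\mu_0$ for both parts of item~1 and for item~2 go through as written. The paper does not actually supply its own proof of this lemma---it cites \cite{qs-metr} for the argument---so there is nothing to compare against here; your structural proof is the natural one and is presumably what the reference contains.
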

\noindent For a proof of these statements see \cite{qs-metr}.

We have shown in \cite{qs-metr} that on every homogeneous space $\Gamma$ of scalar products of signature $(p,p')$, there exists a special\footnote{The metric is invariant with respect to the group action on $\Gamma$.} metric $Q$ called in \cite{qs-metr} {\em natural metric on} $\Gamma$. Let us denote by $(Q_{\alpha\beta})$ components of the metric in a linear coordinate system $(\gamma_{ij})\equiv (\gamma^\alpha)$ on $\Gamma$ given by a map $\chi:\Gamma\to\Gamma_\R$ (see \eqref{map-lin-coor}). It turns out that the natural metric looks the same in every linear coordinate system \cite{qs-metr}. More precisely, there exist smooth functions 
\[
\Delta _{\alpha\beta}:\Gamma_\R\to\R, \quad \alpha,\beta=1,\ldots,\dim\Gamma, 
\]
such that for every space $\Gamma$ as above and for every linear coordinate system on $\Gamma$ the pull-back 
\[
\chi^{-1\star} Q_{\alpha\beta}=\Delta _{\alpha\beta}
\]  
(note that the components $(Q_{\alpha\beta})$ are functions on $\Gamma$).

Moreover, the natural metric $Q$ defines a  measure $d\mu_Q$ on $\Gamma$--- for every continuous (real or complex) function $\Psi$ of compact support on $\Gamma$ \cite{qs-metr}    
\begin{equation}
\int_\Gamma \Psi \,d\mu_Q := \int_{\Gamma_\R} \chi^{-1\star}\Big(\Psi \sqrt{|\det Q_{\alpha\beta}|}\Big)\,d\mu_L =\int_{\Gamma_\R} (\chi^{-1\star}\Psi)\,\Delta \,d\mu_L,
\label{natme}
\end{equation}
where $d\mu_L$ is the Lebesgue measure on $\R^{\dim\Gamma}\supset \Gamma_\R$ and 
\[
\Delta \equiv \sqrt{|\det \Delta _{\alpha\beta}|}=\chi^{-1\star}\Big(\sqrt{|\det Q_{\alpha\beta}|}\Big)
\] 
is a {\em positive} function on $\Gamma_\R$. 

Consider two real vector spaces $V$ and $\check{V}$ of the same dimension and the corresponding spaces $\Gamma$ and $\check{\Gamma}$ of scalar products of the same signature $(p,p')$. Let $d\mu_Q$ and $d\mu_{\check{Q}}$ be measures on, respectively, $\Gamma$ and $\check{\Gamma}$ defined by the corresponding natural metrics $Q$ and $\check{Q}$. If $l:\check{V}\to V$ is a linear isomorphism, then the pull-back $l^*:\Gamma\to \check{\Gamma}$ is a diffeomorphism. It was shown in \cite{qs-metr} that the push-forward measure 
\begin{equation}
l^*_\star d\mu_Q=d\mu_{\check{Q}},
\label{muQ-muQ'}
\end{equation}
which means in particular (i.e. in the case $V=\check{V}$) that $d\mu_Q$ is an {\em invariant} measure on the homogeneous space $\Gamma$. 

Since an invariant measure on $\Gamma$ is unique up to a positive multiplicative constant, for every invariant measure $d\mu$ on the homogeneous space, there exist a number $c>0$ such that
\begin{equation}
d\mu=c\,d\mu_Q.
\label{mu-cmuQ}
\end{equation}

\begin{lm}
Let $\Psi:\Gamma\to \C$ be continuous and $d\mu$ be an invariant measure on $\Gamma$. If
\begin{equation}
\int_\Gamma \bar{\Psi}\Psi\,d\mu = 0,
\label{zero}
\end{equation}
then $\Psi=0$.  
\label{lm-psi-zero}
\end{lm}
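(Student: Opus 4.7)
The plan is to reduce the statement to the standard fact that a non-negative continuous function on an open subset of $\R^n$ whose Lebesgue integral vanishes must itself vanish, using the explicit formula \eqref{natme} for $d\mu_Q$ together with the uniqueness \eqref{mu-cmuQ} of invariant measures up to a positive multiplicative constant.

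First I would use \eqref{mu-cmuQ} to write $d\mu = c\, d\mu_Q$ with $c>0$, so that the hypothesis becomes
\[
0 = \int_\Gamma \bar{\Psi}\Psi\,d\mu = c\int_\Gamma |\Psi|^2\,d\mu_Q.
\]
Since $c>0$ this reduces the problem to showing $\int_\Gamma|\Psi|^2\,d\mu_Q=0$ implies $\Psi=0$. Next, choosing any linear coordinate map $\chi:\Gamma\to\Gamma_\R$ and applying \eqref{natme} to (truncated versions of) the non-negative continuous function $|\Psi|^2$, I would obtain
\[
\int_{\Gamma_\R} \big(\chi^{-1\star}|\Psi|^2\big)\,\Delta\,d\mu_L = 0,
\]
where $\Delta$ is strictly positive on $\Gamma_\R$ and the integrand is a non-negative continuous function on the open set $\Gamma_\R\subset\R^{\dim\Gamma}$. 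Strictly speaking, \eqref{natme} is stated for compactly supported functions, so I would first apply it with $|\Psi|^2$ replaced by $|\Psi|^2\cdot\varphi_K$ for a partition-of-unity bump $\varphi_K$ supported in a compact set $K\subset\Gamma$, obtaining the same conclusion on each such $K$.

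Finally, I would invoke the elementary fact that a non-negative continuous function on an open subset of $\R^n$ whose Lebesgue integral is zero must be identically zero: if it were strictly positive at some point, continuity would give a whole open ball on which it is bounded below by a positive constant, contradicting zero Lebesgue integral over that ball. Applied to $(\chi^{-1\star}|\Psi|^2)\,\Delta$, this gives $(\chi^{-1\star}|\Psi|^2)\,\Delta\equiv 0$ on $\Gamma_\R$; since $\Delta>0$ everywhere, we conclude $\chi^{-1\star}\Psi=0$ on $\Gamma_\R$, hence $\Psi=0$ on $\Gamma$.

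No step is a serious obstacle; the only subtlety is that \eqref{natme} is formulated for $\Psi\in C^c(\Gamma)$ rather than for an arbitrary continuous $\Psi$, so the argument has to be localized via compactly supported cutoffs (or, equivalently, applied to each member of an exhaustion of $\Gamma$ by compact sets), after which the pointwise conclusion propagates to all of $\Gamma$ by the arbitrariness of the cutoff.
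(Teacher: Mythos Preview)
Your proof is correct and follows essentially the same route as the paper's: both reduce to the Lebesgue integral on $\Gamma_\R$ via \eqref{mu-cmuQ} and \eqref{natme}, use the strict positivity of $\Delta$, and handle the compact-support restriction in \eqref{natme} by passing to an auxiliary compactly supported function. The only cosmetic difference is that the paper argues by contradiction, constructing a single compactly supported minorant $h\leq|\Psi|^2$ with $h(\gamma_0)>0$, whereas you localize with cutoffs; the logical content is the same.
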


\begin{proof}
Suppose that $\Psi(\gamma_0)\neq 0$ for some $\gamma_0\in\Gamma$. It follows from continuity of $\Psi$ that there exists a non-negative compactly supported continuous function $h$ on $\Gamma$ such that $h(\gamma_0)>0$ and $\bar{\Psi}\Psi\geq h$. Then     
\[
\int_\Gamma \bar{\Psi}\Psi\,d\mu \geq \int_\Gamma h\,d\mu =
c\int_\Gamma h\,d\mu_Q = c \int_{\Gamma_\R} (\chi^{-1\star}h)\,\Delta \,d\mu_L>0
\]
---here the second equality holds by virtue of \eqref{mu-cmuQ}, and the last inequality does by virtue of properties of the Lebesgue measure $d\mu_L$ and the fact that the function $\Delta $ is positive everywhere on $\Gamma$. 

Clearly, the inequality above shows that the only continuous function, which satisfies \eqref{zero}, is the constant function of zero value.
\end{proof}

\subsection{Diffeomorphism invariant field of invariant measures}

Let $\Mc$ be a smooth connected paracompact manifold. We fix a pair of non-negative integers $(p,p')$ such that $p+p'=\dim \Mc$ and denote by $\Qc(\Mc)$ the set of all (smooth) metrics of signature $(p,p')$ defined on $\Mc$. Let us denote by $\Gamma_x$ the space of all scalar products on $T_x\Mc$, of signature $(p,p')$. Obviously, for every $x$ the pair $(GL(T_x\Mc),\Gamma_x)$ is a homogeneous space. Moreover, as shown in \cite{qs-metr}, if $\Qc(\Mc)$ is non-empty, then 
\[
\Gamma_x=\{\ q_x \ | \ q\in \Qc(\Mc) \ \},
\]
where $q_x$ denotes the value of the metric $q$ at $x\in\Mc$.
   
An assignment $x\mapsto d\mu_x$, where $d\mu_x$ is a measure on $\Gamma_x$, will be called a {\em field of measures} or a {\em measure field} on the manifold $\Mc$. 

Let $x_0$ be any point of $\Mc$ and $d\mu_{x_0}$ an {\em invariant} measure on $\Gamma_{x_0}$. In \cite{qs-metr} we introduced the following measure field on $\Mc$:
\begin{equation}
x\mapsto d\mu_{x}:=(l^*_{x_0x})_{\star} \,d\mu_{x_0},
\label{diff-inv}
\end{equation}
where $l_{x_0x}:T_x\Mc\to T_{x_0}\Mc$ is a linear isomorphism and $l^*_{x_0x}:\Gamma_{x_0}\to \Gamma_x$ the corresponding pull-back. By virtue of Lemma \ref{inv-meas}, \emi $d\mu_x$ is an invariant measure on $\Gamma_x$, which does not depend on the choice of the map $l_{x_0x}$, and \emii for every two points $x,x'\in\Mc$ and for every linear isomorphism $l_{xx'}:T_{x'}\Mc\to T_{x}\Mc$
\begin{equation}
d\mu_{x'}=(l^*_{xx'})_\star \,d\mu_x.
\label{l*xx}
\end{equation}
The latter property means that the measure field \eqref{diff-inv} is diffeomorphism invariant since $l_{xx'}$ above can be the tangent map $\theta^t$ defined by any diffeomorphism $\theta$ on $\Mc$, which maps $x'$ to $x$. Thus \eqref{diff-inv} is a {\em diffeomorphism invariant field of invariant measures}.      

In \cite{qs-metr} we showed moreover, that the measure field \eqref{diff-inv} is unique up to a positive multiplicative constant, i.e., for any two measure fields $x\mapsto d\mu_{x}$ and $x\mapsto d\check{\mu}_{x}$ constructed according to \eqref{diff-inv}, there exists a number $c>0$ such that for every $x\in\Mc$      
\begin{equation}
d\check{\mu}_{x}=c\,d{\mu_{x}}.
\label{mu-cmu}
\end{equation}

Let $d\mu_{Qx}$ be the invariant measure on $\Gamma_x$ given by the natural metric on the homogeneous space. It follows from \eqref{muQ-muQ'} that the measure field
\[
x\mapsto d\mu_{Qx}
\]   
can be obtained via the formula \eqref{diff-inv}. Taking into account \eqref{mu-cmu}, we see that every measure field \eqref{diff-inv} is of the form
\begin{equation}
x\mapsto c\,d\mu_{Qx}
\label{c-dmuQx}
\end{equation}
for some (independent of $x$) positive number $c$. 

\subsection{Pseudo Hilbert space of half-densities}

Let $W$ be a (possibly infinite dimensional) complex vector space, $V$ a finite dimensional real vector space and $\alpha$ a real number. Denote by $B$ the set of all bases of $V$. If $e=(e_i)_{i=1,\ldots,\dim V}$ is a basis of $V$ and $\Lambda=(\Lambda^j{}_i)_{i,j=1,\ldots,\dim V}$ a non-singular real matrix, then the symbol $\Lambda e$ will represent the basis $(\Lambda^j{}_ie_j)$. 

An $\alpha$-density over $V$ valued in $W$ is a map $\wt:B\to W$ of the following property: for every two bases $e$ and $\Lambda e$ of $V$,
\begin{equation}
\wt(\Lambda e)=|\det\Lambda\,|^\alpha\,\wt(e),
\label{w-Lw'}
\end{equation}
where $\det\Lambda\equiv\det(\Lambda^j{}_i)\neq 0$.
 
We will denote by $\Wt$ the set of all $\alpha$-densities over $V$ valued in $W$. This set possesses a natural complex vector space structure: if $z\in\C$ and $\wt,\wt'\in\Wt$, then
\begin{align*}
(z\wt)(e)&:=z\,\wt(e), & (\wt+\wt')(e)&:=\wt(e)+\wt'(e).
\end{align*}

Denote by $\Ct$ the vector space of one-densities over $V$ valued in complex numbers. The complex conjugate $\overline{\wt}$ of $\wt\in\Ct$ is an element of $\Ct$ such that 
\[
\overline{\wt}(e)=\overline{\wt(e)}
\]
for a basis $e$ of $V$ (if the equality above holds for $e$, then it does for every basis of $V$). 

Let $\wt,\wt'\in\Ct$ be real-valued. We will say that $\wt'$ is greater than or equal to $\wt$ and write $\wt'\geq \wt$ if 
\begin{equation}
\wt'(e)\geq\wt(e)
\label{w'>w}
\end{equation}
for a basis $e$ of $V$ (if \eqref{w'>w} holds for $e$, then it does for every basis of $V$).

Let $H$ be a (complex) Hilbert space with an inner product $\scal{\cdot}{\cdot}$. Consider a vector space $\Ht $ of half-densities (that is, $\frac{1}{2}$-densities) over $V$ valued in $H$. Let us define the following map:
\begin{equation}
\begin{aligned}
&\Ht \times \Ht  \ni (\wt',\wt)\mapsto \scl{\wt'}{\wt}\in\Ct,\\
&\scl{\wt'}{\wt}(e):=\scal{\wt'(e)}{\wt(e)}.
\end{aligned}
\label{den-prod}
\end{equation}
This map satisfies what follows: 
\begin{align}
\forall \,z_1,z_2\in\C, \,\forall\, \wt,\wt_1,\wt_2\in\Ht  \quad & \scl{\wt}{z_1\wt_1+z_2\wt_2}=z_1\scl{\wt}{\wt_1}+z_2\scl{\wt}{\wt_2},\nonumber\\
\forall\, \wt,\wt'\in\Ht \quad & \overline{\scl{\wt'}{\wt}}= \scl{\wt}{\wt'},\nonumber\\
\forall\,\wt\in \Ht\quad & \scl{\wt}{\wt}\geq 0,\nonumber\\
& \scl{\wt}{\wt}=0\Rightarrow \wt=0, \label{wt=0}
\end{align}
where, abusing slightly the notation, we used the symbol $0$ to denote both the zero of $\Ct$ and the zero of $\Ht$. Therefore the map \eqref{den-prod} will be called {\em density product on} $\Ht $. 

The space $\Ht $ equipped with the density product \eqref{den-prod} will be called {\em pseudo-Hilbert space of half-densities over $V$ valued in $H$}.
  
\section{Construction of the Hilbert space $\Hc_1$ \label{constr-H}}

Let us recall that $\Mc$ is a smooth connected paracompact manifold. For the sake of the construction of the Hilbert space $\Hc_1$, let us fix a pair $(p,p')$ of non-negative integers such that $p+p'=\dim \Mc$ and treat it as a metric signature---all objects used to construct $\Hc_1$, which need a metric signature to be chosen, will be given by this $(p,p')$.    

\subsection{Hilbert half-densities and scalar densities on $\Mc$}

\subsubsection{Definition \label{Hc-df}}

Let $x\mapsto d\mu_x$ be a diffeomorphism invariant field of invariant measures given by \eqref{diff-inv}. It allows to define a separable \cite{qs-metr} Hilbert space for every $x\in\Mc$:  
\begin{equation}
H_x:=L^2(\Gamma_x,d\mu_x).
\label{Hx}
\end{equation}
We will use the symbol $\scal{\cdot}{\cdot}_x$ to represent the inner product on $H_x$.
  
Given a point $x\in\Mc$, let $\Ct_x$ stands for the vector space of all one-densities over the tangent space $T_x\Mc$ valued in $\C$. Denote by $\Ht_x$ the pseudo-Hilbert space of half-densities over $T_x\Mc$ valued in $H_x$, and by $\scl{\cdot}{\cdot}_x$ the density product \eqref{den-prod} on $\Ht _x$ valued in $\Ct_x$.

Let 
\begin{equation}
\tilde{\mathbf{H}}:=\bigcup_{x\in\Mc}\Ht_x.
\label{t-bfH}
\end{equation}
A map $\Psit:\Mc\to\tilde{\mathbf{H}}$ such that  $\Psit(x)\in \Ht _x$ for every $x\in\Mc$, will be called {\em Hilbert half-density on} $\Mc$ (this name comes from a shortening of the precise but inconvenient term ``half-density on $\Mc$ valued in the Hilbert spaces $\{H_x\}$''). In other words, a Hilbert half-density is a section of the bundle-like set $\tilde{\mathbf{H}}$. All such half-densities form a complex vector space with multiplication by complex numbers and addition defined point by point:  
\begin{align}
(z\Psit)(x)&:=z\,\Psit(x), & (\Psit+\Psit')(x)&:=\Psit(x)+\Psit'(x),
\label{linear-str}
\end{align}
where $z\in\C$, and $\Psit$, $\Psit'$ are two Hilbert half-densities.
   
The $\Mc$-support of a Hilbert half-density $\Psit$ on $\Mc$ is the closure of the following set:
\[
\{\ x \in \Mc \ | \ \Psit(x)\neq 0 \ \}.
\]

Let $\tilde{\mathbf{C}}:=\bigcup_{x\in\Mc}\Ct_x$. A map $\tilde{F}:\Mc\to\tilde{\mathbf{C}}$, such that $\tilde{F}(x)\in \Ct_x$ for every $x\in \Mc$, is a complex scalar density on $\Mc$. The set of all such densities is a complex vector space with multiplication by complex numbers and addition defined point by point by formulas analogous to \eqref{linear-str}. Complex conjugate $\overline{\Ft}$ of a density $\Ft$ is defined naturally point by point:
\[
\overline{\Ft}(x):=\overline{\Ft(x)}.
\]
The support of a scalar density $\Ft$ on $\Mc$ is the closure of 
\[
\{\ x \in \Mc \ | \ \Ft(x)\neq 0 \ \}.
\]

\subsubsection{Regularity conditions \label{reg-con}}

As it is said in the introduction to this paper, we are going to define an inner product on a set of sections of $\tilde{\mathbf{H}}$, that is, on a set of Hilbert half-densities. To this end we will pair the values of two such half-densities point by point using the density products $\{\scl{\cdot}{\cdot}_x\}_{x\in\Mc}$, obtaining thereby a scalar density on $\Mc$. This density, once integrated over the manifold, will yield a complex number, which by definition will be the value of the inner product of these two half-densities.      

An important question is how to choose that set of Hilbert half-densities, which together with this inner product, will form the desired Hilbert space. It seems that an obvious answer to this question is that one should choose the set of all half-densities of finite norm with respect to the inner product. However, it has to be proven that this set equipped with the product is indeed a Hilbert space.

Moreover, in the case of signature $(3,0)$, we would like to use the resulting Hilbert space for quantization of GR and this means in particular that we will try to define on the space some operators, which will represent physical observables. To define such operators it is often very convenient to have a dense linear subspace of sufficiently regular (continuous, smooth, of compact support etc.) wave functions. Thus if we defined the desired Hilbert space using all the half-densities of finite norm, then we would have to find within it a linear subspace of sufficiently regular half-densities and prove that this subspace is dense.

To avoid having to carry out these two proofs, we will choose a linear space of sufficiently regular Hilbert half-densities and will simply define the desired Hilbert space as a completion of the former space in the norm defined by the inner product. The issue of a relation between this Hilbert space and that space given by all the half-densities of finite norm, will be postponed for the future, being not very essential at this moment. 

There is, however, a related issue: is this space of sufficiently regular Hilbert half-densities ``large enough'' from a physical point of view? Again, we will postpone this question in its generality for the future and will limit ourselves to examining only a simple particular case---the result of this examination (see Section \ref{Hf-McR}) will suggest that the answer to this (general) question is in affirmative.    

Below we will introduce a notion of continuous Hilbert half-densities (which can be easily modified to a notion of smooth half-densities). Moreover, we will impose on the  continuous half-densities an additional regularity condition, which will guarantee that the half-densities, once paired point by point by means of the density products $\{\scl{\cdot}{\cdot}_x\}$, yield continuous densities on $\Mc$---continuity in combination with a compact support of a density, will ensure that the density is integrable over the manifold.

Let us emphasize finally that the present section is quite technical and it may be skipped on the first reading. 

\paragraph{Continuous scalar densities} Let us begin by recalling the notion of continuous scalar density, which will be a model for introducing the notion of continuous Hilbert half-density. 

Let $U$ be an open subset of $\Mc$ and $\varphi:U\to\R^{\dim \Mc}$ a map defining a coordinate system $(x^i)$ on $U$. Given a scalar density $\Ft$ and  a point $x\in U$, the value $\Ft(x)$ is a one-density over $T_x\Mc$ valued in $\C$. Since $(\partial_{x^k})$ is a basis of the tangent space, then $\Ft\big(x,(\partial_{x^k})\big)$ is a complex number. Every $x\in U$ can be expressed in terms of the coordinate system $(x^i)$, which allows us to define {\em coordinate representation of $\Ft$ in the system $(x^i)$} as a function       
\begin{equation}
\varphi(U)\ni (x^i)\mapsto f(x^i):=\Ft\big(\varphi^{-1}(x^i),(\partial_{x^k})\big)\in \C.
\label{F-cont}
\end{equation}
If $(x^{\prime i})$ is an other coordinate system of the domain $U$, then the corresponding coordinate representation  $f'$ satisfies  
\begin{equation}
f'(x^{\prime i})=\bigg|\det\Big(\pder{x^k}{x^{\prime l}}\Big)\bigg|(x^{\prime i})\,f\big(x^k(x^{\prime i})\big).
\label{f'-f}
\end{equation}
This property means that the function $f$ is continuous if and only if $f'$ is continuous.

We will say that the scalar density $\Ft$ is {\em continuous} if for every local coordinate system $(x^i)$ the corresponding coordinate representation  \eqref{F-cont} is continuous.

\paragraph{Coordinate representation of a Hilbert half-density} Let $U$ be again an open subset of $\Mc$ and $\varphi:U\to\R^{\dim \Mc}$ a map defining a coordinate system $(x^i)$ on $U$. Given Hilbert half-density $\Psit$ and  a point $x\in U$, the value $\Psit(x)$ is a half-density over $T_x\Mc$ valued in $H_x$. Thus $\Psit(x,(\partial_{x^k}))$ is an element of $H_x$ being an equivalence class\footnote{{The Hilbert space $H_x=L^2(\Gamma_x,d\mu_x)$ consists of equivalence classes of those functions on $\Gamma_x$, which are square integrable with respect to $d\mu_x$---two functions belong to the same class if they are equal almost everywhere.}} of a function
\begin{equation}
\Gamma_x\ni\gamma\mapsto \Psi\big(x,(\partial_{x^k}),\gamma\big)\in\C.
\label{psi-repr}
\end{equation}
Now, every $x\in U$ can be expressed in terms of the coordinate system $(x^i)$ and the scalar product $\gamma$ can be expressed in terms of components $(\gamma_{ij})$ given by the basis $(\partial_{x^i})$ of $T_x\Mc$. This allows us to define {\em coordinate representation of $\Psit$ in the system $(x^i)$} as the following function:       
\begin{equation}
\varphi(U)\times \Gamma_\R\ni (x^i,\gamma_{ij})\mapsto \psi(x^i,\gamma_{ij}):=\Psi\big(\varphi^{-1}(x^i),(\partial_{x^k}),\gamma_{ij}\,dx^i\ot dx^j\big)\in \C.
\label{psi-cont}
\end{equation}
Note that since $\Psit(x,(\partial_{x^k}))$ is an equivalence class of functions on $\Gamma_x$, the coordinate representation above is not unique, even if the system $(x^i)$ is fixed. 

To reconstruct the Hilbert half density $\Psit$ on $U$ from its coordinate representative $\psi$, it is enough to observe that the function
\begin{equation}
\Gamma_x\ni\gamma\mapsto\Psi\big(x,(\partial_{x^k}),\gamma\big)=\psi\big(\varphi(x),\gamma(\partial_{x^i},\partial_{x^j})\big)\in\C
\label{P-from-p}
\end{equation}
is a representative of the equivalence class $\Psit\big(x,(\partial_{x^i})\big)\in H_x$.
 
\paragraph{Continuous Hilbert half-densities}
\begin{df}
Let $(x^i)$ be a coordinate system on an open set $U\subset \Mc$. We will say that a Hilbert half-density $\Psit$ is continuous on $U$ in the coordinate system $(x^i)$ if for every $x\in U$ the representative \eqref{psi-repr} of $\Psit(x,(\partial_{x^k}))\in H_x$ can be chosen in such a way that the coordinate representation \eqref{psi-cont} is a continuous map.        
\end{df}

It follows from Lemma \ref{lm-psi-zero} that if a Hilbert half-density $\Psit$ is continuous on $U$ in the coordinate system $(x^i)$, then the choice of the representatives \eqref{psi-repr}, which give the continuous function \eqref{psi-cont}, is unique. In other words, given a coordinate system $(x^i)$, a continuous coordinate representation of $\Psit$ in the system is {\em unique} (provided it exists).  

\begin{lm}
Let $(x^i)$ and $(x^{\prime i})$ be coordinate systems on an open set $U\subset \Mc$. A Hilbert half-density $\Psit$ is continuous on $U$ in the coordinate system $(x^i)$ if and only if it is continuous on $U$ in the coordinate system $(x^{\prime i})$.
\label{lm-cont-coor}
\end{lm}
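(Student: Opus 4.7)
The plan is to derive an explicit transition formula relating the two coordinate representations of $\Psit$, and then to read off that the primed representation is a composition of continuous maps whenever the unprimed one is continuous. Since the roles of the two systems are symmetric, this will yield the equivalence.

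First I would fix $x\in U$ and set $\Lambda^k{}_l(x):=\partial x^k/\partial x'^l|_x$, so that $\partial_{x'^l}|_x=\Lambda^k{}_l(x)\,\partial_{x^k}|_x$. Applying the half-density law \eqref{w-Lw'} in $H_x$ gives the equality in $H_x$
\[
\Psit\bigl(x,(\partial_{x'^l})\bigr)=\bigl|\det\Lambda^k{}_l(x)\bigr|^{1/2}\,\Psit\bigl(x,(\partial_{x^k})\bigr).
\]
Hence, picking the continuous representative $\Psi\bigl(\cdot,(\partial_{x^k}),\cdot\bigr)$ afforded by the hypothesis (and unique by Lemma \ref{lm-psi-zero}), the pointwise-defined function $\gamma\mapsto\bigl|\det\Lambda^k{}_l(x)\bigr|^{1/2}\,\Psi\bigl(x,(\partial_{x^k}),\gamma\bigr)$ on $\Gamma_x$ is a representative of the equivalence class $\Psit\bigl(x,(\partial_{x'^l})\bigr)\in H_x$.

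Next I would transcribe the change of bases in linear coordinates. Since $\gamma_{kl}=\gamma(\partial_{x^k},\partial_{x^l})$ and $\gamma'_{ij}=\gamma(\partial_{x'^i},\partial_{x'^j})$, one has $\gamma_{kl}=(\partial x'^a/\partial x^k)(\partial x'^b/\partial x^l)\,\gamma'_{ab}$ at $x$. Substituting this into the definition \eqref{psi-cont} of the coordinate representation of $\Psit$ in $(x'^i)$, and inserting the representative above, yields the transition formula
\[
\psi'(x'^i,\gamma'_{ij})=\bigl|\det\Lambda^k{}_l\bigr|^{1/2}(x'^i)\,\psi\!\left(x^k(x'^i),\,\tfrac{\partial x'^a}{\partial x^k}(x'^i)\,\tfrac{\partial x'^b}{\partial x^l}(x'^i)\,\gamma'_{ab}\right),
\]
valid pointwise on $\varphi'(U)\times\Gamma_\R$. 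The right-hand side is a composition of the continuous function $\psi$ with smooth functions of $(x'^i,\gamma'_{ij})$ (the Jacobian entries and their inverses are smooth, being induced by the smooth coordinate change), and hence is continuous. Since by construction it is a representative of $\Psit\bigl(x,(\partial_{x'^l})\bigr)$ at each $x\in U$, this shows that $\Psit$ is continuous on $U$ in the system $(x'^i)$. The converse direction is obtained by interchanging the roles of $(x^i)$ and $(x'^i)$.

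The whole argument is essentially bookkeeping. The only mild subtlety worth highlighting is that one must verify that the function constructed on the primed side is genuinely a representative of the correct equivalence class in $H_x$; this is precisely what the pointwise equality displayed above ensures. I do not foresee any real obstacle beyond this careful tracking of representatives versus equivalence classes.
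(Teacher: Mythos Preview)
Your proof is correct and follows essentially the same route as the paper: derive the transition formula $\psi'(x'^i,\gamma'_{ij})=\bigl|\det(\partial x^k/\partial x'^l)\bigr|^{1/2}\psi\bigl(x^k(x'^i),\,(\partial x'^a/\partial x^k)(\partial x'^b/\partial x^l)\gamma'_{ab}\bigr)$ from the half-density law and the change of linear coordinates on $\Gamma_x$, then conclude continuity of $\psi'$ from that of $\psi$ via smoothness of the coordinate transition. Your remark about checking that the constructed function is a genuine representative of the correct equivalence class is exactly the point the paper also makes.
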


\begin{proof}
Suppose that for every $x\in U$, a representative of $\Psit(x,(\partial_{x^j}))\in H_x$ is chosen, which results in a coordinate representative $\psi$ of $\Psit$ in the coordinates $(x^j)$.

For every $x\in U$
\[
\Psit(x,(\partial_{x^{\prime i}}))=\bigg|\det\Big(\pder{x^k}{x^{\prime l}}\Big)\bigg|^{1/2}\!\!\!\!\!\!(x)\,\,\Psit(x,(\partial_{x^j})).
\] 
Therefore if $\gamma\mapsto\Psit(x,(\partial_{x^j}),\gamma)$ is the selected representative of $\Psit(x,(\partial_{x^j}))$, then the function 
\[
\gamma\to \bigg|\det\Big(\pder{x^k}{x^{\prime l}}\Big)\bigg|^{1/2}\Psit(x,(\partial_{x^j}),\gamma)
\]
is a representative of $\Psit(x,(\partial_{x^{\prime i}}))\in H_x$. Hence
\[
(x^{\prime i},\gamma'_{ij})\mapsto \psi'(x^{\prime i},\gamma'_{ij})=\bigg|\det\Big(\pder{x^k}{x^{\prime l}}\Big)\bigg|^{1/2}\!\!\!\!\!\!(x^{\prime i})\,\,\Psit\big(\varphi^{\prime -1}(x^{\prime i}),(\partial_{x^j}),\gamma'_{ij}\,dx^{\prime i}\ot dx^{\prime j}\big),
\]
where $\varphi':U\to\R^{\dim\Mc}$ is the map defining the system $(x^{\prime i})$, is a coordinate representation  of $\Psit$ in the system $(x^{\prime i})$. Denoting $(\varphi\circ \varphi^{\prime -1})(x^{\prime i})\equiv (x^j(x^{\prime i}))$ we obtain from the formula above
\begin{multline}
\psi'(x^{\prime i},\gamma'_{ij})=\bigg|\det\Big(\pder{x^k}{x^{\prime l}}\Big)\bigg|^{1/2}\Psit\Big(\varphi^{-1}(x^j(x^{\prime i})),(\partial_{x^j}),\gamma'_{ij}\pder{x^{\prime i}}{x^{k}}\pder{x^{\prime j}}{x^{l}}\,dx^{k}\ot dx^{l}\Big)=\\=\bigg|\det\Big(\pder{x^k}{x^{\prime l}}\Big)\bigg|^{1/2}\psi\Big(x^j(x^{\prime i}),\gamma'_{ij}\pder{x^{\prime i}}{x^{k}}\pder{x^{\prime j}}{x^{l}}\Big),
\label{psi'}
\end{multline}
where $\psi$ is the coordinate representation of $\Psit$ in the system $(x^i)$ introduced at the very beginning of the proof, and the derivatives $\partial {x^k}/\partial{x^{\prime l}}$ and $\partial{x^{\prime l}}/\partial {x^k}$ are treated as functions of $(x^{\prime i})$. 

We conclude that if $\psi$ is a coordinate representative of $\Psit$ in the coordinates $(x^i)$, then $\psi'$ given by \eqref{psi'} is a representative of $\Psit$ in the coordinates $(x^{\prime i})$.  

Taking into account that the transition map $(x^{\prime i})\mapsto \big(x^j(x^{\prime i})\big)$ is smooth, we see from \eqref{psi'} that the coordinate representation $\psi'$ is continuous if and only if the coordinate representation $\psi$ is continuous (the ``only if'' part of these statement comes from the fact that the dependence of $\psi'$ on $\psi$ given by \eqref{psi'} can be inverted to a dependence of $\psi$ on $\psi'$ of an analogous form).
\end{proof}

We will say that a Hilbert half-density $\Psit$ is {\em continuous} if for every local coordinate system, there exists a continuous coordinate representation  \eqref{psi-cont}.

Since now in the case of a continuous Hilbert half-density we will use exclusively its continuous coordinate representations.

\paragraph{Hilbert half-densities of compact and slowly changing $\Gamma_\R$-support} Let us consider again the map $\varphi:U\to\R^{\dim\Mc}$ and the corresponding coordinate system $(x^i)$.

\begin{df}
Suppose that a Hilbert half-density $\Psit$ is continuous, and $\psi$ is its coordinate representation in the system $(x^i)$. We will say that the $\Gamma_\R$-support of $\Psit$ around $x_0\in U$ is compact and slowly changing in the coordinate system $(x^i)$, if there exist an open neighborhood $U_0\subset U$ of $x_0$, and  a compact set $K\subset\Gamma_\R$, such that for every value $(x^i)\in \varphi(U_0)$, the support of the function
\begin{equation}
\Gamma_{\R}\ni(\gamma_{ij})\mapsto \psi_{(x^i)}(\gamma_{ij}):=\psi(x^i,\gamma_{ij})\in\C
\label{psi_xi}
\end{equation}
is contained in $K$. 
\label{slowly-df}
\end{df}
\noindent Let us emphasize that if the support of $\psi_{(x^i)}$ is contained in a compact set, then the support is compact itself (since each closed subset of a compact set is compact). Thus the definition above implies that for every $(x^i)\in\varphi(U_0)$ the support of $\psi_{(x^i)}$ is compact.

\begin{lm}
Let $(x^i)$ and $(x^{\prime i})$ be coordinate systems on an open set $U\subset \Mc $, and let $\Psit$ be a continuous Hilbert half-density. The $\Gamma_\R$-support of $\Psit$ around $x_0\in U$ is compact and slowly changing in the coordinate system $(x^i)$ if and only if the $\Gamma_\R$-support of $\Psit$ around $x_0$ is compact and slowly changing in the coordinate system $(x^{\prime i})$.
\label{slow-coor}
\end{lm}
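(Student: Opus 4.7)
The plan is to exploit the transformation formula \eqref{psi'} between the two coordinate representations of $\Psit$. That formula shows that at a fixed point $(x^{\prime i})$, the $\gamma'$-dependence of $\psi'_{(x^{\prime i})}$ is obtained from $\psi_{(x^j(x^{\prime i}))}$ by a smooth, $(x^{\prime i})$-dependent linear change of variables on $\R^{\dim\Gamma}$, together with multiplication by a positive (hence nowhere vanishing) Jacobian factor. Consequently the support of $\psi'_{(x^{\prime i})}$ equals the image of the support of $\psi_{(x^j(x^{\prime i}))}$ under the corresponding linear isomorphism.

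First I would make this linear change of variables explicit. For each $(x^{\prime i})\in \varphi'(U)$ define the linear automorphism
\[
A(x^{\prime i})\colon \R^{\dim\Gamma}\to\R^{\dim\Gamma},\qquad (\gamma'_{ij})\mapsto \Big(\pder{x^{\prime i}}{x^k}\pder{x^{\prime j}}{x^l}\gamma'_{ij}\Big)_{k\le l}.
\]
This is merely the change-of-basis transformation for the components of a scalar product on $T_x\Mc$ induced by the transition between the bases $(\partial_{x^{\prime i}})$ and $(\partial_{x^k})$, so $A(x^{\prime i})$ restricts to a diffeomorphism of the open set $\Gamma_\R$ onto itself and depends smoothly on $(x^{\prime i})$. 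From \eqref{psi'} and the fact that the Jacobian factor is positive, one reads off
\[
\supp \psi'_{(x^{\prime i})}=A(x^{\prime i})^{-1}\bigl(\supp \psi_{(x^j(x^{\prime i}))}\bigr).
\]

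Now assume $\Psit$ has $\Gamma_\R$-support around $x_0$ compact and slowly changing in $(x^i)$, and let $U_0\subset U$ and the compact set $K\subset\Gamma_\R$ be as in Definition \ref{slowly-df}. Choose an open neighborhood $U_0'$ of $x_0$ which is relatively compact and whose closure satisfies $\overline{U_0'}\subset U_0$; such a $U_0'$ exists since $\Mc$ is locally compact. Define
\[
K':=\bigl\{\,A(x^{\prime i})^{-1}\gamma \, : \, (x^{\prime i})\in\varphi'(\overline{U_0'}),\ \gamma\in K\,\bigr\}.
\]
Since $(x^{\prime i},\gamma)\mapsto A(x^{\prime i})^{-1}\gamma$ is continuous and $\varphi'(\overline{U_0'})\times K$ is compact, $K'$ is a compact subset of $\Gamma_\R$. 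By the identity displayed above, for every $(x^{\prime i})\in\varphi'(U_0')$,
\[
\supp \psi'_{(x^{\prime i})}=A(x^{\prime i})^{-1}\bigl(\supp\psi_{(x^j(x^{\prime i}))}\bigr)\subset A(x^{\prime i})^{-1}(K)\subset K',
\]
which is precisely the slowly-changing compactness condition in the system $(x^{\prime i})$. The converse implication follows by symmetry, since \eqref{psi'} can be inverted to an analogous relation expressing $\psi$ in terms of $\psi'$.

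The main technical point to be careful about is ensuring that $A(x^{\prime i})^{-1}K$ actually lies in $\Gamma_\R$ (an open subset of $\R^{\dim\Gamma}$) rather than merely in the ambient $\R^{\dim\Gamma}$; this is automatic, because both coordinate representations describe the values of $\Psit$ on the same manifolds $\{\Gamma_x\}$, and $A(x^{\prime i})$ is nothing but the coordinate expression of the identity on $\Gamma_x$ in the two linear coordinate systems coming from $(\partial_{x^k})$ and $(\partial_{x^{\prime i}})$. Aside from this, the argument is purely a compactness-of-continuous-image estimate, with no serious obstacle.
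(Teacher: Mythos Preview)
Your proof is correct and follows essentially the same route as the paper's: both arguments identify the linear change of variables $(\gamma_{ij})\mapsto(\partial x^i/\partial x'^k)(\partial x^j/\partial x'^l)\gamma_{ij}$ coming from \eqref{psi'}, choose a compact neighborhood of $x_0$ inside $U_0$, and define $K'$ as the image of (compact)$\times K$ under the resulting continuous map, then invoke symmetry for the converse. The only cosmetic difference is that you record the exact equality $\supp\psi'_{(x^{\prime i})}=A(x^{\prime i})^{-1}(\supp\psi_{(x^j(x^{\prime i}))})$, whereas the paper argues the needed inclusion directly from the pointwise relation \eqref{gamma2}.
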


\begin{proof}
Let $\varphi:U\to \R^{\dim \Mc}$  and $\varphi':U\to\R^{\dim\Mc}$ be maps defining the coordinate systems, respectively, $(x^i)$ and $(x^{\prime i})$. Denote $\big(x^i(x^{\prime j})\big)\equiv\varphi(\varphi^{\prime -1}(x^{\prime j}))$. 

Suppose that the $\Gamma_\R$-support of $\Psit$ around $x_0\in U$ is compact and slowly changing in the coordinate system $(x^i)$. Let $U_0$ and $K$ be the sets introduced in Definition \ref{slowly-df} for the system $(x^i)$. Choose a compact set $U_1\subset U_0$ of non-empty interior $\Int U_1$ such that $x_0\in \Int U_1$. The map
\begin{equation}
\varphi(U)\times \Gamma_\R \ni(x^i,\gamma_{ij})\mapsto \pder{x^i}{x^{\prime k}}\pder{x^j}{x^{\prime l}}\gamma_{ij}\in\Gamma_{\R}
\label{xg-g'}
\end{equation}
is continuous (here we treat the derivative $\partial{x^i}/\partial {x^{\prime k}}$ as a function of $(x^i)$). Therefore the set
\[
K':=\text{the image of $\varphi(U_1)\times K$ under the map \eqref{xg-g'}}
\]
is compact. 

Consider now the following function 
\[
\Gamma_{\R}\ni(\gamma'_{kl})\mapsto \psi'_{(x^{\prime j})}(\gamma'_{kl}):=\psi'(x^{\prime j},\gamma'_{kl})\in\C,
\]
where $\psi'(x^{\prime j},\gamma'_{kl})$ represents the half-density $\Psit$ in the coordinate system $(x^{\prime j})$ (see \eqref{psi-cont}). Let us fix a value $(x^{\prime j})\in \varphi'(U)$ of the coordinates and the corresponding value $\big(x^i(x^{\prime j})\big)\in \varphi(U)$. By virtue of Equation \eqref{psi'}, $(\gamma'_{kl})$ belongs to the support of $\psi'_{(x^{\prime j})}$ if and only if   
\begin{equation}
\gamma_{ij}=\pder{x^{\prime k}}{x^{i}}\pder{x^{\prime l}}{x^{j}} \gamma'_{kl}
\label{gamma2}
\end{equation}
belongs to the support of $\psi_{(x^i)}$ (here $\partial{x^{\prime k}}/\partial{x^{i}}$ is the value of the derivative at $(x^i)$). 

Suppose now that the value $(x^{\prime j})\in \varphi'(\Int U_1)$. Then the corresponding value $(x^{i})\in \varphi(\Int U_1)\subset \varphi(U_0)$. In this situation, if $(\gamma'_{kl})$ belongs to the support of $\psi'_{(x^{\prime j})}$, then $(\gamma_{ij})$ given by \eqref{gamma2} belongs to $\supp \psi_{(x^i)}\subset K$. Moreover, $(\gamma'_{kl})$ is the value of the map \eqref{xg-g'} at $(x^i,\gamma_{ij})\in\varphi(U_1)\times K$. Thus $(\gamma'_{kl})$ is an element of $K'$ by definition of the latter set. 

We thus see that for every value $(x^{\prime j})\in \varphi'(\Int U_1)$ the support of $\psi'_{(x^{\prime j})}$ is contained in the compact set $K'$. Therefore, if the $\Gamma_\R$-support of $\Psit$ around $x_0$ is compact and slowly changing in the coordinate system $(x^i)$, then it is the same in the coordinate system $(x^{\prime i})$. But these coordinate systems are arbitrary and can be swapped in the previous statement. Thus the lemma follows.
\end{proof}

Since now we will say that the $\Gamma_\R$-support around $x_0\in\Mc$ of a continuous Hilbert half-density $\Psit$, is compact and slowly changing if it is compact and slowly changing in every coordinate system defined on a neighborhood of $x_0$. Finally, we will say that the $\Gamma_\R$-support of $\Psit$ is compact and slowly changing if the $\Gamma_\R$-support is such around every point of $\Mc$.  

\begin{lm}
If $\Psit$ and $\Psit'$ are continuous Hilbert half-densities of compact and slowly changing $\Gamma_\R$-support, then any (finite) linear combination of them is a continuous Hilbert half-density of compact and slowly changing $\Gamma_\R$-support.   
\label{PP-lin-comb}
\end{lm}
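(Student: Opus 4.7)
The plan is to verify the two defining properties separately. By an easy induction it suffices to treat a binary combination $z\Psit + z'\Psit'$ with $z,z'\in\C$. Thanks to Lemmas \ref{lm-cont-coor} and \ref{slow-coor}, both properties are coordinate-independent, so in each case I only need to work in a single chart near an arbitrary point.

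For continuity I would fix a coordinate system $(x^i)$ on an open set $U\subset\Mc$ given by $\varphi:U\to\R^{\dim\Mc}$, and let $\psi$, $\psi'$ be the (unique) continuous coordinate representations of $\Psit$, $\Psit'$ guaranteed by hypothesis. Because the linear structure \eqref{linear-str} on Hilbert half-densities is defined pointwise and the evaluation $\wt\mapsto \wt(e)$ is linear on $\Ht_x$, for every $x\in U$ the function
\[
\Gamma_x\ni\gamma\mapsto z\,\Psi\big(x,(\partial_{x^k}),\gamma\big)+z'\,\Psi'\big(x,(\partial_{x^k}),\gamma\big)\in\C
\]
is a representative of $(z\Psit+z'\Psit')\big(x,(\partial_{x^k})\big)\in H_x$. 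Consequently $z\psi+z'\psi'$ is a coordinate representation of $z\Psit+z'\Psit'$ in $(x^i)$, and it is continuous as a linear combination of continuous maps. Hence $z\Psit+z'\Psit'$ is continuous in $(x^i)$; since $(x^i)$ was arbitrary, the half-density is continuous.

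For the support property I would fix $x_0\in\Mc$, a chart $(x^i)$ around $x_0$, and invoke Definition \ref{slowly-df} for $\Psit$ and $\Psit'$: let $U_0,U_0'\subset U$ be open neighborhoods of $x_0$ and $K,K'\subset\Gamma_\R$ compact sets such that $\supp\psi_{(x^i)}\subset K$ for $(x^i)\in\varphi(U_0)$ and $\supp\psi'_{(x^i)}\subset K'$ for $(x^i)\in\varphi(U_0')$. Put $U_0'':=U_0\cap U_0'$ (an open neighborhood of $x_0$) and $K'':=K\cup K'$ (compact). Using the standard inclusion $\supp(f+g)\subset\supp f\cup\supp g$ and the identification of the previous paragraph, one has for every $(x^i)\in\varphi(U_0'')$
\[
\supp\bigl(z\psi+z'\psi'\bigr)_{(x^i)}\subset \supp\psi_{(x^i)}\cup\supp\psi'_{(x^i)}\subset K\cup K'=K''.
\]
Thus the $\Gamma_\R$-support of $z\Psit+z'\Psit'$ around $x_0$ is compact and slowly changing in $(x^i)$, and by Lemma \ref{slow-coor} in every chart around $x_0$; since $x_0$ was arbitrary, the conclusion follows.

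I do not anticipate a real obstacle. The only points that require any care are (a) the observation that $z\psi+z'\psi'$ really is \emph{the} continuous coordinate representation of the linear combination, which follows from linearity of \eqref{psi-cont} in the vector-space argument together with the uniqueness remark after Lemma \ref{lm-psi-zero}, and (b) the elementary support inclusion used in the second step.
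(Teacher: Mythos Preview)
Your proposal is correct and follows essentially the same argument as the paper: reduce to a binary combination, show that $z\psi+z'\psi'$ is the continuous coordinate representation of $z\Psit+z'\Psit'$ in an arbitrary chart, and for the support condition intersect the two neighborhoods and take the union of the two compact sets. The paper phrases the first step via the reconstruction formula \eqref{P-from-p} rather than direct linearity of the evaluation, but this is only a cosmetic difference.
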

   
\begin{proof}
Suppose that $U$ is an open subset of $\Mc$ and that a map $\varphi:U\mapsto \R^{\dim\Mc}$ defines a coordinate system $(x^i)$. 

Consider now a linear combination 
\[
z\Psit+z'\Psit'\equiv \tilde{\Xi}, \quad z,z'\in \C
\] 
and suppose that $\psi$ and $\psi'$ are (continuous) coordinate representations of, respectively, $\Psit$ and $\Psit'$ in the system $(x^i)$. Then, for every $x\in U$, the function
\[
z\psi+z'\psi'\equiv \xi
\]     
substituted to the reconstruction formula \eqref{P-from-p} gives us an appropriate linear combination of functions on $\Gamma_x$ representing the equivalence class $z\Psit\big(x,(\partial_{x^i})\big)+z'\Psit'\big(x,(\partial_{x^i})\big)\in H_x$. $\xi$ is thus a {\em continuous} coordinate representation of $\tilde{\Xi}$ in the system $(x^i)$. Since $U$ is an arbitrary open subset of $\Mc$, then $\tilde{\Xi}$ is continuous.

Fix an arbitrary point $x_0\in U$. Let $K$  be a compact subset of $\Gamma_\R$ and  $U_0\subset U$ be an open neighborhood of $x_0$ such that for every value $(x^i)\in \varphi(U_0)$, the support of the function $\psi_{(x^i)}$ related to $\Psit$ via the formulas \eqref{psi_xi} and \eqref{psi-cont}, is contained in $K$. In the same way, let $K'$ be a compact subset of $\Gamma_\R$ and $U'_0\subset U$ be an open neighborhood of $x_0$ such that for every value $(x^i)\in \varphi(U'_0)$, the support of the function $\psi'_{(x^i)}$ related to $\Psit'$ via the formulas \eqref{psi_xi} and \eqref{psi-cont}, is contained in $K'$.

Obviously, if $\xi_{(x^i)}$ is related via \eqref{psi_xi} to $\xi$, then     
\[
\xi_{(x^i)}=z\psi_{(x^i)}+z'\psi'_{(x^i)}.
\] 
Therefore for every value $(x^i)\in \varphi(U_0\cap U'_0)$ the support of $\xi_{(x^i)}$ is contained in $K\cup K'$. Since $U_0\cap U'_0$ is open and $K\cup K'$ compact, the $\Gamma_\R$-support of $\tilde{\Xi}$ around $x_0$ is compact and slowly changing in the coordinate system $(x^i)$. But $x_0$ is an arbitrary point in $\Mc$, and $(x^i)$ an arbitrary local coordinate system. Therefore  the $\Gamma_\R$-support of $\tilde{\Xi}$ is compact and slowly changing. 
\end{proof}

\subsection{Pairing of Hilbert half-densities into scalar densities}

Consider now two Hilbert half-densities $\Psit$ and $\Psit'$ on $\Mc$. Then the map
\begin{equation}
\Mc\ni x\mapsto \scl{\Psit'}{\Psit}(x):=\scl{\Psit'(x)}{\Psit(x)}_x\in \tilde{\mathbf{C}}
\label{hden-pair}
\end{equation}
is a scalar density on $\Mc$. The lemma below ensures that if Hilbert half-densities satisfy the regularity conditions introduced in the previous section, then the resulting density is sufficiently regular for our purposes.

\begin{lm}
Suppose that Hilbert half-densities $\Psit$ and $\Psit'$ are continuous and that the $\Gamma_\R$-support of $\Psit$ is compact and slowly changing. Then the scalar density $\scl{\Psit'}{\Psit}$ is continuous.     
\label{PP-cont}
\end{lm}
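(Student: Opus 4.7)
The plan is to fix an arbitrary point $x_0 \in \Mc$ and an arbitrary coordinate system $(x^i)$ on an open neighborhood $U$ of $x_0$, defined by a map $\varphi: U \to \R^{\dim \Mc}$, and to show that the coordinate representation of the scalar density $\scl{\Psit'}{\Psit}$ in this system is continuous at $\varphi(x_0)$. Since the point and the coordinate system are arbitrary, this yields the continuity of $\scl{\Psit'}{\Psit}$ on all of $\Mc$.

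The first step is to write the coordinate representation explicitly. Evaluating the density product on the coordinate basis $(\partial_{x^k})$ of $T_x\Mc$ gives
\[
\scl{\Psit'}{\Psit}(x,(\partial_{x^k})) = \scal{\Psit'(x,(\partial_{x^k}))}{\Psit(x,(\partial_{x^k}))}_x.
\]
Since $\scal{\cdot}{\cdot}_x$ is the inner product on $L^2(\Gamma_x, d\mu_x)$, since $d\mu_x = c\,d\mu_{Qx}$ for some constant $c>0$ by \eqref{c-dmuQx}, and since the linear coordinate map $\chi_x:\Gamma_x \to \Gamma_\R$ induced by $(\partial_{x^k})$ transports $d\mu_{Qx}$ to $\Delta\, d\mu_L$ by \eqref{natme}, the right-hand side equals
\[
c \int_{\Gamma_\R} \overline{\psi'(x^i,\gamma_{ij})}\, \psi(x^i,\gamma_{ij})\, \Delta(\gamma_{ij})\, d\mu_L(\gamma_{ij}),
\]
where $\psi$ and $\psi'$ are the (unique) continuous coordinate representations of $\Psit$ and $\Psit'$ in $(x^i)$, and $x^i \equiv \varphi(x)$. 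This is precisely the coordinate representation \eqref{F-cont} of the scalar density $\scl{\Psit'}{\Psit}$ in the system $(x^i)$.

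The second step is to use the hypothesis that the $\Gamma_\R$-support of $\Psit$ around $x_0$ is compact and slowly changing in $(x^i)$: by Definition \ref{slowly-df}, there exist an open neighborhood $U_0 \subset U$ of $x_0$ and a compact set $K\subset\Gamma_\R$ such that $\supp\psi_{(x^i)}\subset K$ for every $(x^i)\in\varphi(U_0)$. Hence for every such $(x^i)$ the integrand vanishes outside $K$, so the integral above reduces to one over the fixed compact set $K$.

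The third step is a standard continuity-under-the-integral argument. Choose a compact neighborhood $U_1\subset U_0$ of $x_0$ with nonempty interior. The integrand
\[
F(x^i,\gamma_{ij}) := c\,\overline{\psi'(x^i,\gamma_{ij})}\, \psi(x^i,\gamma_{ij})\, \Delta(\gamma_{ij})
\]
is jointly continuous on $\varphi(U_1)\times K$ (as the product of the continuous functions $\psi$, $\psi'$ and the smooth function $\Delta$), and therefore uniformly continuous on the compact set $\varphi(U_1)\times K$. Standard estimates then give that $(x^i)\mapsto \int_K F(x^i,\gamma_{ij})\, d\mu_L(\gamma_{ij})$ is continuous on $\varphi(U_1)$, in particular at $\varphi(x_0)$. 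The main technical input is the slowly-changing hypothesis: it reduces what is a priori an integral over the noncompact manifold $\Gamma_\R$ to one over a single compact set uniformly in $x^i$, thereby bypassing the need for any decay control of $\psi'$ at infinity in $\Gamma_\R$; without it one would need a uniform integrable majorant of the integrand with respect to $x^i$.
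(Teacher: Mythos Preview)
Your proof is correct and follows essentially the same route as the paper's: express the coordinate representation of $\scl{\Psit'}{\Psit}$ as an integral over $\Gamma_\R$ with respect to $c\,\Delta\,d\mu_L$, use the slowly-changing hypothesis to confine the integration to a fixed compact $K$ uniformly in $(x^i)$, and then pass to continuity in the parameter. The only difference is cosmetic: the paper invokes the Lebesgue dominated convergence theorem (with the dominating function being the supremum of $|F|$ over $\varphi(U_1)\times K$ times the indicator of $K$), whereas you use uniform continuity on the compact $\varphi(U_1)\times K$ together with $\mu_L(K)<\infty$; both are standard and equivalent here.
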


\begin{proof}
Let us consider the scalar density $\scl{\Psit'}{\Psit}\equiv\Ft$ and a map $\varphi:U\to \R^{\dim\Mc}$ defining a coordinate system $(x^i)$ on $U\subset\Mc$. Using \eqref{hden-pair} and \eqref{den-prod} we obtain
\[
\Ft\big(x,(\partial_{x^i})\big)=\scl{\Psit'(x)}{\Psit(x)}_x(\partial_{x^i})=\bigscal{\Psit'\big(x,(\partial_{x^i})\big)}{\Psit\big(x,(\partial_{x^i})\big)}_x,
\]         
where $\scal{\cdot}{\cdot}_x$ is the inner product on the Hilbert space $H_x$. Consequently, 
\begin{equation}
\Ft\big(x,(\partial_{x^i})\big)=\int_{\Gamma_x}\overline{\Psit'\big(x,(\partial_{x^i})\big)}\Psit\big(x,(\partial_{x^i})\big)\,d\mu_x=c\int_{\Gamma_x}\overline{\Psit'\big(x,(\partial_{x^i})\big)}\Psit\big(x,(\partial_{x^i})\big)\,d\mu_{Qx}, 
\label{int}
\end{equation}
where in the second step we used the fact that every measure field \eqref{diff-inv} is of the form \eqref{c-dmuQx}. 

For further transformation of $\Ft\big(x,(\partial_{x^i})\big)$ we would like to use Equation \eqref{natme}. In order to do this we have to show that (for fixed $x$) the integrand in \eqref{int} is a continuous function of compact support. To this end let us note that if $\chi:\Gamma_x\to \Gamma_\R$ is a map \eqref{map-lin-coor} given by the basis $(\partial_{x^i})$ of $T_{x}\Mc$, and if $\varphi(x)=(x^i)$, then 
\begin{equation}
\chi^{-1\star}\Big[\overline{\Psit'\big(\varphi^{-1}(x^i),(\partial_{x^i})\big)}\Psit\big(\varphi^{-1}(x^i),(\partial_{x^i})\big)\Big]= \overline{{\psi'}\!_{(x^i)}}\psi_{(x^i)},  
\label{PPpp}
\end{equation}
where ${\psi'}\!_{(x^i)}$ and $\psi_{(x^i)}$ are continuous functions related to, respectively, $\Psit'$ and $\Psit$ via the formulas \eqref{psi_xi} and \eqref{psi-cont}. Thus the integrand in \eqref{int} is continuous.

Let us fix a point $x_0\in U$ and suppose that $U_0\subset U$ is an open neighborhood of $x_0$ introduced in Definition \ref{slowly-df} for the half-density $\Psit$. Then it follows from the assumptions imposed on $\Psit$ that {\em for every} $(x^i)\in\varphi(U_0)$, the support of $\overline{{\psi'}\!_{(x^i)}}\psi_{(x^i)}$ is contained in a {\em compact} set $K\subset \Gamma_\R$ and thereby the support is compact as well. This together with \eqref{PPpp} mean that, indeed, the integrand in \eqref{int} is of compact support. 

Since the integrand in \eqref{int} is a continuous compactly supported function on $\Gamma_x$, we can use \eqref{natme} to get
\[
\Ft\big(x,(\partial_{x^i})\big)=c\int_{\Gamma_\R}\chi^{-1\star}\Big[\overline{\Psit'\big(x,(\partial_{x^i})\big)}\Psit\big(x,(\partial_{x^i})\big)\Big]\,\Delta \,d\mu_L.
\]
Using \eqref{PPpp} once again we obtain
\begin{equation}
f(x^i)=\Ft\big(\varphi^{-1}(x^i),(\partial_{x^i})\big)=c\int_{\Gamma_\R} \overline{{\psi'}\!_{(x^i)}}\psi_{(x^i)} \,\Delta \,d\mu_L
\label{f-ppC}
\end{equation}
for every value $(x^i)\in \varphi(U_0)$.

Now, it follows from assumed continuity of $\Psit'$ and $\Psit$ that the function
\begin{equation}
\varphi(U_0)\times \Gamma_\R\ni (x^i,\gamma_{ij})\mapsto \overline{{\psi'}\!_{(x^i)}}(\gamma_{ij})\psi_{(x^i)}(\gamma_{ij})\Delta (\gamma_{ij})\in \C
\label{ppD}
\end{equation}
is continuous (note that the function $\Delta $ is continuous and independent of $(x^i)$, which follows from the properties of the natural metrics described in Section \ref{inv-dmu}). Therefore the function \eqref{ppD} becomes a {\em bounded} one once restricted to a {\em compact} set $\varphi(U_1)\times K$, where $U_1$ is a compact set of non-empty interior $\Int U_1\ni x_0$. 

Let then
\[
s\equiv\sup_{(x^i,\gamma_{ij})\in \varphi(U_1)\times K}\big|\overline{{\psi'}\!_{(x^i)}}(\gamma_{ij})\psi_{(x^i)}(\gamma_{ij})\Delta (\gamma_{ij})\big|
\]
and
\[
\Gamma_\R\ni (\gamma_{ij})\mapsto h(\gamma_{ij}):=
\begin{cases}
s & \text{if $(\gamma_{ij})\in K$, }\\
0 & \text{otherwise}
\end{cases}.
\]      
Since the set $K$ is compact and the Lebesgue measure $d\mu_L$ is regular \cite{cohn}, the function $h$ is integrable over $\Gamma_\R$ with respect to $d\mu_L$. Moreover, for every value $(x^i)\in\varphi(U_1)$, $|\overline{{\psi'}\!_{(x^i)}}\psi_{(x^i)} \,\Delta |\leq h$ (recall that the support of $\overline{{\psi'}\!_{(x^i)}}\psi_{(x^i)}$ is contained in $K$). On the other hand, $\overline{{\psi'}\!_{(x^i)}}\psi_{(x^i)}$ converges pointwise to $\overline{{\psi'}\!_{(x^i_0)}}\psi_{(x^i_0)}$ as $(x^i)\to (x^i_0)=\varphi(x_0)$.    

These three facts allows us to apply  the Lebesgue's dominated convergence theorem to conclude that $f$ given by \eqref{f-ppC}, is continuous at $(x^i_0)=\varphi(x_0)$. But $x_0$ is an arbitrary point, and $(x^i)$ an arbitrary local coordinate system. Thus $\scl{\Psit'}{\Psit}$ is a continuous scalar density on $\Mc$.       
\end{proof}

\subsection{The Hilbert space $\Hc_1$}

Let $\Hc_1^c$ be the set of all continuous Hilbert half-densities on $\Mc$ of compact $\Mc$-support and of compact and slowly changing $\Gamma_\R$-support. Any (finite) linear combination of elements of $\Hc^c_1$ is again a Hilbert half-density of compact $\Mc$-support. This fact and Lemma \ref{PP-lin-comb} guarantee that $\Hc_1^c$ is a complex vector space. By virtue of Lemma \ref{PP-cont} for any two elements $\Psit$ and $\Psit'$ of $\Hc_1^c$, the scalar density $\scl{\Psit'}{\Psit}$ on $\Mc$ is continuous. This density is also compactly supported and therefore it can be naturally integrated\footnote{Let us emphasize that the assumption that the $\Gamma_\R$-support of each element of $\Hc^c_1$ is slowly changing, is essential here---without it one can get a non-integrable scalar density as seen in the following example. In the case $\Mc=\R$ and the signature $(1,0)$ the set $\Gamma_\R$ is the set $\R_+$ of all positive real numbers. Let $x^1$ be the canonical coordinate on $\Mc=\R$ and $A:=\{\ (x^1,\gamma_{11})\in\R\times\Gamma_\R\ | \ x^1\geq 0,\ \gamma_{11}-1 \geq 0,\ 1-x^1\gamma_{11} \geq 0 \ \}$. Define a function $\psi$ on $\R\times\Gamma_\R$ as follows: $\psi$ is zero outside $A$ and $\psi(x^1,\gamma_{11}):=\sqrt{x^1}(\gamma_{11}-1)(1-x^1\gamma_{11})$ on $A$. It is clear that $\psi$ is continuous and for every $x^1\in\R$, $\supp\psi_{(x^1)}$ is compact. But for every open neighborhood $U$ of $x^1=0$, $\bigcup_{x^i\in U}\supp\psi_{(x^1)}=[1,\infty[$. Thus the Hilbert half-density $\Psit$ on $\Mc$ obtained from $\psi$ by means of the reconstruction formula \eqref{P-from-p}, is continuous and of compact $\Gamma_\R$-support everywhere on $\Mc$ (and of compact $\Mc$-support), but the $\Gamma_\R$-support is not slowly changing around $x^1=0$. The measure $\Delta d\mu_L$ on $\Gamma_\R$ is here of the form $(\gamma_{11})^{-1}d\gamma_{11}$ \cite{qs-metr}. Using this fact and \eqref{f-ppC} it is not difficult to realize  that the coordinate representation of $\scl{\Psit}{\Psit}$ in the coordinate $x^1$, diverges to infinity as $(x^1)^{-1}$, as $x^1$ goes to zero from the right. This means that the density $\scl{\Psit}{\Psit}$ is not continuous at the point $x^1=0$ and is also non-integrable over $\Mc$.} over $\Mc$ being a paracompact manifold. The following map
\begin{equation}
\Hc_1^c\times \Hc_1^c \ni (\Psit',\Psit)\mapsto \scal{\Psit'}{\Psit}:=\int_\Mc \scl{\Psit'}{\Psit}\in\C,
\label{H0-scal}
\end{equation}
where the integral at the r.h.s. is the integral of the scalar density $\scl{\Psit'}{\Psit}$, is an inner product on $\Hc_1^c$.

Indeed, it is clear that the map \eqref{H0-scal} is linear in the second argument and that it satisfies the Hermitian (or conjugate) symmetry condition. For every $\Psit\in\Hc^c_1$, the scalar density $\scl{\Psit}{\Psit}$ is continuous and non-negative i.e., for every $x\in \Mc$, $\scl{\Psit}{\Psit}(x)\geq 0$ (see the formula \eqref{w'>w}). Therefore $\scal{\Psit}{\Psit}\geq 0$. Suppose that $\Psit(x)\neq 0$ for a point $x\in\Mc$. Then, by continuity of $\scl{\Psit}{\Psit}$ its support contains a non-empty open set. Consequently, $\scal{\Psit}{\Psit}>0$ and \eqref{H0-scal} is positive definite. 

The completion $\Hc_1$ of $\Hc_1^c$ in the norm induced by the inner product \eqref{H0-scal}, is a Hilbert space built over the set $\Qc(\Mc)$. 

{Let us end this section by a remark concerning the standard Hilbert space $\Hc_{QM}$ of the ordinary quantum mechanics. This space is usually defined as $L^2(\R^3,d\mu_L)$, where $d\mu_L$ is the Lebesgue measure. Note, however, that wave functions in $\Hc_{QM}$ can be viewed as Hilbert half-densities of special sort defined on $\R^3$. Indeed, the set $\C$ of complex numbers equipped with the map
\[
\C^2\ni(z,z')\mapsto \bar{z}z'\in\C
\]       
is a one-dimensional complex Hilbert space. If $\Ct^{1/2}_x$, ($x\in\R^3$), is the set of all half-densities over $T_x\R^3$ valued in the Hilbert space $\C$, then a section of $\bigcup_{x\in\R^3}\Ct^{1/2}_x$ is a sort of a Hilbert half-density. It is clear that each wave function in $\Hc_{QM}$, if expressed in a Cartesian coordinate system on $\R^3$, can be understood as a coordinate representation of such a half-density. It is also not difficult to convince oneself that the inner product of two wave functions in $\Hc_{QM}$, can be expressed in terms of \emi pairing of corresponding Hilbert half-densities to a scalar density and \emii integrating of the resulting scalar density over $\R^3$. Thus, counterintuitively, the Hilbert spaces $\Hc_1$ and $\Hc_{QM}$ are fairly similar.}

\subsection{Uniqueness of $\Hc_1$}

Given manifold $\Mc$ and signature $(p,p')$, the only choice we have to make in order to obtain the Hilbert space $\Hc_1$, is the choice of a diffeomorphism invariant field \eqref{diff-inv} of invariant measures. However, since all such fields are unique up to a positive multiplicative constant (see Equation \eqref{mu-cmu}), the freedom to choose the measure field is actually not relevant. 

Indeed, if $x\mapsto d\mu_{x}$ and $x\mapsto d\check{\mu}_{x}$ are two such measure fields on $\Mc$, and $\Hc_1$ and $\check{\Hc}_1$ the resulting Hilbert spaces, then it follows from \eqref{mu-cmu} that
\begin{equation}
{\Hc_1}\ni\Psit\mapsto\frac{\Psit}{\sqrt{c}}\in\check{\Hc}_1 
\label{U-c}
\end{equation}
is a unitary map.

Thus all Hilbert spaces constructed according to the prescription presented in Section \ref{constr-H} are isomorphic. Moreover, there exists a distinguished or natural isomorphism \eqref{U-c} between each pair of such Hilbert spaces.

We conclude then that the Hilbert space $\Hc_1$ is unique up to natural isomorphisms.

\section{Action of diffeomorphisms on the Hilbert space $\Hc_1$}

\subsection{Action of diffeomorphisms on scalar densities}

Let $\Ft$ be a scalar density on $\Mc$, and $\theta:\Mc\to\Mc$ a diffeomorphism. The diffeomorphism $\theta$ acts on the density $\Ft$ by means of the following pull-back:
\begin{equation}
(\theta^*\Ft)(x,e_x):=\Ft\big(\theta(x),\theta^t e_x\big),
\label{chi-F}
\end{equation}
where $e_x\equiv(e_{xi})_{i=1,\ldots,\dim\Mc}$ is a basis of $T_x\Mc$, $\theta^t:T_x\Mc\to T_{\theta(x)}\Mc$ is the tangent map generated by $\theta$ and 
\[
\theta^t e_x\equiv\big(\theta^t(e_{xi})\big)
\]
is a basis of $T_{\theta(x)}\Mc$.

The pull-back $\theta^*\Ft$ is again a scalar density on $\Mc$. To see this let us calculate
\[
(\theta^*\Ft)(x,\Lambda e_x)=\Ft\big(\theta(x),\theta^t\Lambda e_x\big),
\]
where $\Lambda\equiv(\Lambda^j{}_i)$ is any non-singular matrix. If $e_x=(e_{xi})$, then $\Lambda e_x=(\Lambda^j{}_ie_{xj})$. Hence by virtue of linearity of $\theta^t$
\begin{equation}
\theta^t\Lambda e_x =\big(\theta^t(\Lambda^j{}_ie_{xj})\big)=\big(\Lambda^j{}_i\theta^t(e_{xj})\big)=\Lambda\big(\theta^t(e_{xj})\big)=\Lambda\theta^te_x.
\label{chi'-L}
\end{equation}
Consequently,
\[
(\theta^*\Ft)(x,\Lambda e_x)=\Ft(\theta(x),\Lambda \theta^te_x)=|\det\Lambda|\Ft(\theta(x),\theta^t e_x)=|\det\Lambda|(\theta^*\Ft)(x,e_x).
\]      

Let $\Ft$ be a scalar density integrable over $\Mc$. Then for every diffeomorphism $\theta$ of $\Mc$ \cite{heat}\footnote{In \cite{heat} scalar densities are defined in a different way than in the present paper, but it is not difficult to realize that both definitions are equivalent.}
\begin{equation}
\int_\Mc \theta^*\Ft=\int_\Mc \Ft.
\label{int-chi-F}
\end{equation}

\subsection{Action of diffeomorphisms on $\Hc_1^c$}

Let us consider a Hilbert half-density $\Psit$ being an element of $\Hc_1^c$. If $\theta$ is a diffeomorphism of $\Mc$, $\theta^t:T_x\Mc\to T_{\theta(x)}\Mc$ the corresponding tangent map, and $e_x$ a basis of $T_x\Mc$, then 
\begin{equation}
\Psit(\theta(x),\theta^t e_x)
\label{Psi-chi}
\end{equation}
is an element of the Hilbert space $H_{\theta(x)}$, that is, an equivalence class of functions on $\Gamma_{\theta(x)}$.
Since $\Psit\in\Hc_1^c$, the equivalence class \eqref{Psi-chi} can be represented by a unique continuous compactly supported function on $\Gamma_{\theta(x)}$ (see Lemma \ref{lm-psi-zero}).
 
If $(\theta^{-1})^t:T_{\theta(x)}\Mc\to T_x\Mc$ is the inverse of $\theta^t$, then the pull-back $(\theta^{-1})^{t *}:\Gamma_x\to\Gamma_{\theta(x)}$ is a diffeomorphism. Thus $(\theta^{-1})^{t *}$ can be used to pull-back functions on $\Gamma_{\theta(x)}$ to ones on $\Gamma_x$. 

To pull-back the equivalence class \eqref{Psi-chi} of functions on $\Gamma_{\theta(x)}$ to an equivalence class of functions on $\Gamma_x$ being an element of $H_x$, we will proceed as follows. First we will pull-back by means of $(\theta^{-1})^{t *}$ the unique continuous representative of $\eqref{Psi-chi}$, obtaining thereby  a continuous compactly supported function on $\Gamma_x$. Then we will find an element of $H_x$ defined by this resulting function, denote it by\footnote{The standard symbol for the pull-back of a metric $q$ under a diffeomorphism $\theta$ is $\theta^*q$. To be consistent with the standard notation we should use in the formula \eqref{chi'**Psi} the simpler symbol $(\theta^{-1})^{ *\star}$ instead of $(\theta^{-1})^{t *\star}$. However, here we work with a multi-level construction: the baseline level is the manifold $\Mc$, and the next levels are in turn: the tangent space $T_x\Mc$, the space $\Gamma_x$ of scalar products on the tangent space and finally the functions on $\Gamma_x$ constituting $H_x$. In the symbol $(\theta^{-1})^{t *\star}$ and the like, each superscript $t$, $*$ and $\star$ corresponds to a level above the baseline one and makes it easier to keep track where we are.} 
\begin{equation}
(\theta^{-1})^{t *\star}\,\Psit(\theta(x),\theta^t e_x)
\label{chi'**Psi}
\end{equation}
and treat it as the desired pull-back\footnote{Applying this procedure we avoid to prove that if $\int_{\Gamma_{\theta(x)}}\bar{\Psi}\Psi\,d\mu_{\theta(x)}=0$, then $\int_{\Gamma_{x}}(\theta^{-1})^{t *\star}(\bar{\Psi}\Psi)\,d\mu_{x}=0$.} of \eqref{Psi-chi}.

Using \eqref{chi'-L} and linearity of the pull-back $(\theta^{-1})^{t *\star}$ one easily shows that
\[
(\theta^{-1})^{t *\star}\,\Psit(\theta(x),\theta^t \Lambda e_x)=|\det\Lambda|^{1/2}(\theta^{-1})^{t *\star}\,\Psit(\theta(x),\theta^t e_x),
\] 
which means that the following map
\[
e_x\to (\theta^{-1})^{t *\star}\,\Psit(\theta(x),\theta^t e_x)
\]
defined on the set of all bases of $T_x\Mc$ and valued in $H_x$ is a half-density.  

We thus see that on the manifold $\Mc$ there exists a Hilbert half-density $\theta^*\Psit$, such that for every $x\in\Mc$ and for every basis $e_x$ of $T_x\Mc$,      
\begin{equation}
(\theta^*\Psit)(x,e_x)=(\theta^{-1})^{t *\star}\,\Psit(\theta(x),\theta^t e_x).
\label{chi*-P}
\end{equation}
We will say that $\theta^*\Psit$ is {\em pull-back of $\Psit$ under the diffeomorphisms $\theta$}.  

\begin{lm}
The Hilbert half-density $\theta^*\Psit$ is an element of $\Hc_1^c$.   
\label{pull-H^c}
\end{lm}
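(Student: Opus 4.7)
The plan is to verify the three defining properties of $\Hc_1^c$ for $\theta^*\Psit$: continuity, compact $\Mc$-support, and compact and slowly changing $\Gamma_\R$-support. The $\Mc$-support part is immediate: the defining formula \eqref{chi*-P} shows $(\theta^*\Psit)(x)=0$ iff $\Psit(\theta(x))=0$, so the $\Mc$-support of $\theta^*\Psit$ equals $\theta^{-1}$ of the $\Mc$-support of $\Psit$, which is compact because $\theta^{-1}$ is a homeomorphism and $\Psit\in\Hc_1^c$.

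The central device for the remaining two properties is a pair of ``matched'' charts adapted to $\theta$. Given any $x_0\in\Mc$, I would choose a chart $\varphi:U\to\R^{\dim\Mc}$ around $x_0$ with coordinates $(x^i)$, and define coordinates $(y^j)$ on $\theta(U)$ via the chart $\varphi\circ\theta^{-1}$. A direct computation then gives, for every $x\in U$, the tangent-map identity $\theta'(\partial_{x^i}|_x)=\partial_{y^i}|_{\theta(x)}$, and the numerical coordinate values are intertwined, $(y^j)(\theta(x))=(x^j)(x)$. Consequently $\theta^{-1\prime*}:\Gamma_x\to\Gamma_{\theta(x)}$ sends a scalar product with components $\gamma_{ij}$ in the $x$-basis to the scalar product with the same components $\gamma_{ij}$ in the $y$-basis, i.e. $\gamma_{ij}\,dx^i\ot dx^j\mapsto \gamma_{ij}\,dy^i\ot dy^j$.

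Combining this with the defining formula \eqref{chi*-P} and the reconstruction formula \eqref{P-from-p} applied to $\Psit$ in the chart $(y^j)$, I would derive the key coordinate identity
\[
(\theta^*\psi)(x^i,\gamma_{ij})=\psi(x^i,\gamma_{ij}),
\]
where $\psi$ is the (continuous) coordinate representation of $\Psit$ in $(y^j)$ on $\theta(U)$, and $\theta^*\psi$ is the coordinate representation of $\theta^*\Psit$ in $(x^i)$ on $U$. In plain terms, as functions on $\varphi(U)\times\Gamma_\R$ the two representations literally coincide. This yields continuity of $\theta^*\Psit$ in the chart $(x^i)$; by Lemma \ref{lm-cont-coor} continuity then holds in every local chart around $x_0$, and varying $x_0$ finishes continuity on $\Mc$.

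The very same identity transfers the ``compact and slowly changing'' condition on $\Gamma_\R$-supports. If $V_0\subset\theta(U)$ and $K\subset\Gamma_\R$ realise Definition \ref{slowly-df} for $\psi$ around $\theta(x_0)$ in the chart $(y^j)$, then $\theta^{-1}(V_0)\subset U$ together with the same $K$ realise it for $\theta^*\psi$ around $x_0$ in the chart $(x^i)$; Lemma \ref{slow-coor} then promotes the property to every chart, and varying $x_0$ over $\Mc$ concludes. The only non-routine step is the coordinate identity $\theta^*\psi=\psi$, which reduces entirely to the matched-chart property of $\theta$ and the resulting triviality of $\theta^{-1\prime*}$ on component arrays; I do not anticipate any genuine obstacle beyond that computation.
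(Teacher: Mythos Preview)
Your proposal is correct and follows essentially the same route as the paper: both use a pair of $\theta$-matched charts to reduce the pull-back to the identity on coordinate representations, obtaining $(\theta^*\psi)=\psi$ as functions on $\varphi(U)\times\Gamma_\R$, and then read off continuity and the compact/slowly-changing $\Gamma_\R$-support directly from this equality. The only cosmetic difference is orientation---the paper fixes a chart on the target side and pulls it back to $\theta^{-1}(U)$, whereas you fix it on the source side and push forward to $\theta(U)$---together with your explicit appeal to Lemmas~\ref{lm-cont-coor} and~\ref{slow-coor}, which the paper sidesteps by noting that the matched chart is already arbitrary.
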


\begin{proof}
We have to show that $\theta^*\Psit$ is \emi of compact $\Mc$-support, \emii continuous and \emiii of compact and slowly changing $\Gamma_\R$-support. Let us recall that defining $\theta^*\Psit$ we assumed that $\Psit\in\Hc_1^c$.

Regarding \emi: if $\supp \Psit$ is the $\Mc$-support of $\Psit$, then $\theta^{-1}(\supp \Psit)$ is the $\Mc$-support of $\theta^*\Psit$. Since $\supp \Psit$ is compact and $\theta^{-1}$ a continuous map,  $\theta^{-1}(\supp \Psit)$ is also compact.

Regarding \emii: suppose that $U$ is an open subset of $\Mc$ and that a map $\varphi:U\mapsto \R^{\dim\Mc}$ defines a coordinate system $(x^i)$ on $U$. Then the diffeomorphism $\theta$ can be used to pull-back $(x^i)$ to a coordinate system  $(\bar{x}^i)$ on $\theta^{-1}(U)$ defined by the map $\varphi\circ\theta:\theta^{-1}(U)\to\R^{\dim\Mc}$. 

Now let us find a relation between the (continuous) coordinate representation $\psi$ of $\Psit$ in the system $(x^i)$ (see \eqref{psi-cont}), and a coordinate representation $\theta^*\psi$ of $\theta^*\Psit$ defined on the set $(\varphi\circ\theta)(\theta^{-1}(U))\times\Gamma_\R$ by the system $(\bar{x}^i)$.       

Let $e_x$ be a basis of $T_x\Mc$. It follows from \eqref{chi*-P} and the procedure, which defines \eqref{chi'**Psi}, that $(\theta^*\Psit)(x,e_x)\in H_x$ contains a {continuous} representative such that for every $\gamma_x\in\Gamma_x$ its value reads
\begin{equation}
(\theta^*\Psit)(x,e_x,\gamma_x)=\Psit\big(\theta(x),\theta^t e_x, (\theta^{-1})^{t *}\gamma_x\big),
\label{chiPP}
\end{equation}
where the number at the r.h.s. is a value of the {\em continuous} representative of $\Psit(\theta(x),\theta^t e_x)$. 

Assume now that $x\in \theta^{-1}(U)$. Then $x=(\theta^{-1}\circ\varphi^{-1})(\bar{x}^i)$ for some value $(\bar{x}^i)\in\varphi(U)$ and if the basis $e_x=(\partial_{\bar{x}^i})$, then $\theta^t e_x=(\partial_{x^i})$. Moreover, 
\[
\big((\theta^{-1})^{t *}\gamma_x\big)(\partial_{x^i},\partial_{x^j})=\gamma_x\big((\theta^{-1})^t \partial_{x^i},(\theta^{-1})^{t}\partial_{x^j}\big)=\gamma_x(\partial_{\bar{x}^i},\partial_{\bar{x}^j})
\]
and, consequently, if
\[
\gamma_x=\bar{\gamma}_{ij}\,d\bar{x}^i\ot d\bar{x}^j,
\]
then
\[
(\theta^{-1})^{t *}\gamma_x=\bar{\gamma}_{ij}\,d{x}^i\ot d{x}^j.
\]

Using all these results we can transform \eqref{chiPP} obtaining
\[
(\theta^*\Psit)\big((\theta^{-1}\circ\varphi^{-1})(\bar{x}^i),(\partial_{\bar{x}^i}),\bar{\gamma}_{ij}\,d\bar{x}^i\ot d\bar{x}^j\big)=\Psit\big(\varphi^{-1}(\bar{x}^i),(\partial_{x^i}), \bar{\gamma}_{ij}\,d{x}^i\ot d{x}^j\big).
\]
This equality together with \eqref{psi-cont} mean that on $\varphi(U)\times \Gamma_\R=(\varphi\circ\theta)(\theta^{-1}(U))\times \Gamma_\R$ 
\begin{equation}
(\theta^*\psi)(\bar{x}^i,\bar{\gamma}_{ij})=\psi(\bar{x}^i,\bar{\gamma}_{ij}).
\label{chip-p}
\end{equation}
Note that $\psi$ at the r.h.s. of the equation above, is the {continuous} coordinate representation of $\Psit$ in the system $(x^i)$---to be continuous, for every value $(\bar{x}^i)\in\varphi(U)$ the representation must come  from the unique continuous representative of $\Psit(\varphi^{-1}(\bar{x}^i),(\partial_{x^i}))$ and, indeed, $\psi$ on $\varphi(U)\times\Gamma_\R$ does come from these representatives (see the remark just below Equation \eqref{chiPP}).

Now an immediate implication of Equation \eqref{chip-p} is that the coordinate representation  $\theta^*\psi$ of $\theta^*\Psit$ in the system $(\bar{x}^i)$ is continuous---its continuity follows obviously from continuity of $\psi$. This is sufficient for $\theta^*\Psit$ to be continuous, since $(\bar{x}^i)$ is an arbitrary local coordinate system on $\Mc$.

Regarding \emiii: let us consider now a point $\theta(x_0)\in U$. Let $U_0\subset U$ be an open neighborhood of $\theta(x_0)$ and $K$ a compact subset of $\Gamma_\R$ such that for every value $(\bar{x}^i)\in \varphi(U_0)$, the support of the function $\psi_{(\bar{x}^i)}$ related to $\psi$  via the formula \eqref{psi_xi}, is contained in $K$---the existence of $U_0$ and $K$ follows from the assumption, that $\Psit$ is an element of $\Hc_1^c$. 

By virtue of \eqref{chip-p}, for every value $(\bar{x}^i)\in (\varphi\circ\theta)(\theta^{-1}(U_0))$, the support of the function $(\theta^*\psi)_{(\bar{x}^i)}$ related to $\theta^*\psi$  via the formula \eqref{psi_xi}, is contained in $K$. This means that the $\Gamma_\R$-support of the pull-back $\theta^*\Psit$ around $x_0$ is compact and slowly changing in the coordinate system $(\bar{x}^i)$. But $x_0$ is an arbitrary point of $\Mc$ and $(\bar{x}^i)$ an arbitrary local coordinate system on the manifold. Therefore $\theta^*\Psit$ is a Hilbert half-density on $\Mc$ of compact and slowly changing $\Gamma_\R$-support.
\end{proof}

\subsection{Unitary representation of diffeomorphisms on $\Hc_1$}

\begin{lm}
Suppose that $\Psit,\Psit'\in\Hc_1^c$ and $\theta$ is a diffeomorphism of $\Mc$. Then the pull-back $\theta^*$ preserves the inner product \eqref{H0-scal} on $\Hc_1^c$:  
\[
\scal{\theta^*\Psit'}{\theta^*\Psit}=\scal{\Psit'}{\Psit}.
\]    
\label{in-prod-pres}
\end{lm}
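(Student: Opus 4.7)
The plan is to reduce the integral-level identity to a pointwise identity of scalar densities, and then invoke the diffeomorphism invariance of integration of scalar densities, namely Equation \eqref{int-chi-F}. Concretely, I will show that for every $x\in\Mc$ and every basis $e_x$ of $T_x\Mc$,
\[
\scl{\theta^*\Psit'}{\theta^*\Psit}(x,e_x)=\big(\theta^*\scl{\Psit'}{\Psit}\big)(x,e_x),
\]
i.e.\ the density pairing commutes with pull-back under $\theta$. Once this is established, Lemma \ref{pull-H^c} guarantees that $\theta^*\Psit,\theta^*\Psit'\in\Hc_1^c$ so that $\scl{\theta^*\Psit'}{\theta^*\Psit}$ is a continuous, compactly supported scalar density; then \eqref{int-chi-F} immediately gives
\[
\scal{\theta^*\Psit'}{\theta^*\Psit}=\int_\Mc\theta^*\scl{\Psit'}{\Psit}=\int_\Mc\scl{\Psit'}{\Psit}=\scal{\Psit'}{\Psit}.
\]

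To prove the pointwise identity, unfold definitions. By \eqref{chi*-P} and the definition \eqref{den-prod} of the density product,
\[
\scl{\theta^*\Psit'}{\theta^*\Psit}(x,e_x)=\bigscal{\theta^{-1\prime *\star}\Psit'(\theta(x),\theta'e_x)}{\theta^{-1\prime *\star}\Psit(\theta(x),\theta'e_x)}_x,
\]
which, using the inner product on $H_x=L^2(\Gamma_x,d\mu_x)$ and writing $\alpha:=\theta^{-1\prime *}:\Gamma_x\to\Gamma_{\theta(x)}$, equals
\[
\int_{\Gamma_x}\alpha^\star\!\left(\overline{\Psit'(\theta(x),\theta'e_x,\cdot)}\,\Psit(\theta(x),\theta'e_x,\cdot)\right)d\mu_x.
\]
By the definition \eqref{*dmu} of the push-forward measure, this last expression coincides with
\[
\int_{\Gamma_{\theta(x)}}\overline{\Psit'(\theta(x),\theta'e_x,\cdot)}\,\Psit(\theta(x),\theta'e_x,\cdot)\,(\alpha_\star d\mu_x).
\]

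The key step is then to recognize that $\alpha_\star d\mu_x=d\mu_{\theta(x)}$. This is precisely the diffeomorphism invariance \eqref{l*xx} of the measure field: taking the linear isomorphism $l_{x,\theta(x)}:=\theta^{-1\prime}:T_{\theta(x)}\Mc\to T_x\Mc$, its induced pull-back $l^*_{x,\theta(x)}:\Gamma_x\to\Gamma_{\theta(x)}$ is exactly $\alpha$, so \eqref{l*xx} yields $d\mu_{\theta(x)}=\alpha_\star d\mu_x$. With this substitution, the integral collapses to
\[
\scal{\Psit'(\theta(x),\theta'e_x)}{\Psit(\theta(x),\theta'e_x)}_{\theta(x)}=\scl{\Psit'}{\Psit}(\theta(x),\theta'e_x)=\big(\theta^*\scl{\Psit'}{\Psit}\big)(x,e_x),
\]
the last equality being the definition \eqref{chi-F} of pull-back of a scalar density. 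This establishes the desired pointwise identity.

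The only substantive point is the invocation of $\alpha_\star d\mu_x=d\mu_{\theta(x)}$; everything else is mere unfolding of definitions. No obstacle of a technical nature is expected, since continuity and compact support of the integrand (needed so that both sides of the chain of equalities make sense as Lebesgue integrals over $\Gamma_x$ and $\Gamma_{\theta(x)}$) are already guaranteed by $\Psit,\Psit'\in\Hc_1^c$ together with Lemma \ref{PP-cont}.
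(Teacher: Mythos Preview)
Your proof is correct and follows essentially the same route as the paper: establish the pointwise identity $\scl{\theta^*\Psit'}{\theta^*\Psit}=\theta^*\scl{\Psit'}{\Psit}$ by unfolding \eqref{den-prod} and \eqref{chi*-P}, applying \eqref{*dmu}, and invoking the diffeomorphism invariance \eqref{l*xx} of the measure field, then conclude via \eqref{int-chi-F}. The only cosmetic difference is that you name the map $\alpha=\theta^{-1\prime *}$ and explicitly flag Lemma~\ref{pull-H^c} and Lemma~\ref{PP-cont} for the integrability justifications, whereas the paper leaves those references implicit.
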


\begin{proof}
Consider first the scalar density $\scl{\theta^*\Psit'}{\theta^*\Psit}$ (see \eqref{hden-pair} for the definition). If $e_x$ is a basis of $T_x\Mc$, then
\begin{multline*}
\scl{\theta^*\Psit'}{\theta^*\Psit}(x,e_x)=
\scal{(\theta^*\Psit')(x,e_x)}{(\theta^*\Psit)(x,e_x)}_x=\\=\big\langle\,(\theta^{-1})^{t *\star}\,\Psit'(\theta(x),\theta^t e_x)\,\big|\,(\theta^{-1})^{t *\star}\,\Psit(\theta(x),\theta^t e_x)\,\big\rangle_x=\\=\int_{\Gamma_x}(\theta^{-1})^{t *\star}\big(\overline{\Psit'(\theta(x),\theta^t e_x)}\Psit(\theta(x),\theta^t e_x)\big)\,d\mu_x=\\=\int_{\Gamma_{\theta(x)}}\big(\overline{\Psit'(\theta(x),\theta^t e_x)}\Psit(\theta(x),\theta^t e_x)\big)\,\big((\theta^{-1})^{t *}\big)_\star d\mu_x=\\=\int_{\Gamma_{\theta(x)}}\big(\overline{\Psit'(\theta(x),\theta^t e_x)}\Psit(\theta(x),\theta^t e_x)\big)\,d\mu_{\theta(x)}
\end{multline*}
---here we used in turn: in the first step the definition \eqref{den-prod}, in the second step Equation \eqref{chi*-P}, in the third step we chosen the continuous (compactly supported) representatives of $\Psit'(\theta(x),\theta^t (e_x))$ and $\Psit(\theta(x),\theta^t (e_x))$, in the forth step we applied \eqref{*dmu}, finally in the last step we used the diffeomorphism invariance of the measure field $x\mapsto d\mu_x$ (see Equation \eqref{l*xx}). Thus
\[
\scl{\theta^*\Psit'}{\theta^*\Psit}(x,e_x)=\scal{\Psit'(\theta(x),\theta^t e_x)}{\Psit(\theta(x),\theta^t e_x)}_{\theta(x)}=\scl{\Psit'}{\Psit}(\theta(x),\theta^t e_x).
\]
Comparing this with \eqref{chi-F} we see that
\[
\scl{\theta^*\Psit'}{\theta^*\Psit}=\theta^*\scl{\Psit'}{\Psit}.
\]
By virtue of this result and Equation \eqref{int-chi-F} 
\[
\scal{\theta^*\Psit'}{\theta^*\Psit}=\int_\Mc \scl{\theta^*\Psit'}{\theta^*\Psit}=\int_\Mc\theta^*\scl{\Psit'}{\Psit}=\int_\Mc\scl{\Psit'}{\Psit}=\scal{\Psit'}{\Psit}.
\]
\end{proof}

Let us define an operator on $\Hc_1^c$:
\[
\Hc_1^c\ni \Psit\mapsto u_\theta\Psit:=(\theta^{-1})^*\Psit\in\Hc_1^c.
\] 
It follows from \eqref{chiPP} that $u_\theta$ is linear. Manifestly, for every $\Psit\in\Hc^c_1$  
\[
\Psi=u_\theta(\theta^*\Psi), 
\]
which means that $u_\theta$ is surjective. Lemma \ref{in-prod-pres} guarantees that $u_\theta$ preserves the inner product on $\Hc_1^c$ being a dense linear subspace of $\Hc_1$. Taking into account  all these facts, we see that the operator $u_\theta$ can be uniquely extended to a unitary operator $U_1(\theta)$ on $\Hc_1$.

It is not difficult to check that for two diffeomorphisms $\theta_1$ and $\theta_2$,
\[
U_1({\theta_1})\circ U_1({\theta_2})=U_1({\theta_1\circ\theta_2}).
\]          
Thus
\begin{equation}
\theta\mapsto U_1(\theta)
\label{diff-repr-U1}
\end{equation}
is a unitary representation of the group $\Diff(\Mc)$ of all diffeomorphisms of $\Mc$ on the Hilbert space $\Hc_1$.

\section{Construction of the Hilbert space $\Hf$}

\subsection{Motivation}

Suppose that the set $\Qc(\Mc)$ is non-empty. Then the following surjective \cite{qs-metr} map
\begin{equation}
\Qc(\Mc)\ni q\mapsto \kappa_x(q):=q_x\in\Gamma_x,
\label{dof}
\end{equation}
where $q_x$ is the value of the metric $q$ at $x\in\Mc$, can be treated as a degree of freedom (d.o.f.) on $\Qc(\Mc)$. Consequently, the Hilbert space $H_x$ given by \eqref{Hx} can be treated as a quantum counterpart of $\kappa_x$.

Evidently, for each pair $x\neq x'$, $\kappa_x$ and $\kappa_{x'}$ are independent d.o.f.. It seems therefore, that a Hilbert space being a quantum counterpart of the configuration space $\Qc(\Mc)$ should contain {\em tensor products} of the Hilbert spaces $\{H_x\}$. 

It is easy to realize that the structure of $\Hc_1$ does not meet this expectation. Indeed, the inner product \eqref{H0-scal} is an integral, that is, an ``uncountable sum'' of values of inner products on the Hilbert spaces $\{H_x\}$. Therefore $\Hc_1$ is more like a direct integral of Hilbert spaces \cite{g-hilb}:
\[
\int^\oplus_\Mc H_x.
\]  
This fact suggests that $\Hc_1$ (in the case of signature $(3,0)$) is not well suited for quantization of the ADM formalism. Fortunately, it is relatively easy to find a way around this problem.

Namely, consider the set $\Mc_N$ of all $N$-element subsets of $\Mc$:
\begin{equation}
\Mc_N:=\big\{ \ \{x_1,\ldots,x_N\}\equiv \{x_K\}\subset \Mc \ | \ \text{$x_I\neq x_J$ for $I\neq J$} \ \big\}.
\label{N_N-df}
\end{equation}
It is possible to define on $\Mc_N$ a differential structure in such a way that the set becomes a smooth paracompact manifold locally diffeomorphic to $\Mc^N$---for details see Appendix \ref{N_N}. We will associate with each point $y=\{x_K\}$ of $\Mc_N$ a Hilbert space $H^\ot_{y}$ naturally isomorphic to $H_{x_1}\ot\ldots \ot H_{x_N}$ (for every ordering of the factors in the tensor product),  
\begin{equation}
H^\ot_{y}\cong H_{x_1}\ot\ldots \ot H_{x_N},
\label{H^ot_y}
\end{equation}
and then construct a Hilbert space $\Hc_N$ using half-densities on $\Mc_N$ valued in the spaces $\{H^\ot_y\}$.

Next, we will merge all the spaces $\{\Hc_N\}$ into the desired Hilbert space $\Hf$. To do this we will follow some feature of states in the kinematical Hilbert space $H_{LQG}$ of Loop Quantum Gravity (see e.g. \cite{rev}): if a spin-network state $\Psi\in H_{LQG}$ depends non-trivially on a classical d.o.f., a spin-network state $\Psi'\in H_{LQG}$ is independent of, then $\Psi$ and $\Psi'$ are orthogonal. Note now, that if $N<N'$ and $H_y^\otimes$ and $H_{y'}^\otimes$ are the Hilbert spaces \eqref{H^ot_y} associated with, respectively, $y\in\Mc_N$ and $y'\in \Mc_{N'}$, then wave functions in $H^\otimes_{y'}$ depend on certain d.o.f. $\kappa_x$, wave functions in $H^\otimes_{y}$ are independent of. Therefore we would like $H_y^\otimes$ and $H_{y'}^\otimes$  to be orthogonal as building blocks of $\Hf$. To achieve this goal, we will define $\Hf$ as the orthogonal sum \begin{equation}
\mathfrak{H}:=\bigoplus_{N=1}^\infty \Hc_N.
\label{Hf-df}
\end{equation}
Note also that the structure of this $\Hf$ will resemble to a certain degree the structure of Fock spaces.

Obviously, the Hilbert space $\Hf$ will contain all finite tensor products of the Hilbert spaces $\{H_x\}$. Therefore the space $\mathfrak{H}$ constructed in the case of signature $(3,0)$, seems to be better suited for quantization of the ADM formalism than $\Hc_1$ alone.

\subsection{Remarks on the definition \eqref{Hf-df} of $\Hf$ }

A rigorous construction of $\Hc_N$, we are going to present below, is fairly long and technically involved (this concerns in particular the construction of the smooth atlas on the set $\Mc_N$). Let us note that an other method to take into account the tensor products of $\{H_x\}$, is to use in \eqref{Hf-df} the tensor product $\bigotimes^N \Hc_1$ of $N$ copies of $\Hc_1$, instead of $\Hc_N$. This may seem to be the simplest way to achieve the goal, which does not require any extra effort, but a closer look at this construction makes it clear that it is not the case.

Namely, it is easy to realize that, given $\{x_K\}\in\Mc_N$, one can find $N!$ tensor products of Hilbert spaces $\{H_{x_1},\ldots,H_{x_N}\}$ in the space $\bigotimes^N \Hc_1$---these tensor products differ from each other by the ordering of the factors and are given by all possible orderings. From a physical point of view the ordering of the factors in $H_{x_1}\ot\ldots\ot H_{x_N}$ is irrelevant---each such tensor product describes a space of quantum states of the same quantum system, obtained by quantization of a classical system, whose configuration space is given by the collection $\{\kappa_{x_1},\ldots,\kappa_{x_N}\}$ of independent classical d.o.f.. 

Consequently, a generic state in $\bigotimes^N \Hc_1$ seems to correspond to $N!$ distinct states in $\Hc_N$. It can be then expected that working with such generic states would be rather cumbersome. To avoid this, one would have to impose on elements of $\bigotimes^N \Hc_1$ some restrictions in order to isolate those states, which would correspond to single states in $\Hc_N$.

Moreover, for many purposes it would be convenient or even necessary to describe elements of $\bigotimes^N \Hc_1$ as fields on $\Mc^N$ (preferably, as half-densities on $\Mc^N$ valued in the tensor products of $\{H_x\}$) and to impose some regularity conditions on these fields. Therefore, in order to put the space $\bigotimes^N \Hc_1$ into practice, one would have to elaborate such a description. 

Thus the application of $\bigotimes^N \Hc_1$ does require some additional effort, which makes this space not as attractive as it seemed to be at the very beginning.          

In our opinion the advantage of $\Hc_N$ over $\bigotimes^N \Hc_1$ is conceptual simplicity of $\Hc_N$: the idea of its construction is simple and natural, and complies with the fact that the ordering of factors in $H_{x_1}\ot\ldots\ot H_{x_N}$ is physically irrelevant. Consequently, each state in $\Hc_N$ fits well our need to take into account the tensor products of $\{H_x\}$, without the necessity to impose on it any extra conditions.

Moreover, once the space $\Hc_N$ is rigorously built, one can work with it without the need to refer to many technical details of its construction like e.g. the construction of the smooth atlas on $\Mc_N$. For convenience of the reader not interested in such details, we will place a considerable part of these technicalities in the appendix to this paper.

\subsection{Construction of the Hilbert space $\Hc_N$}

The construction of the Hilbert space $\Hc_N$ will follow as closely as possible the construction of the Hilbert space $\Hc_1$ described in Section \ref{constr-H}.  

Let $(p,p')$ be the signature of metrics on $\Mc$ fixed at the very beginning of Section \ref{constr-H} for the sake of the construction of the Hilbert space $\Hc_1$. Let us fix additionally a natural number $N\geq 2$.

\subsubsection{Preliminaries \label{H_N-prel}}

 Suppose that $V^\oplus$ is a real vector space of dimension $N(p+p')$ and that a decomposition 
\begin{equation}
V^\oplus=\bigoplus_{I=1}^N V_I
\label{V^o+}
\end{equation}
is given, such that each $V_I$ is a linear subspace of $V^\oplus$ of dimension $p+p'$. 

Denote by $\Gamma_I$ the homogeneous space of all scalar products on $V_I$ of signature $(p,p')$. For each $I\in\{1,\ldots,N\}$ let us choose $\gamma_I\in\Gamma_I$ and define
\begin{equation}
V^\oplus\times V^\oplus\ni(v,\check{v})\mapsto \gamma^\oplus(v,\check{v}):=\sum_{I=1}^N\gamma_I(v_I,\check{v}_I)\in \R,
\label{g^o+}
\end{equation}
where  
\begin{align*}
v&=\sum_{I=1}^N v_I, & \check{v}&=\sum_{I=1}^N \check{v}_I, & v_I,\check{v}_I\in V_I.
\end{align*}   
Clearly, $\gamma^\oplus$ is a scalar product on $V^\oplus$ of signature $(Np,Np')$. We will use the symbol $\Gamma^\oplus$ to represent the set of all scalar products on $V^\oplus$ of the form \eqref{g^o+} (with the fixed decomposition \eqref{V^o+}). Let us emphasize that $\Gamma^\oplus$ is a {\em proper} subset of the set of all scalar products on $V^\oplus$ of signature $(Np,Np')$.   

Note that one can assign to $\gamma^\oplus$ a sequence of the scalar products used to define $\gamma^\oplus$ via \eqref{g^o+}. Obviously, this assignment, 
\begin{equation}
\Gamma^\oplus\ni\gamma^\oplus\mapsto b(\gamma^\oplus):=(\gamma_{1},\ldots,\gamma_{N})\in \Gamma_{1}\times\ldots\times \Gamma_{N}\equiv \bigtimes \Gamma_K,
\label{G^o+-GGG}
\end{equation}
is a bijection, which can be used to induce some structures on $\Gamma^\oplus$.

First, the bijection together with charts $\{(\Gamma_I,\chi_I)\}$ given by \eqref{map-lin-coor}, allow us to construct a map 
\begin{equation}
\Gamma^\oplus\ni\gamma^\oplus\mapsto(\chi_1\times\ldots\times\chi_N)(b(\gamma^\oplus))\in \Gamma^N_\R\subset\R^{N\dim\Gamma_I}
\label{o+-lin-coor}
\end{equation}
which define a global coordinate system on $\Gamma^\oplus$. This map can be used to ``pull-back'' the topology from $\R^{N\dim\Gamma_I}$ onto $\Gamma^\oplus$. Obviously, coordinate systems given by all maps \eqref{o+-lin-coor} form an analytic atlas on $\Gamma^\oplus$.

Suppose that $d\mu_I$ is an invariant measure on $\Gamma_I$. Since $d\mu_I$ is $\sigma$-finite (see Section \ref{inv-dmu}), the product
\[
d\mu_1\times\ldots \times d\mu_N\equiv \bigtimes d\mu_K
\]
is well-defined. This product is also a regular Borel measure on the Cartesian product $\Gamma_{1}\times\ldots\times \Gamma_{N}$---this is because \cite{cohn} each $d\mu_I$ is a regular Borel measure on a second countable l.c.H.  space $\Gamma_I$. We can use the bijection \eqref{G^o+-GGG} to push-forward this measure obtaining thereby a (regular Borel) measure on $\Gamma^\oplus$, which will be denoted by $d\mu^\times$: 
\[
d\mu^\times:=(b^{-1})_\star \big(\bigtimes d\mu_K\big).
\] 
Let
\[
H^\ot:=L^2(\Gamma^\oplus,d\mu^\times)\cong L^2(\Gamma_{1}\times\ldots\times \Gamma_{N},d\mu_1\times\ldots \times d\mu_N).
\]
Recall that each $L^2(\Gamma_I,d\mu_I)$ is separable \cite{qs-metr} and each measure $d\mu_I$ is $\sigma$-finite.  These properties of the space and the measure guarantee \cite{spectr-qm} that 
\begin{equation}
H^\ot\cong L^2(\Gamma_1,d\mu_1)\ot\ldots\ot L^2(\Gamma_N,d\mu_N).
\label{H-LLL}
\end{equation}

Let us emphasize that the Cartesian product $\Gamma_{1}\times\ldots\times \Gamma_{N}$, unlike the construction of $\Gamma^\oplus$, does require the spaces $\{\Gamma_I\}$ to be ordered. However, neither the topology nor the differential structure nor the measure $d\mu^\times$ induced on $\Gamma^\oplus$ depends on the ordering. Thus the Hilbert space $H^\ot$ is also independent of the ordering. 

Finally, let us state a fact, which concerns a push-forward of a product of measures:
\begin{lm}
Let $X$, $Y$, $X'$ and $Y'$ be second countable l.c.H. spaces and let $\alpha:X'\to X$, and $\beta:Y'\to Y$ be homeomorphisms. If $d\mu$ and $d\nu$ are regular Borel measures on, respectively, $X'$ and $Y'$, then
\begin{equation}
(\alpha_{\star}d\mu)\times(\beta_{\star}d\nu)=(\alpha\times\beta)_{\star}(d\mu\times d\nu).
\label{mu-x-nu}
\end{equation}
\label{lm-push-prod}   
\end{lm}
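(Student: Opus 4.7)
The plan is to reduce the identity to an equality on measurable rectangles and then invoke uniqueness of extension of $\sigma$-finite premeasures. All four spaces are second countable l.c.H., so every regular Borel measure in sight is $\sigma$-finite (as already noted in Section \ref{inv-dmu}) and, moreover, the Borel $\sigma$-algebra of $X\times Y$ coincides with the product of the Borel $\sigma$-algebras of $X$ and $Y$ (and similarly for $X'\times Y'$). Thus each side of \eqref{mu-x-nu} is a well-defined regular Borel measure on the same $\sigma$-algebra. Note also that $\alpha\times\beta:X'\times Y'\to X\times Y$ is a homeomorphism, so the push-forward on the right-hand side makes sense and has an associated regular Borel measure by the Riesz-type argument invoked earlier in the paper.

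Next I would evaluate both sides on an arbitrary measurable rectangle $A\times B$ with $A\subset X$, $B\subset Y$ Borel. For the right-hand side the pre-image $(\alpha\times\beta)^{-1}(A\times B)$ equals $\alpha^{-1}(A)\times\beta^{-1}(B)$, so by the definition of push-forward and of the product measure on $X'\times Y'$,
\[
(\alpha\times\beta)_\star(d\mu\times d\nu)(A\times B)=(d\mu\times d\nu)\big(\alpha^{-1}(A)\times\beta^{-1}(B)\big)=d\mu(\alpha^{-1}(A))\cdot d\nu(\beta^{-1}(B)).
\]
For the left-hand side the definition of product measure immediately gives
\[
\big((\alpha_\star d\mu)\times(\beta_\star d\nu)\big)(A\times B)=(\alpha_\star d\mu)(A)\cdot(\beta_\star d\nu)(B)=d\mu(\alpha^{-1}(A))\cdot d\nu(\beta^{-1}(B)),
\]
so both measures agree on every measurable rectangle.

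Finally, the collection of measurable rectangles is a $\pi$-system generating the product Borel $\sigma$-algebra, and both measures are $\sigma$-finite (one can exhaust $X\times Y$ by products $X_n\times Y_n$ of sets of finite measure coming from the $\sigma$-finite exhaustions of $X$ and $Y$ by $\alpha_\star d\mu$ and $\beta_\star d\nu$, respectively). The uniqueness part of the Carathéodory/Hahn extension theorem then forces the two Borel measures on $X\times Y$ to coincide, which is \eqref{mu-x-nu}. The main thing to be careful about is ensuring that we are comparing measures on the same $\sigma$-algebra and that the $\sigma$-finiteness hypothesis of the uniqueness theorem applies; both issues are handled by second countability, which is really the only non-trivial input after the rectangle computation.
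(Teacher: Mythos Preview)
Your argument is correct and takes a genuinely different route from the paper's. The paper works entirely at the level of integrals of compactly supported continuous functions: since both sides of \eqref{mu-x-nu} are regular Borel measures on the l.c.H.\ space $X\times Y$, it suffices by Riesz uniqueness to check that $\int h$ agrees for every $h\in C^c(X\times Y)$. The paper does this via Fubini--Tonelli, after first verifying (with some care, using dominated convergence) that the section $h_x:=h(x,\cdot)$ lies in $C^c(Y)$ and that the partial integral $\check{h}(x):=\int_Y h_x\,(\beta_\star d\nu)$ lies in $C^c(X)$, so that the defining formula \eqref{*dmu} for push-forwards can be applied twice. Your approach bypasses this analytic work by evaluating both measures on rectangles and invoking uniqueness of $\sigma$-finite extensions from a generating $\pi$-system.

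The paper's approach has the minor advantage of adhering literally to its own definition of push-forward (via \eqref{*dmu}), whereas you tacitly use the identification $(\alpha_\star d\mu)(A)=d\mu(\alpha^{-1}(A))$ on Borel sets. Strictly speaking this requires a one-line justification: the set-theoretic push-forward $d\mu\circ\alpha^{-1}$ is regular because $\alpha$ is a homeomorphism, and it satisfies \eqref{*dmu} by change of variables, so by uniqueness it coincides with the Riesz-defined $\alpha_\star d\mu$. Once that is noted, your proof is cleaner and more conceptual: it isolates precisely where the two structural hypotheses enter (second countability for $\mathcal{B}(X\times Y)=\mathcal{B}(X)\otimes\mathcal{B}(Y)$, $\sigma$-finiteness for the uniqueness step) and avoids the side-work of proving continuity and compact support for the auxiliary functions $h_x$ and $\check{h}$.
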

\noindent This lemma may seem to be obvious, but a strict proof of it, done in line with the definition \eqref{*dmu} of push-forward measure, requires some effort. Therefore we relegate the proof to Appendix \ref{app-proof}.
 
The lemma above can be easily generalized to a product of any finite number of suitable measures since \cite{cohn} \emi a product $X\times Y$ of second countable l.c.H. spaces $X$ and $Y$  is such a space again and \emii if $d\mu$ and $d\nu$ are regular Borel measures on, respectively, $X$ and $Y$, then $d\mu\times d\nu$ is such a measure on $X\times Y$.

Let us now apply Lemma \ref{lm-push-prod} to express an integral over $\Gamma^\oplus$ with respect to $d\mu^\times$  in terms of an integral over $\Gamma^N_\R$. Recall first that each $\Gamma_I$ in \eqref{G^o+-GGG} and $\Gamma_\R$ are second countable l.c.H. spaces. Each invariant measure $d\mu_I$ on $\Gamma_I$ used to define $d\mu^\times$ is regular and Borel. Moreover, $d\mu_I=c_I\,d\mu_{Q_I}$, where $c_I>0$ and $d\mu_{Q_I}$ is the invariant measure on $\Gamma_I$ given by the natural metric on this space (see \eqref{mu-cmuQ}). On the other hand, we can treat the r.h.s. of \eqref{natme} as the definition of a positive functional on $C^c(\Gamma_\R)$, which (by virtue of the Riesz representation theorem) allows us to regard $\Delta \,d\mu_L$ as a regular Borel measure on $\Gamma_\R$. Taking into account all these facts, Equation \eqref{natme} and Lemma \ref{lm-push-prod}, it is straightforward to make the following transformations: 
\begin{multline}
\int_{\Gamma^\oplus}(b^\star \Psi)\,d\mu^\times =\int_{\bigtimes \Gamma_K} \Psi\,\big(\bigtimes d\mu_K\big)=\mathbf{c}\int_{\bigtimes \Gamma_K} \Psi\,\big(\bigtimes d\mu_{Q_K}\big)=\\=\mathbf{c}\int_{\bigtimes \Gamma_K} \Psi\,\big(\bigtimes \chi^{-1}_{K\star}(\Delta  d\mu_{L})\big)=\mathbf{c}\int_{\bigtimes \Gamma_K}  \Psi\,\Big(\big(\bigtimes \chi^{-1}_K\big)_\star\big(\bigtimes \Delta  d\mu_{L}\big)\Big)=\\=\mathbf{c}\int_{\Gamma^N_\R} \Big((\bigtimes \chi^{-1}_K)^\star \Psi\Big)\,\big(\bigtimes \Delta  d\mu_{L}\big), 
\label{int_g^o+}
\end{multline}
where $b^\star \Psi\in C^c(\Gamma^\oplus)$, $\mathbf{c}=c_1\cdot\ldots\cdot c_N$, the product
\[
\bigtimes \chi^{-1}_K\equiv \chi^{-1}_1\times\ldots\times \chi^{-1}_N
\]
is given by the maps appearing in \eqref{o+-lin-coor}, and $\bigtimes \Delta \,d\mu_L$ is the product of $N$ copies of $\Delta \,d\mu_L$.   

Equation \eqref{int_g^o+} can be now used to prove the following generalization of Lemma \ref{lm-psi-zero}:
\begin{lm}
Let $\Psi:\Gamma^\oplus\to \C$ be continuous. If
\[
\int_{\Gamma^\oplus} \bar{\Psi}\Psi\,d\mu^\times = 0,
\]
then $\Psi=0$.  
\label{lm-psi-zero-o+}
\end{lm}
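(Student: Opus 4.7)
The plan is to imitate the proof of Lemma \ref{lm-psi-zero}, working on the product structure provided by the bijection $b$ in \eqref{G^o+-GGG} and converting everything into an integral over the open set $\Gamma^N_\R \subset \R^{N\dim\Gamma_I}$ by means of Equation \eqref{int_g^o+}.

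Concretely, I would argue by contradiction: assume that $\Psi(\gamma_0^\oplus)\neq 0$ for some $\gamma_0^\oplus\in\Gamma^\oplus$. By continuity of $\bar\Psi\Psi$ there exists a non-negative continuous function $h$ on $\Gamma^\oplus$ of compact support such that $h(\gamma_0^\oplus)>0$ and $\bar\Psi\Psi\geq h$. It is enough to show that
\[
\int_{\Gamma^\oplus} h\,d\mu^\times > 0,
\]
since the hypothesised equality in \eqref{zero} (adapted to the present setting) together with the inequality above would give the required contradiction.

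To compute this integral, I would apply \eqref{int_g^o+} and rewrite it as
\[
\int_{\Gamma^\oplus} h\,d\mu^\times = \mathbf{c}\int_{\Gamma^N_\R}\Big(\big(\bigtimes \chi^{-1}_K\big)^\star h\Big)\big(\bigtimes \Delta\,d\mu_L\big),
\]
with $\mathbf{c}>0$. The pulled-back integrand is continuous, non-negative, of compact support, and strictly positive at the point $(\bigtimes\chi_K)\bigl(b(\gamma_0^\oplus)\bigr)\in\Gamma^N_\R$; the measure $\bigtimes \Delta\,d\mu_L$ has a strictly positive continuous density relative to the $N$-fold product Lebesgue measure on $\R^{N\dim\Gamma_I}$, because $\Delta$ is everywhere positive on $\Gamma_\R$.

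The main (mild) obstacle is justifying that such an integral is strictly positive. This reduces to the fact that the $N$-fold product of Lebesgue measures assigns positive measure to every non-empty open subset of $\R^{N\dim\Gamma_I}$: a continuous non-negative function, strictly positive at an interior point of its support, is bounded below by a positive constant on some open ball about that point, so its integral against a measure with strictly positive continuous density against the product Lebesgue measure is positive. This yields the desired contradiction and the lemma follows in exactly the same form as Lemma \ref{lm-psi-zero}.
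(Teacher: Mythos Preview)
Your proposal is correct and follows precisely the approach the paper indicates: the paper does not spell out a proof but merely remarks that Equation \eqref{int_g^o+} can be used to prove this generalization of Lemma \ref{lm-psi-zero}, and you have carried out exactly that reduction. One small notational point: since $h$ lives on $\Gamma^\oplus$ while $\bigtimes\chi_K^{-1}$ maps $\Gamma^N_\R$ to $\bigtimes\Gamma_K$, the pulled-back integrand should strictly be written as $\big(\bigtimes\chi_K^{-1}\big)^\star\big(b^{-1\star}h\big)$, i.e.\ one first passes from $\Gamma^\oplus$ to $\bigtimes\Gamma_K$ via $b^{-1}$; your subsequent identification of the image point as $(\bigtimes\chi_K)\bigl(b(\gamma_0^\oplus)\bigr)$ shows you have the composition right conceptually.
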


\subsubsection{Hilbert half-densities on $\Mc_N$ \label{Hhd-N_N}}

Here we will apply the construction of the Hilbert space $H^\ot$ just presented to associate the Hilbert space $H^\ot_y$ with every point $y\equiv\{x_K\}$ of the manifold $\Mc_N$ (let us recall that the set $\Mc_N$ is defined by the formula \eqref{N_N-df}, and the smooth atlas on $\Mc_N$ is introduced in Appendix \ref{N_N}). 

As shown in Appendix \ref{app-decomp}, for every $y\in\Mc_N$, there exists a distinguished decomposition of the tangent space $T_{y}\Mc_N$ into a direct sum of (linear subspaces naturally isomorphic to) the tangent spaces $\{T_{x_I}\Mc\}_{x_I\in\, y}$:
\begin{equation}
T_{y}\Mc_N=\bigoplus_{x_K\in \,y} T_{x_K}\Mc.
\label{T-dec}
\end{equation}
Suppose that $\{\gamma_{x_I}\}_{x_I\in \,y}$ is a collection of scalar products of signature $(p,p')$ on, respectively, $\{T_{x_I}\Mc\}$---in other words,  each $\gamma_{x_I}\in \Gamma_{x_I}$. This collection together with the decomposition \eqref{T-dec} define a scalar product $\gamma^\oplus_{ y}$ on $T_{ y}\Mc_N$ of signature $(Np,Np')$ according to the prescription \eqref{g^o+}. We will denote by $\Gamma^\oplus_{ y}$ the set of all such scalar products on $T_{ y}\Mc_N$.      

Let $x\mapsto d\mu_x$ be the diffeomorphism invariant field \eqref{diff-inv} of invariant measures on $\Mc$,  used in Section \ref{Hc-df} to construct the Hilbert space $\Hc_1$. Denote by $d\mu^\times_{ y}$ the measure on $\Gamma^\oplus_{ y}$ defined as the push-forward of the measure
\[
d\mu_{x_1}\times\ldots \times d\mu_{x_N}
\]
given by the inverse of the natural bijection (see \eqref{G^o+-GGG})
\begin{equation}
b_y:\Gamma^\oplus_{y}\to\Gamma_{x_1}\times\ldots\times\Gamma_{x_N}.
\label{1-to-1}
\end{equation}
This allows us to associate with the point $ y$ the following Hilbert space (see \eqref{H-LLL}):
\begin{equation}
H^\ot_{ y}:=L^2(\Gamma^\oplus_{ y},d\mu^\times_{ y})\cong L^2(\Gamma_{x_1},d\mu_{x_1})\ot\ldots\ot L^2(\Gamma_{x_N},d\mu_{x_N})=H_{x_1}\ot\ldots\ot H_{x_N}.
\label{H^o+_y}
\end{equation}
Let us emphasize that the measure $d\mu^\times_{ y}$  does not depend on the choice of the ordering of the spaces $\{\Gamma_{x_K}\}$ in \eqref{1-to-1}. Consequently, the Hilbert space $H^\ot_{y}$ does not distinguish any ordering of the Hilbert spaces $\{H_{x_K}\}$ in \eqref{H^o+_y}.    

\paragraph{Definition} Let $\tilde{H}^\ot_{ y}$ denote the pseudo-Hilbert space of all half-densities over $T_{ y}\Mc_N$ valued in $H^\ot_{ y}$. We will use the symbol $\scl{\cdot}{\cdot}_{ y}$ to represent the density product on $\tilde{H}^\ot_{ y}$. A map $\Psit$ from $\Mc_N$ to 
\begin{equation}
\tilde{\mathbf{H}}^\ot:=\bigcup_{ y\in\Mc_N}\tilde{H}^\ot_{ y}
\label{t-bfH^ot}
\end{equation}
such that $\Psit(y)\in \tilde{H}^\ot_{y}$, will be called Hilbert half-density on $\Mc_N$. Equivalently, one can think of $\Psit$ as of a section of the bundle-like set $\tilde{\mathbf{H}}^\ot$.       

\paragraph{Regularity conditions}

Again, we would like to impose on the Hilbert half-densities just defined, some regularity conditions, which \emi would be helpful while defining physical operators on $\Hc_N$ and \emii  will ensure that the half-densities paired by means of the density products $\{\scl{\cdot}{\cdot}_y\}_{y\in\Mc_N}$, give integrable scalar densities on $\Mc_N$. 

The regularity conditions, we are going to introduce here, will be analogous to those presented in Section \ref{reg-con}. However, some differences will be unavoidable. The reason is that each space $\Gamma^\oplus_y$ does not consist of all scalar products on $T_y\Mc_N$ of signature $(Np,Np')$, but contains only some special ones. Therefore, introducing and working with these new regularity conditions, we will restrict ourselves to some coordinate systems on $\Mc_N$, which, in a sense, are compatible with decompositions \eqref{T-dec} and, thereby, with the special form of the elements of $\Gamma^\oplus_y$.      

We showed in Appendix \ref{N_N} that for every $y\in\Mc_N$, there exist local charts $\{(U_K,\varphi_K)\}$ $_{K=1,\ldots,N}$ on $\Mc$ such that \emi the sets $\{U_K\}$ are pairwise disjoint and \emii there exists a distinguished diffeomorphism from an open neighborhood $Z_y$ of $y$ onto $U_1\times\ldots \times U_N$. The composition $\Phi$ of this diffeomorphism with the map $\varphi_1\times\ldots\times\varphi_N$ and the set $Z_y$ form a chart $(Z_y,\Phi)$ on $\Mc_N$ (see \eqref{coor-N}). All charts of this sort constitute a smooth atlas on the manifold denoted in the appendix by $\Ac$. Here we will extend this atlas by admitting all charts obtained by restricting domains of charts in $\Ac$. The extended atlas will be denoted by $\Ac'$.                

Let $\Sigma$ be the set of all permutation of the sequence $(1,\ldots,N)$. Every chart in $\Ac'$ defines a local coordinate system 
\begin{equation}
(x^{i_1}_1,\ldots, x^{i_N}_N)\equiv (x^\ab)
\label{xxx}
\end{equation}
on $\Mc_N$, which is compatible with the decomposition \eqref{T-dec} in the following sense: there exists $\sigma\in\Sigma$ such that the tangent vectors $(\partial_{x^i_I})$ (with fixed $I$)  form a basis of $T_{x_{\sigma(I)}}\Mc$ in \eqref{T-dec} (see Appendix \ref{app-decomp} for a justification of this claim). Then each element of $\Gamma^\oplus_y$, $y=\{x_K\}$,  reads    
\begin{equation}
\gamma^\oplus_y=(\gamma^\oplus_y)_{\ab\bb}\,dx^\ab\ot dx^\bb=\sum_{K=1}^N (\gamma_{x_{\sigma(K)}})_{i_Kj_K}\,dx^{i_K}_K\ot dx^{j_K}_K.
\label{g^o+-comp}
\end{equation}
Thus the map
\[
\Gamma^\oplus_y\ni \gamma^\oplus_y\mapsto \big((\gamma_{x_{\sigma(1)}})_{i_1j_1},\ldots,(\gamma_{x_{\sigma(N)}})_{i_Nj_N}\big)\in \Gamma^N_\R
\]
is of the sort of the map \eqref{o+-lin-coor}. 

The description \eqref{g^o+-comp} of $\gamma^\oplus_y$ in terms of the components $\big((\gamma_{x_\sigma(K)})_{i_Kj_K}\big)$ is more explicit than that in terms of $\big((\gamma^\oplus_y)_{\ab\bb}\big)$. However, the symbols $\big((\gamma_{x_\sigma(K)})_{i_Kj_K}\big)$ are fairly complex and thereby somewhat unreadable. Therefore we would like to use the components $\big((\gamma^\oplus_y)_{\ab\bb}\big)$ instead of them. To this end we will neglect each zero component $(\gamma^\oplus_y)_{\ab\bb}$ for which $\ab$ and $\bb$ refer to coordinates, respectively, $x^{i_I}_I$ and $x^{j_J}_J$ with $I\neq J$. This will allow us to treat the set $\big((\gamma^\oplus_y)_{\ab\bb}\big)$ (being in fact an element of $\R^{(N\dim\Mc)^2}$) as an element of $\Gamma^N_\R$ and identify the two sets of components under consideration:
\[
\big((\gamma^\oplus_y)_{\ab\bb}\big)\equiv \big((\gamma_{x_{\sigma(1)}})_{i_1j_1},\ldots,(\gamma_{x_{\sigma(N)}})_{i_Nj_N}\big).
\]

Let us now introduce a coordinate representation of a Hilbert half-density $\Psit$ on $\Mc_N$. To this end consider a chart $(Z,\Phi)\in\Ac'$ and the corresponding coordinate system \eqref{xxx}. Given $y\in Z$, $\Psit(y)$ is a half-density over $T_y\Mc_N$ valued in $H^\ot_y$, which means that the value of $\Psit(y)$ on the basis $(\partial_{x^{i_1}_1},\ldots,\partial_{x^{i_N}_N})\equiv (\partial_{x^{\ab}})$ of the tangent space, is an element of the Hilbert space:
\[
\Psit\big(y,(\partial_{x^\ab})\big)\in H^\ot_y.
\]       
Thus $\Psit\big(y,(\partial_{x^\ab})\big)$ is an equivalence class of a function 
\[
\Gamma^\oplus_y\ni \gamma^\oplus\mapsto \Psit\big(y,(\partial_{x^\ab}),\gamma^\oplus\big)\in\C. 
\]
Expressing $y$ by means of the coordinates and $\gamma^\oplus$ as in \eqref{g^o+-comp} we obtain a {\em coordinate representation} of $\Psit$ being the map
\begin{equation}
\Phi(Z)\times \Gamma^N_\R\ni (x^\ab,\gamma^\oplus_{\ab\bb})\mapsto \psi(x^\ab,\gamma^\oplus_{\ab\bb}):=\Psit\big(\Phi^{-1}(x^\ab),(\partial_{x^\ab}),\gamma^\oplus_{\ab\bb}\,dx^\ab\ot dx^\bb\big)\in\C.
\label{coor-repr-N_N}
\end{equation}

Now we can introduce the notion of {\em continuous Hilbert half-densities on $\Mc_N$ of compact and slowly changing $\Gamma^N_\R$-support} exactly as we did in Section \ref{reg-con} in the case of Hilbert half-densities on $\Mc$ with only three exceptions: 
\begin{enumerate}
\item the scalar product components $(\gamma^\oplus_{\ab\bb})$ in \eqref{coor-repr-N_N} do not describe arbitrary scalar products on $T_y\Mc_N$, $y=\Phi^{-1}(x^\ab)\in Z$, of signature $(Np,Np')$, but exclusively those in $\Gamma^\oplus_y$. Therefore  the components are restricted to be elements of $\Gamma^N_\R$. 
\item we do not allow ourselves to use arbitrary local coordinate systems on $\Mc_N$, but only those given by charts in $\Ac'$. 
\item Lemma \ref{lm-psi-zero-o+} should be used instead of Lemma \ref{lm-psi-zero}.
\end{enumerate}
This means in particular that \emi given an admissible coordinate system $(x^\ab)$, a {\em continuous} coordinate representation of every Hilbert half-density on $\Mc_N$ in the system is {\em unique} (provided it exists) and \emii  appropriate counterparts of Lemmas \ref{lm-cont-coor}, \ref{slow-coor} and \ref{PP-lin-comb} can be proven in the same way without any essential changes.

\subsubsection{Pairing of Hilbert half-densities into scalar densities}

Suppose that $\Psit$ and $\Psit'$ are Hilbert half-densities on  $\Mc_N$. Clearly, the map
\begin{equation}
\Mc_N\ni y\mapsto \scl{\Psit'}{\Psit}(y):=\scl{\Psit'(y)}{\Psit(y)}_y\in \tilde{\mathbf{C}}
\label{hden-pair-N}
\end{equation}
is a scalar density on $\Mc_N$ ($\tilde{\mathbf{C}}$ here is defined analogously to $\tilde{\mathbf{C}}$ in Section \ref{Hc-df}). 

As before, if we assume that both half-densities $\Psit$ and $\Psit'$ are continuous and that the $\Gamma^N_\R$-support of one of them is compact and slowly changing, then the scalar density \eqref{hden-pair-N} is continuous. This fact can be proven analogously\footnote{The regularity conditions imposed on elements of $\Hc^c_N$ are expressed in terms of coordinate systems defined by charts in $\Ac'$. Therefore the result of this analogous proof will be a conclusion that a coordinate representation of the density \eqref{hden-pair-N} in every coordinate system of this special sort, is continuous. But this is sufficient to claim that all coordinate representations of the density are continuous (see Equation \eqref{f'-f}) and thereby the density is continuous.} to the proof of Lemma \ref{PP-cont}. The only essential difference is a bit more complicated passage from the counterpart of Equation \eqref{int} to the counterpart of Equation \eqref{f-ppC}, where now Equation \eqref{int_g^o+} should be used.

\subsubsection{The Hilbert space $\Hc_N$}

Let $\Hc^c_{N}$ be the vector space of all continuous Hilbert half-densities on $\Mc_N$ of compact $\Mc_N$-support and of compact and slowly changing $\Gamma^N_\R$-support. For any two elements $\Psit$ and $\Psit'$ of $\Hc^c_{N}$, the scalar density $\scl{\Psit'}{\Psit}$ on $\Mc_N$ is continuous and of compact support and therefore the density can be naturally integrated over this paracompact manifold. The following map
\begin{equation}
\Hc^c_{N}\times \Hc^c_{N} \ni (\Psit',\Psit)\mapsto \scal{\Psit'}{\Psit}:=\int_{\Mc_N} \scl{\Psit'}{\Psit}\in\C,
\label{HNc-scal}
\end{equation}
where the integral at the r.h.s. is the integral of the scalar density $\scl{\Psit'}{\Psit}$, is an inner product on $\Hc^c_{N}$. 

By definition, the Hilbert space $\Hc_N$ is the completion of $\Hc^c_{N}$ in the norm induced by the inner product \eqref{HNc-scal}.

\subsubsection{Uniqueness of $\Hc_N$}

Let us recall that to construct the Hilbert spaces $\Hc_1$ and $\Hc_N$ we used the same diffeomorphism invariant field $x\mapsto d\mu_{x}$ of invariant measures. If for this purpose we used an other such field $x\mapsto d\check{\mu}_{x}$ instead, then we would obtain an other Hilbert space $\check{\Hc}_N$. It is not difficult to realize (see Equation \eqref{mu-cmu}) that there exists a positive number $c$ such that  
\begin{equation}
{\Hc_N}\ni\Psit\mapsto\frac{\Psit}{\sqrt{c^N}}\in\check{\Hc}_N 
\label{UN-c}
\end{equation}
is a unitary map.

Thus we conclude that the Hilbert space $\Hc_N$ is unique up to natural isomorphisms \eqref{UN-c}.

\subsection{Action of diffeomorphisms of $\Mc$ on the Hilbert space $\Hc_N$}

Let $\theta$ be a diffeomorphism of $\Mc$. It induces the following map:
\[
\Mc_N\ni \{x_1,\ldots,x_N\}\mapsto \Theta(\{x_1,\ldots,x_N\}):=\{\theta(x_1),\ldots,\theta(x_N)\}\in\Mc_N
\]   
being a diffeomorphism on $\Mc_N$---see Appendix \ref{diff-M-N}. Diffeomorphisms of this sort form a subgroup of the diffeomorphism group on $\Mc_N$. We will denote this subgroup by $\Diff_\Mc(\Mc_N)$. We showed in Appendix \ref{diff-M-N} that each $\Theta\in\Diff_\Mc(\Mc_N)$ preserves 
\begin{enumerate}
\item the atlas $\Ac$ on $\Mc_N$,  
\item the decompositions \eqref{T-dec},
\item the spaces $\{\Gamma^\oplus_y\}_{y\in\Mc_N}$. 
\end{enumerate}

Let $\Psit$ be a Hilbert half-density belonging to $\Hc^c_N$, and let $\Theta\in\Diff_\Mc(\Mc_N)$. Following the definition \eqref{chi*-P} of the pull-back of a Hilbert half-density on $\Mc$, we define the pull-back of $\Psit$        
\begin{equation}
(\Theta^*\Psit)(y,e_y):=(\Theta^{-1})^{t *\star}\,\Psit(\Theta(y),\Theta^te_y),
\label{pull-back-Th}
\end{equation}
where $e_y$ is a basis of $T_y\Mc_N$, and $\Theta^t$ denotes the tangent map given by the diffeomorphism $\Theta$.    

It can be shown that the pull-back $\Theta^*\Psit$ is again an element of $\Hc^c_N$ for every $\Theta\in\Diff_\Mc(\Mc_N)$. A proof of this fact is similar to the proof of Lemma \ref{pull-H^c}, one has only take into account that each element of $\Diff_\Mc(\Mc_N)$ preserves the atlas $\Ac'$ (since it preserves the atlas $\Ac$).     

Moreover, every diffeomorphism $\Theta\in\Diff_\Mc(\Mc_N)$ preserves the inner product \eqref{HNc-scal}. This can be proven analogously to Lemma \ref{in-prod-pres}. The only extra work, which has to be done, is 
\begin{enumerate}
\item to note that if $\Psit\in\Hc^c_N$, then for every $y\in\Mc_N$ and for every basis $e_y$ of $T_y\Mc_N$, the equivalence class $\Psit(y,e_y)\in H^\ot_y$ possesses a unique continuous compactly supported representative;    
\item to prove that the measure field on $\Mc_N$
\begin{equation}
y\to d\mu^\times_y
\label{dmuX-diff-inv}
\end{equation}
is invariant with respect to the action of all diffeomorphisms in $\Diff_\Mc(\Mc_N)$. 
\end{enumerate}

To this latter end recall that if $y=\{x_K\}\in\Mc_N$, then the measure $d\mu^\times_y$ is given by the push-forward of the measure $\bigtimes d\mu_{x_K}$ under the bijection $b^{-1}_y$ (see \eqref{1-to-1}). Let $y'=\{x'_K\}$ be a point of $\Mc_N$ such that $x_K=\theta(x'_K)$, which means that $y=\Theta(y')$. As shown in Appendix \ref{diff-M-N}, the bijections $b_y$ and $b_{y'}$ intertwine the pull-back $\Theta^{t *}:\Gamma^\oplus_y\to\Gamma^\oplus_{y'}$ and the pull-back $\bigtimes \theta^{t *}:\bigtimes \Gamma_{x_K}\to\bigtimes \Gamma_{x'_K}$ (see Equation \eqref{Xtheta} for definition of $\bigtimes \theta^{t *}$ and Equation \eqref{bb} for the relation between the bijections and the pull-backs). Consequently, 
\begin{multline*}
(\Theta^{t *})_\star d\mu^\times_y=(\Theta^{t *})_\star (b^{-1}_y)_{\star} \bigtimes d\mu_{x_K}=(b^{-1}_{y'})_\star \big(\bigtimes \theta^{t *}\big)_\star\bigtimes d\mu_{x_K}=\\=(b^{-1}_{y'})_\star\bigtimes\big((\theta^{t *})_\star\,d\mu_{x_K}\big)=(b^{-1}_{y'})_\star \bigtimes d\mu_{x'_K}=d\mu^\times_{y'}
\end{multline*}
---here in the third step we used Lemma \ref{lm-push-prod}, and in the forth step the diffeomorphism invariance of the measure field $x\mapsto d\mu_x$ on $\Mc$ (see Equation \eqref{l*xx}). We thus conclude that, indeed, the measure field \eqref{dmuX-diff-inv} is invariant with respect to the action of elements of $\Diff_\Mc(\Mc_N)$. 

As in the case of the Hilbert space $\Hc_1$, the pull-back \eqref{pull-back-Th} given by a diffeomorphism $\Theta\in \Diff_{\Mc}(\Mc_N)$ corresponding to $\theta\in\Diff(\Mc)$, can be unambiguously extended to a unitary operator on the Hilbert space $\Hc_N$. It is convenient to denote this operator by  $U_N(\theta^{-1})$---then 
\begin{equation}
\theta\mapsto U_N(\theta)
\label{diff-repr-HN}
\end{equation}
is a unitary representation of $\Diff(\Mc)$ on $\Hc_N$.

\subsection{Uniqueness of $\Hf$ }

We assumed that each space $\Hc_N$ for $N\geq 1$ is built using the same diffeomorphism invariant field \eqref{diff-inv} of invariant measures on $\Mc$. Consequently, the resulting Hilbert space $\Hf$ defined by the orthogonal sum \eqref{Hf-df} stems from this field. 

Suppose now, that a Hilbert space $\check{\Hf}$ is constructed in the same way from an other such a measure field---the other field is related to the former one by the formula \eqref{mu-cmu}. Then taking into account the distinguished unitary maps \eqref{U-c} and \eqref{UN-c} we see that there exists a distinguished unitary map (isomorphism) between $\Hf$ and $\check{\Hf}$. If $\Psit_N\in\Hc_N$, then this unitary map is given by the following formula:
\[
\Hf\ni (\Psit_N)\mapsto \Big(\frac{\Psit_N}{\sqrt{c^N}}\Big)\in\check{\Hf}.
\]            

We are then allowed to state that the Hilbert space $\Hf$ is unique up to natural (or distinguished) isomorphisms.

\subsection{Unitary representation of $\Diff(\Mc)$ on $\Hf$ }

The unitary representations \eqref{diff-repr-U1} and \eqref{diff-repr-HN} of $\Diff(\Mc)$ on, respectively, $\Hc_1$ and $\Hc_N$, $N\geq 2$, can be used to define the following unitary representation of the diffeomorphism group on the Hilbert space $\Hf$:
\[
\theta\mapsto \bigoplus_{N=1}^\infty U_N(\theta), 
\]      
where $\theta\in\Diff(\Mc)$.

\subsection{Hilbert spaces $\{\Hf\}$ built over $\Mc=\R$ \label{Hf-McR}}

In this section we will consider the Hilbert space $\Hf$ built in the case $\Mc=\R$ for signature either $(1,0)$ or $(0,1)$ and will show that this space is {\em separable} (regardless of signature). To this end we will show first that for every $N\geq 1$ the Hilbert space $\Hc_N$ constructed over $\Mc=\R$ is separable.

In Appendix \ref{exampl-N_N} we considered the manifold $\Mc_N$, $N\geq 2$, constituted of points of $\Mc=\R$ and constructed a bijection $\iota:\Mc_N\to\R^N_>$, where
\[
\R^N_>=\{\ (x_1,\ldots,x_N)\in\R^N \ | \ x_1>x_2>\ldots>x_{N-1}>x_N\ \}
\]
is an open subset of $\R^N$. We have further demonstrated that $\iota$ defines a global coordinate system on $\Mc_N$. Setting $\Mc\equiv \Mc_1$, $\R^1_>\equiv \R$ and $\iota\equiv\id$ on $\Mc_1\equiv\R^1_>$ will allow us to treat the case $N=1$ together with all the cases $N\geq 2$ in the considerations below.            
  
Let us then fix $N\geq 1$ and a Hilbert half-density $\Psit\in\Hc^c_N$. As shown in Appendix \ref{exampl-N_N}, for every $y\in \Mc_N$ there exists its open neighborhood $Z$ such that the chart $(Z,\iota|_Z)$ belongs to the atlas $\Ac$ on $\Mc_N$ and thereby to the atlas $\Ac'$. This together with continuity of $\Psit$ mean that if a coordinate system is given by $(Z,\iota|_Z)$, then there exists a continuous coordinate representation of $\Psit$ in this system. Moreover, this continuous representation is unique (see the last sentence of Section \ref{Hhd-N_N}). This uniqueness allows us to merge all such continuous coordinate representations of $\Psit$ into one {\em continuous} representation
\begin{equation}
\R^N_>\times \Gamma^N_\R\ni (x^\ab,\gamma^\oplus_{\ab\bb})\mapsto \psi(x^\ab,\gamma^\oplus_{\ab\bb})\in\C
\label{psi-iota}
\end{equation}
of $\Psit$ in the global coordinate system $(x^\ab)$ defined by the map $\iota$ (see \eqref{psi-cont} and \eqref{coor-repr-N_N})\footnote{If $N=1$, then the existence of the continuous representation \eqref{psi-iota} follows directly from the definition of continuous Hilbert half-densities on $\Mc$. Moreover, for $N=1$ the superscript $\oplus$ in $\gamma^\oplus$ in \eqref{psi-iota}, is superfluous.}. 

Furthermore, $\Psit$ under consideration is of compact and slowly changing $\Gamma^N_\R$-support. Using this fact we can conclude in an analogous way that for every $y\in\Mc_N$ there exists its open neighborhood $Z_y$ and a compact set $K_y\in \Gamma^N_\R$ such that for every $(x^\ab)\in\iota(Z_y)$ the support of $\psi_{(x^\ab)}$ is contained in $K_y$---$\psi_{(x^\ab)}$ is related to the map \eqref{psi-iota} by an obvious generalization of \eqref{psi_xi}.

It is easy to convince oneself that, the other way round, if a Hilbert half-density $\Psit$ of compact $\Mc_N$-support is \emi continuous in the coordinate system $(x^\ab)$ and \emii of compact and slowly changing $\Gamma^N_\R$-support in the same system, then $\Psit\in\Hc^c_N$.    

\begin{lm}
The map
\begin{equation}
\Hc^c_N\ni\Psit\mapsto \psi,
\label{Psit->psi}
\end{equation}
where $\psi$ is the function \eqref{psi-iota}, is a linear bijection from $\Hc^c_N$ onto the linear space $C^c(\R^N_>\times\Gamma^N_\R,\C)$ of all complex compactly supported continuous functions on $\R^N_>\times\Gamma^N_\R$.     
\end{lm}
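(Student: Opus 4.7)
The plan is to verify, in turn, linearity of the map, that $\psi$ genuinely belongs to $C^c(\R^N_>\times \Gamma^N_\R,\C)$, injectivity, and surjectivity via an explicit construction of the inverse. Linearity is immediate: pointwise addition and scalar multiplication on Hilbert half-densities translate through \eqref{coor-repr-N_N} to the analogous pointwise operations on the chart-wise coordinate representations, and the fibrewise uniqueness of continuous representatives (secured by Lemma \ref{lm-psi-zero-o+}) forces the globally glued $\psi$ assigned to $z\Psit+z'\Psit'$ to coincide with $z\psi+z'\psi'$.

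Continuity of $\psi$ is built into its definition, so the only non-routine part of well-definedness is compactness of $\supp\psi$. Set $L:=\iota(\supp\Psit)\subset\R^N_>$, which is compact. Around each $y\in\supp\Psit$ the slowly-changing condition supplies an open neighborhood $Z_y$ and a compact $K_y\subset\Gamma^N_\R$ such that $\supp \psi_{(x^\ab)}\subset K_y$ for every $(x^\ab)\in\iota(Z_y)$; extracting a finite subcover $\{Z_{y_i}\}_{i=1}^m$ of the compact set $\supp\Psit$ and setting $K:=K_{y_1}\cup\dots\cup K_{y_m}$ yields a compact set in $\Gamma^N_\R$. Since $\psi$ must vanish identically on fibres over $\Nc_N\setminus\supp\Psit$ (by uniqueness of the continuous representative of the zero class), we obtain $\{\psi\neq 0\}\subset L\times K$, so $\supp\psi$ is compact. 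Injectivity follows from the evident analog of the reconstruction formula \eqref{P-from-p}: if $\psi\equiv 0$, then for every $y\in\Nc_N$ the function $\gamma^\oplus\mapsto \psi(\iota(y),\gamma^\oplus_{\ab\bb})$ is a continuous representative of $\Psit(y,(\partial_{x^\ab}))\in H^\ot_y$ which vanishes identically, and Lemma \ref{lm-psi-zero-o+} forces this equivalence class to be zero. The half-density transformation \eqref{w-Lw'} then propagates $\Psit(y,e_y)=0$ to every basis $e_y$, so $\Psit=0$.

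For surjectivity, given $\psi\in C^c(\R^N_>\times\Gamma^N_\R,\C)$, define $\Psit$ by declaring $\Psit(y,(\partial_{x^\ab}))\in H^\ot_y$ to be the equivalence class containing the continuous compactly supported function $\gamma^\oplus\mapsto \psi(\iota(y),\gamma^\oplus_{\ab\bb})$, and extending to arbitrary bases of $T_y\Nc_N$ via \eqref{w-Lw'}. By construction the coordinate representation of $\Psit$ in the global chart $(\R^N_>,\iota)$ equals $\psi$, so $\Psit$ is continuous in this chart; continuity in every chart of $\Ac'$ then follows from the transition computation underlying Lemma \ref{lm-cont-coor}. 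The $\Nc_N$-support of $\Psit$ is $\iota^{-1}(\pi_1(\supp\psi))$ with $\pi_1$ the projection onto $\R^N_>$, hence compact. Taking the compact set $K:=\pi_2(\supp\psi)\subset\Gamma^N_\R$, where $\pi_2$ is the other projection, we have $\supp\psi_{(x^\ab)}\subset K$ for every $(x^\ab)\in\R^N_>$, so the $\Gamma^N_\R$-support of $\Psit$ is compact and slowly changing at every point (the neighborhood $U_0$ in Definition \ref{slowly-df} may be chosen freely). Hence $\Psit\in\Hc^c_N$, and the construction is manifestly a two-sided inverse of \eqref{Psit->psi}.

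The step I expect to demand the most care is the compact-support half of well-definedness: the slowly-changing condition is local in $\Nc_N$, and extracting a single global compact $K\subset\Gamma^N_\R$ relies essentially on the compactness of $\supp\Psit$ through the finite-cover argument above. In particular, the ``compact $\Nc_N$-support'' hypothesis in the definition of $\Hc^c_N$ is genuinely being used; without it $\psi$ might fail to have compact support even though each fibre support is contained in its own compact set.
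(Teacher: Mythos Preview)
Your proposal is correct and follows essentially the same approach as the paper: the finite-subcover argument extracting a global compact $K\subset\Gamma^N_\R$ from the slowly-changing condition over the compact $\Nc_N$-support, the reconstruction formula for injectivity, and the use of the two projections $\pi_1,\pi_2$ of $\supp\psi$ to verify $\Psit\in\Hc^c_N$ in the surjectivity step all match the paper's proof. One minor imprecision: you assert that the $\Nc_N$-support of the reconstructed $\Psit$ \emph{equals} $\iota^{-1}(\pi_1(\supp\psi))$, whereas only the inclusion $\supp\Psit\subset\iota^{-1}(\pi_1(\supp\psi))$ is immediate (and is all that is needed).
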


\begin{proof}
By reasoning similar to that used in the proof of Lemma \ref{PP-lin-comb} one can show that the map \eqref{Psit->psi} is linear.

Let us fix $\Psit$ and $\psi$ related by the map \eqref{Psit->psi}. We know already that $\psi$ is continuous. Let us then show that $\psi$ is compactly supported.

If $\supp\Psit$ denotes the $\Mc_N$-support of $\Psit$, then obviously
\[
\supp\Psit\subset \bigcup_{y\in\supp \Psit} Z_y,
\]
where the sets $\{Z_y\}$ are defined in the paragraph just above the lemma. In other words, $\{Z_y\}_{y\in\supp\Psit}$ is an open cover of $\supp\Psit$. By definition of $\Hc^c_N$, the support of every its element is compact. Therefore the cover $\{Z_y\}_{y\in\supp\Psit}$ contains a finite open subcover $\{Z_{y_n}\}_{n=1,\ldots,m}$:
\begin{equation}
\supp\Psit\subset \bigcup_{n=1}^m Z_{y_n}.
\label{supp-Z}
\end{equation}

Suppose now that $\psi(x^\ab,\gamma^\oplus_{\ab\bb})\neq 0$ for some $(x^\ab,\gamma^\oplus_{\ab\bb})\in\R^N_>\times\Gamma^N_\R$. By continuity of $\psi$, $\Psit(y,\partial_{x^\ab})$ is a non-zero element of $H^\ot_{y}$, where $y=\iota^{-1}(x^\ab)$. Consequently, $(x^\ab)\in \iota(\supp\Psit)$.

On the other hand, if $\psi(x^\ab,\gamma^\oplus_{\ab\bb})\neq 0$, then $(\gamma^\oplus_{\ab\bb})\in \supp \psi_{(x^\ab)}$. But since $(x^\ab)\in\iota(\supp\Psit)$, by virtue of \eqref{supp-Z} there exists $n\in\{1,\ldots,m\}$ such that $(x^\ab)\in\iota(Z_{y_n})$. Then $\supp \psi_{(x^\ab)}\subset K_{y_n}$ (the sets $\{K_y\}$ are introduced just above the lemma). Consequently, $(\gamma^{\oplus}_{\ab\bb})\in K_{y_n}$.

We are then allowed to state that if $\psi(x^\ab,\gamma^\oplus_{\ab\bb})\neq 0$, then
\begin{align}
&(x^\ab)\in\iota(\supp\Psit), & & (\gamma^{\oplus}_{\ab\bb})\in \bigcup_{n=1}^m K_{y_n}.
\label{in-in}
\end{align}

Note now that $\iota(\supp\Psit)$ is compact, because it is the image of a compact set under a continuous map. The set $\bigcup_{n=1}^m K_{y_n}$ is compact being a union of a finite number of compact sets. The Cartesian product of $\iota(\supp\Psit)$ and $\bigcup_{n=1}^m K_{y_n}$ is then a {\em compact} subset of $\R^N_>\times \Gamma^N_\R$ and therefore it is closed. Because it is closed and $\psi(x^\ab,\gamma^\oplus_{\ab\bb})\neq 0$ implies \eqref{in-in}, then 
\begin{equation}
\supp \psi \subset \Big(\iota(\supp\Psit)\times \bigcup_{n=1}^m K_{y_n}\Big).
\label{supp-psi}
\end{equation}
We thus see that $\supp\psi$ is a closed subset of a compact set. Therefore $\supp\psi$ is compact. 

We just proved that the map \eqref{Psit->psi} is valued in $C^c(\R^N_>\times \Gamma^N_\R,\C)$. To show that the map is injective note that if $\psi$ is the value of the map at $\Psit$, then $\Psit$ can be unambiguously reconstructed from $\psi$ by means of the obvious generalization of the formula \eqref{P-from-p}.     

To finish the proof it remains to show that the map \eqref{Psit->psi} is surjective. To this end assume that $\psi\in C^c(\R^N_>\times \Gamma^N_\R,\C)$. If 
\begin{align*}
\pi_1:\R^N_>\times \Gamma^N_\R&\to\R^N_>, & \pi_2:\R^N_>\times \Gamma^N_\R&\to\Gamma^N_\R,\\
(x^\ab,\gamma^\oplus_{\ab\bb})&\mapsto(x^\ab), & (x^\ab,\gamma^\oplus_{\ab\bb})&\mapsto(\gamma^\oplus_{\ab\bb}),
\end{align*}
are canonical projections, then both sets $\pi_1(\supp\psi)$ and $\pi_2(\supp\psi)$ are compact being images of a compact set under continuous maps. It is not difficult to show that for every $(x^\ab)\in\R^N_>$, $\supp\psi_{(x^\ab)}$ is a subset of $\pi_2(\supp\psi)$---see the reasoning concerning the support of a function $h_x$ in Appendix \ref{app-proof}. This means that the Hilbert half-density $\Psit$ defined on $\Mc_N$ by $\psi$ with the help of the generalization of \eqref{P-from-p}, is of compact and slowly changing $\Gamma^N_\R$-support.            

On the other hand, if $(x^\ab)\not\in\pi_1(\supp\psi)$, then $\psi_{(x^\ab)}=0$. Therefore, if $\Psit$ is defined by $\psi$ as above, then $\Psit(y)\neq 0$ implies $y\in \iota^{-1}\big(\pi_1(\supp\psi)\big)$. $\supp\Psit$ is then contained in the compact (and closed) set $\iota^{-1}\big(\pi_1(\supp\psi)\big)$ and $\Psit$ is a half-density of compact $\Mc_N$-support.

Thus every $\psi\in C^c(\R^N_>\times \Gamma^N_\R,\C)$ defines by means of the generalization of \eqref{P-from-p} an element $\Psit\in\Hc^c_N$. The map \eqref{Psit->psi} is then surjective.
\end{proof}

By definition of $\Hc_N$, the space $\Hc^c_N$ is dense in $\Hc_N$. We will show now that $\Hc_N$ is isomorphic to a Hilbert space, which contains the space $C^c(\R^N_>\times \Gamma^N_\R,\C)$ as its dense subset.

Let us begin by expressing the inner product \eqref{HNc-scal} in terms of an iterated integral over $\R^N_>\times \Gamma^N_\R$. To this end consider an integrable scalar density $\Ft$ on $\Mc_N$. If $f$ is its coordinate representation in the global coordinate system $(x^\ab)$ defined by $\iota$ (see \eqref{F-cont}), then
\[
\int_{\Mc_N}\Ft=\int_{\R^N_>} f\,d\mu^N_L,
\]    
where $d\mu^N_L$ is the Lebesgue measure on $\R^N$. If $\Ft=\scl{\Psit'}{\Psit}$, where $\Psit',\Psit\in\Hc^c_N$, then combining Equations \eqref{PPpp}, \eqref{f-ppC} and \eqref{int_g^o+} we get 
\[
f(x^\ab)=\Ft(\iota^{-1}(x^\ab),\partial_{x^\ab})=\int_{\Gamma^N_\R} \overline{{\psi'}\!_{(x^\ab)}}\psi_{(x^\ab)} \,c^N\big(\bigtimes \Delta  d\mu_L\big).
\]  
Here $\psi'_{(x^\ab)}$ is the function on $\Gamma^N_\R$ related by the obvious generalization of \eqref{psi_xi} to the coordinate representation $\psi'$ of $\Psit'$ in the coordinates $(x^\ab)$, and $\bigtimes \Delta \,d\mu_L$ is the product of $N$ copies of $\Delta \,d\mu_L$. Taking into account the last two equations and the definition \eqref{HNc-scal}, we see that the inner product
\begin{equation}
\scal{\Psit'}{\Psit}=\int_{\R^N_>}\Big(\int_{\Gamma^N_\R} \overline{{\psi'}\!_{(x^\ab)}}\psi_{(x^\ab)} \,c^N\big(\bigtimes \Delta  d\mu_L\big)\Big)\,d\mu^N_L.
\label{scal-int}
\end{equation}

$\R^N_>\times \Gamma^N_\R$ is an open subset of $\R^{2N}$, and thereby a l.c.H. space. Therefore by virtue of the Riesz representation theorem, there exists a regular Borel measure $d\nu$ on $\R^N_>\times \Gamma^N_\R$ such that integrals defined by both $d\nu$ and $d\mu^N_L\times c^N(\bigtimes \Delta  d\mu_L)$ coincide on $C^c(\R^N_>\times \Gamma^N_\R)$. The measure $d\nu$ defines the Hilbert space $L^2(\R^N_>\times \Gamma^N_\R,d\nu)$, and $C^c(\R^N_>\times \Gamma^N_\R,\C)$ turns out to be a dense subset \cite{cohn}\footnote{Proposition 7.4.3 in \cite{cohn} concerns real functions. But by separating a complex function into its real and imaginary parts one can show that the proposition holds true in the complex case as well. \label{foot-R-C}} of this Hilbert space.

Let us denote by $\scal{\cdot}{\cdot}_{d\nu}$ the inner product  on $L^2(\R^N_>\times \Gamma^N_\R,d\nu)$. If $\psi',\psi$ are related to, respectively, $\Psit',\Psit\in\Hc^c_N$ by the map \eqref{Psit->psi}, then
\begin{multline*}
\scal{\psi'}{\psi}_{d\nu}=\int_{\R^N_>\times \Gamma^N_\R}\overline{\psi'}{\psi}\,d\nu=\int_{\R^N_>\times \Gamma^N_\R}\overline{\psi'}{\psi}\,\big(d\mu^N_L\times c^N(\bigtimes \Delta  d\mu_L)\big)=\\=\int_{\R^N_>}\Big(\int_{\Gamma^N_\R} \overline{{\psi'}\!_{(x^\ab)}}\psi_{(x^\ab)} \,c^N\big(\bigtimes \Delta  d\mu_L\big)\Big)\,d\mu^N_L,
\end{multline*}
where in the last step we used the Fubini-Tonelli theorem. Comparing the result above with \eqref{scal-int} we see that for every $\Psit',\Psit\in \Hc^c_N$ 
\[
\scal{\Psit'}{\Psit}=\scal{\psi'}{\psi}_{d\nu}.
\]      

We conclude that the map \eqref{Psit->psi} \emi is a linear bijection between linear dense subspaces of $\Hc_N$ and $L^2(\R^N_>\times \Gamma^N_\R,d\nu)$ and \emii preserves the inner products. Therefore the map can be unambiguously extended to a unitary map from $\Hc_N$ onto $L^2(\R^N_>\times \Gamma^N_\R,d\nu)$.

$\R^N_>\times \Gamma^N_\R$ is a second countable l.c.H. space being an open subset of $\R^{2N}$. As each regular measure on such a space is $\sigma$-finite \cite{cohn}, so is $d\nu$. On the other hand, each $\sigma$-finite Borel measure on a second-countable space defines a separable $L^2$ space \cite{spectr-qm}. This means that $L^2(\R^N_>\times \Gamma^N_\R,d\nu)$ is separable. $\Hc_N$ being isomorphic to the former Hilbert space, is separable as well. Consequently, $\Hf$ built over $\Mc=\R$ is separable, since it is defined as the countable orthogonal sum \eqref{Hf-df} of separable Hilbert spaces.  

There is also another important conclusion, which can be drawn from the results just obtained. Let us recall that on elements of $\Hc^c_N$ there are imposed seemingly strong conditions of compact $\Mc_N$-support and of compact and slowly changing $\Gamma^N_\R$-support. This fact may raise concerns about whether the resulting Hilbert space $\Hc_N$ is ``large enough'' from a physical point of view. But since $\Hc_N$ built over $\Mc=\R$ turned out to be isomorphic to $L^2(\R^N_>\times \Gamma^N_\R,d\nu)$, then at least in this case we can regard these concerns to be unfounded.

\section{Construction of the Hilbert space $\Kf$}


Let us fix a manifold $\Mc$, a metric signature $(p,p')$ such that $p+p'=\dim\Mc$ and a diffeomorphism invariant field \eqref{diff-inv} of invariant measures on $\Mc$. This field defines via \eqref{H^o+_y} the Hilbert space $H^\otimes_y$ for every $y\in\Mc_N$, where $N\geq 2$. Note however that the definition of $\Mc_N$ applied to the case $N=1$, gives the original manifold $\Mc$ (provided we identify $y=\{x\}$ with $x\in\Mc$). Then $H^\otimes_y$ coincides with $H_y$ defined by \eqref{Hx} (under the same identification). This observation allows us to simplify the presentation below by considering spaces $\{\Mc_N\}$ and corresponding Hilbert spaces for all $N\geq 1$.         

Let us then fix an integer $N\geq 1$ and consider a bundle-like set 
\begin{equation}
\mathbf{H}^\otimes:=\bigcup_{y\in\Mc_N}H^\otimes_y.
\label{bf-H^ot}
\end{equation}
Let $\Kc_N$ be a set, which consists of some special sections of $\mathbf{H}^\otimes$: a section $\Psi$ of $\mathbf{H}^\otimes$ belongs to $\Kc_N$ if 
\begin{enumerate}
\item the set
\begin{equation}
\{\ y\in\Mc_N \ | \ \Psi(y)\neq 0 \ \}
\label{Psi-neq-0}
\end{equation}
is countable;
\item the sum 
\begin{equation}
\sum_{y\in\Mc_N}||\Psi(y)||^2_y\equiv ||\Psi||^2,
\label{KN-norm}
\end{equation}
where $||\cdot||_y$ is the norm on $H^\otimes_y$, is finite (note that by virtue of the previous assumption, the uncountable sum above reduces to a sum of countable number of positive terms).
\end{enumerate}      

\begin{lm}
The map
\begin{equation}
\Kc_N\times \Kc_N\ni(\Psi',\Psi)\mapsto \scal{\Psi'}{\Psi}:=\sum_{y\in\Mc_N}\scal{\Psi'(y)}{\Psi(y)}_y \in \C
\label{KN-inner}
\end{equation}
is well-defined. $\Kc_N$ equipped with this map is a Hilbert space.  
\label{K1-Hilb}
\end{lm}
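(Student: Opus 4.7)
The plan is to recognize $\Kc_N$ as essentially the Hilbert-space direct sum (external $\ell^2$-sum) of the family $\{H^\otimes_y\}_{y\in\Nc_N}$ and to adapt the standard proof of completeness of such sums, paying attention to the countable-support requirement.

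First I would verify well-definedness of the map \eqref{KN-inner}. Given $\Psi,\Psi'\in\Kc_N$, the set
\[
S:=\{\,y\in\Nc_N \mid \Psi(y)\neq 0\,\}\cup \{\,y\in\Nc_N \mid \Psi'(y)\neq 0\,\}
\]
is countable, and the summand in \eqref{KN-inner} vanishes off $S$, so the uncountable sum reduces to a countable one. Absolute convergence follows by applying Cauchy--Schwarz in each $H^\otimes_y$, i.e.\ $|\scal{\Psi'(y)}{\Psi(y)}_y|\leq \|\Psi'(y)\|_y\|\Psi(y)\|_y$, and then the discrete Cauchy--Schwarz inequality for nonnegative sequences, giving $\sum_{y\in S}\|\Psi'(y)\|_y\|\Psi(y)\|_y\leq \|\Psi'\|\,\|\Psi\|<\infty$. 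Along the way this also shows $\Kc_N$ is a complex vector space: the support of $z\Psi+z'\Psi'$ is contained in $S$, hence countable, and $\|z\Psi+z'\Psi'\|\leq |z|\|\Psi\|+|z'|\|\Psi'\|$ by the Minkowski inequality in each $H^\otimes_y$ combined with the $\ell^2$ triangle inequality. Sesquilinearity, conjugate symmetry, and positive definiteness of \eqref{KN-inner} are inherited termwise from the inner products $\scal{\cdot}{\cdot}_y$.

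The substance of the lemma is completeness. Given a Cauchy sequence $(\Psi_n)\subset\Kc_N$, the pointwise estimate $\|\Psi_n(y)-\Psi_m(y)\|_y\leq \|\Psi_n-\Psi_m\|$ shows that $\bigl(\Psi_n(y)\bigr)$ is Cauchy in the (complete) Hilbert space $H^\otimes_y$ for every $y$; define $\Psi(y):=\lim_n\Psi_n(y)$. Next I would check that $\Psi\in\Kc_N$. Its support is contained in the countable set $\bigcup_n\{\,y\mid \Psi_n(y)\neq 0\,\}$, so it is countable. Finiteness of $\|\Psi\|$ follows from a standard Fatou-type argument: for any finite $F\subset\Nc_N$,
\[
\sum_{y\in F}\|\Psi(y)\|_y^2=\lim_n\sum_{y\in F}\|\Psi_n(y)\|_y^2\leq \sup_n\|\Psi_n\|^2<\infty,
\]
the last bound holding because Cauchy sequences are bounded; taking the supremum over finite $F$ yields $\|\Psi\|^2<\infty$.

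Finally, to show $\Psi_n\to\Psi$ in $\Kc_N$, fix $\varepsilon>0$ and choose $n_0$ with $\|\Psi_n-\Psi_m\|^2<\varepsilon$ for $n,m\geq n_0$. For any finite $F\subset\Nc_N$ and $n,m\geq n_0$ one has $\sum_{y\in F}\|\Psi_n(y)-\Psi_m(y)\|_y^2<\varepsilon$; letting $m\to\infty$ and using continuity of $\|\cdot\|_y$ gives $\sum_{y\in F}\|\Psi_n(y)-\Psi(y)\|_y^2\leq\varepsilon$, and taking the supremum over $F$ yields $\|\Psi_n-\Psi\|^2\leq\varepsilon$. I expect the only subtle point to be bookkeeping around the countable-support condition, since everything else is the textbook argument for $\ell^2$-direct sums; the key observation there is simply that countable unions of countable sets are countable, so that both the support of a limit and the support of a sum remain countable.
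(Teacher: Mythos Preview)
Your proposal is correct and follows essentially the same approach as the paper's proof in Appendix~\ref{K1-proof}: both recognize $\Kc_N$ as an $\ell^2$-type direct sum and run the standard argument (absolute convergence of the pairing, closure under addition, pointwise limits plus a Fatou/partial-sum passage for completeness), with only cosmetic differences---the paper enumerates the countable support and uses the elementary bound $2ab\leq a^2+b^2$, whereas you phrase things via suprema over finite subsets and invoke the discrete Cauchy--Schwarz/Minkowski inequalities.
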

\noindent Let us recall that $\scal{\cdot}{\cdot}_y$ in \eqref{KN-inner} is the inner product on $H^\ot_y$.  Note also that the map \eqref{KN-inner} can be expressed alternatively as
\[
\scal{\Psi'}{\Psi}=\int_{\Mc_N}\scal{\Psi'(y)}{\Psi(y)}_y\,d\mu_0,
\]
where $d\mu_0$ is the counting measure on $\Mc_N$, being a diffeomorphism invariant measure on the manifold.  

A proof of Lemma \ref{K1-Hilb}, as following the well-known case of the Hilbert sequence space $l^2$, is relegated to Appendix \ref{K1-proof}.

For many practical purposes it would be convenient to have a dense linear subspace of $\Kc_N$, which would contain sufficiently regular elements of the Hilbert space. Denote by $\Kc^c_N$ a set, which consists of all elements $\{\Psi\}$ of $\Kc_N$ of the following property: for every $y\in\Mc_N$, the value $\Psi(y)\in H^\otimes_y$ is (an equivalence class of) an element of $C^c(\Gamma^\oplus_y,\C)$, i.e., a complex continuous function on $\Gamma^\oplus_y$ of compact support. {Let $\Kc^{cf}_N$ be the set consisting of all elements of $\Kc^c_N$, for which the set \eqref{Psi-neq-0} is finite.}   

{\begin{lm}
$\Kc^{cf}_N$ is a dense linear subspace of $\Kc_N$.   
\label{Kc^c_N}
\end{lm}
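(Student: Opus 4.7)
My plan is to split the proof into two routine parts: showing that $\Kc^{cf}_N$ is a linear subspace, and then showing that it is dense. The linearity part is almost immediate: if $\Psi,\Psi'\in\Kc^{cf}_N$ and $z,z'\in\C$, then for each $y$ the value $(z\Psi+z'\Psi')(y)=z\Psi(y)+z'\Psi'(y)$ is a linear combination of two continuous compactly supported functions on $\Gamma^\oplus_y$, hence continuous with support contained in the union of the two supports---still compact. Moreover, the set of $y$ where $z\Psi(y)+z'\Psi'(y)\neq 0$ is contained in the union of the (finite) sets where $\Psi(y)\neq 0$ and $\Psi'(y)\neq 0$, hence is finite. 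Together with the already established containment $\Kc^{cf}_N\subset\Kc^c_N\subset\Kc_N$, this yields that $\Kc^{cf}_N$ is a linear subspace.

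For density, fix $\Psi\in\Kc_N$ and $\eps>0$. By the first defining condition of $\Kc_N$, the set $S:=\{y\in\Nc_N\mid\Psi(y)\neq 0\}$ is countable; enumerate its elements (possibly finitely many) as $(y_n)$. Since $\sum_n\|\Psi(y_n)\|^2_{y_n}=\|\Psi\|^2<\infty$ by the second defining condition, I can choose $N_0$ so large that the tail satisfies $\sum_{n>N_0}\|\Psi(y_n)\|^2_{y_n}<\eps^2/2$. The main input I would like to invoke is the well-known fact (used already in the paper in connection with footnote referring to Cohn, Proposition 7.4.3, and the discussion around $L^2(\R^N_>\times\Gamma^N_\R,d\nu)$) that $C^c(\Gamma^\oplus_y,\C)$ is a dense subspace of the Hilbert space $H^\otimes_y=L^2(\Gamma^\oplus_y,d\mu^\times_y)$, because $d\mu^\times_y$ is a regular Borel measure on a second countable l.c.H. space.

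Applying this density fact for each $n\leq N_0$, I pick $\phi_n\in C^c(\Gamma^\oplus_{y_n},\C)$ (viewed as a representative of an element of $H^\otimes_{y_n}$) with $\|\Psi(y_n)-\phi_n\|^2_{y_n}<\eps^2/(2N_0)$. Then I define a section $\Phi$ of $\mathbf{H}^\otimes$ by $\Phi(y_n):=\phi_n$ for $n\leq N_0$ and $\Phi(y):=0$ elsewhere. By construction the set $\{y\mid\Phi(y)\neq 0\}$ is finite and each $\Phi(y)$ is continuous compactly supported on $\Gamma^\oplus_y$ (the zero function trivially qualifies), so $\Phi\in\Kc^{cf}_N$. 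Finally, splitting the sum at $N_0$,
\[
\|\Psi-\Phi\|^2=\sum_{n\leq N_0}\|\Psi(y_n)-\phi_n\|^2_{y_n}+\sum_{n>N_0}\|\Psi(y_n)\|^2_{y_n}<\frac{\eps^2}{2}+\frac{\eps^2}{2}=\eps^2,
\]
which establishes density.

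No step is really hard; the only non-trivial ingredient is density of $C^c$ in $L^2$ for regular Borel measures on l.c.H. spaces, but this is a standard result already invoked elsewhere in the paper, so I would simply cite it rather than reprove it.
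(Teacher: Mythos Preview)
Your proof is correct. The key ingredient---density of $C^c(\Gamma^\oplus_y,\C)$ in $H^\otimes_y=L^2(\Gamma^\oplus_y,d\mu^\times_y)$ for a regular Borel measure on a l.c.H.\ space---is exactly what the paper invokes as well, and your linearity argument is essentially the same.

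Where you differ is in the order of the two approximation steps. The paper first shows that the larger space $\Kc^c_N$ (countable support, each value in $C^c$) is dense in $\Kc_N$, and only afterwards truncates to finitely many points to reach $\Kc^{cf}_N$. Because in that first step the approximant $\Psi_m$ still has countably many nonzero values, the paper needs an extra device to guarantee $\|\Psi_m\|<\infty$: it rescales each chosen $\psi'_{nm}\in C^c$ to have exactly the same norm as $\Psi(y_n)$, so that $\|\Psi_m\|=\|\Psi\|$. You reverse the order: truncate first to $N_0$ points using convergence of $\sum_n\|\Psi(y_n)\|_{y_n}^2$, and then approximate only those finitely many values. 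This makes norm-finiteness of the approximant automatic and eliminates the need for the normalization trick, giving a slightly shorter and more direct argument. Both routes rest on the same standard fact and are equally valid.
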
}

\begin{proof}
{Since every set $C^c(\Gamma^\oplus_y,\C)$ is a linear space, then both $\Kc^c_N$ and $\Kc^{cf}_N$ are linear subspaces of $\Kc_N$. It remains then to prove that $\Kc^{cf}_N$ is dense in $\Kc_N$. We will first show that $\Kc^c_N$ is dense in $\Kc_N$}.      

Let us fix $\Psi\in\Kc_N$. Then the set \eqref{Psi-neq-0} is countable and all its elements can be ordered to form a sequence $(y_n)$.  

We know that every $\Gamma^\oplus_y\cong \Gamma^N_\R$ is l.c.H. space, and the measure $d\mu^\times_y$ is regular and Borel. These imply that the space $C^c(\Gamma^\oplus_y,\C)$ is a dense subset of $H^\otimes_y=L^2(\Gamma^\oplus_y,d\mu^\times_y)$ \cite{cohn}\footnote{See Footnote \ref{foot-R-C}.}.      

This means that if $\psi$ is a non-zero element of $H^\otimes_y$, then there exists a sequence $(\psi'_m)$ of non-zero elements of $C^c(\Gamma^\oplus_y,\C)$, which converges to $\psi$. Then the sequence $(||\psi'_m||_y)$, where $||\cdot||_y$ is the norm on $H^\otimes_y$, converges to $||\psi||_y$. Consequently, functions 
\[
\psi_m=\frac{||\psi||_y}{||\psi'_m||_y}\psi'_m\in C^c(\Gamma^\oplus_y,\C)
\]          
form a sequence, which converges to $\psi$ and for every $m$, $||\psi_m||_y=||\psi||_y$ (in other words, all elements of the sequence $(\psi_m)$ belong to the sphere of radius $||\psi||_y$ centered at zero of $H^\ot_y$).   

Let us fix a natural number $m>0$. The conclusions above allow us to choose for every $y_n$ a function $\psi_{nm}\in C^c(\Gamma^\oplus_{y_n},\C)$ such that   
\begin{align*}
||\psi_{nm}-\Psi(y_n)||_{y_n}&<\frac{1}{\sqrt{2^n}m}, & ||\psi_{nm}||_{y_n}&=||\Psi(y_n)||_{y_n}.
\end{align*}

Define now a section $\Psi_m$ of $\mathbf{H}^\otimes$:  
\[
\Psi_m(y)=
\begin{cases}
\psi_{nm} & \text{if $y=y_n$}\\
0 & \text{otherwise}
\end{cases}.
\]
Obviously, for every $\Psi_m$ the set \eqref{Psi-neq-0} is countable and
\[
||\Psi_m||^2=\sum_{y\in\Mc_N}||\Psi_m(y)||^2_y=\sum_{y\in\Mc_N}||\Psi(y)||^2_y=||\Psi||^2<\infty. 
\] 
Thus $\Psi_m$ is an element of $\Kc^c_N$.  

On the other hand, the norm of $\Psi_m-\Psi$ in the Hilbert space $\Kc_N$, can be bounded from above as follows:
\[
||\Psi_m-\Psi||^2=\sum_{y\in \Mc_N}||\Psi_m(y)-\Psi(y)||^2_y=\sum_{n=1}^\infty||\psi_{nm}-\Psi(y_n)||^2_{y_n}<\sum_{n=1}^\infty \frac{1}{2^nm^2}=\frac{1}{m^2}.
\]

{This result means that $\Kc^c_N$ is dense in $\Kc_N$. Note now that for every $\Psi\in \Kc^c_N $ and for every $\eps>0$, there exists $\Psi_f\in \Kc^{cf}_N$ such that $||\Psi-\Psi_f||<\eps$---if the set \eqref{Psi-neq-0} for $\Psi$ is infinite, then to obtain the desired $\Psi_f$ it is enough to zero out values of $\Psi$ at appropriately chosen points in $\Mc_N$. Consequently, $\Kc^{cf}_N$ is also dense in $\Kc_N$. }
\end{proof}

Taking into account experiences gained from the study of the spaces $\{\Hc_N\}$, it is easy to realize that
\begin{enumerate}
\item if the Hilbert spaces $\Kc_N$ and $\check{\Kc}_N$ are constructed as above, starting from two distinct diffeomorphism invariant measure fields, then there exists a number $c>0$ such that 
\begin{equation}
{\Kc_N}\ni\Psi\mapsto\frac{\Psi}{\sqrt{c^N}}\in\check{\Kc}_N 
\label{KN-c}
\end{equation}
is a unitary map.
\item if $\theta\in\Diff(\Mc)$ and $\Theta\in\Diff_\Mc(\Mc_N)$ are related diffeomorphisms, then the pull-back  
\begin{align*}
\Psi&\mapsto\Theta^*\Psi,\\
\big(\Theta^*\Psi\big)(y)&:=(\Theta^{-1})^{t *\star}\,\Psi\big(\Theta(y)\big)
\end{align*}
defined on $\Kc^{cf}_N$ is a linear bijection onto $\Kc^{cf}_N$  and preserves the inner product \eqref{KN-inner}. Consequently, with help of Lemma \ref{Kc^c_N} the pull-back can be uniquely extended to a unitary map $U_N(\theta^{-1})$ on $\Kc_N$. Moreover, $\theta\mapsto U_N(\theta)$ is a unitary representation of $\Diff(\Mc)$ on the Hilbert space. 
\end{enumerate} 

Each Hilbert space $H^\ot_y$ is separable, because every Hilbert space $H_x$ is separable \cite{qs-metr}. Let $\{\psi_{yn}\}_{n\in\N}$ be a basis of $H^\ot_{y}$ and let $\Psi_{yn}$ be an element of $\Kc_N$ such that 
\[
\Psi_{yn}(y')=
\begin{cases}
\psi_{yn} & \text{if $y'=y$},\\
0 & \text{otherwise}
\end{cases}.
\]  
It is not difficult to demonstrate that $\{\Psi_{yn}\}_{y\in\Mc_N,\,n\in\N}$ is an orthonormal basis\footnote{To prove that the linear span of $\{\Psi_{yn}\}_{y\in\Mc_N,\,n\in\N}$ is dense in $\Kc_N$, one can use a reasoning similar to that used {in the proof of Lemma \ref{Kc^c_N}}.} of $\Kc_N$, which thereby is a non-separable Hilbert space.      

Now we are able to merge the spaces $\{\Kc_N\}$ into the Hilbert space $\Kf$ in the same way, the spaces $\{\Hc_N\}$ were merged into $\Hf$, that is, by means of an orthogonal sum:
\[
\Kf:=\bigoplus_{N=1}^\infty \Kc_N.
\]  
Let us recall that all the Hilbert spaces $\{\Kc_N\}$ above stem from the same diffeomorphism invariant field \eqref{diff-inv} of invariant measures.

It is now a simple exercise 
\begin{enumerate}
\item to show that $\Kf$ is unique up to natural isomorphisms built of the unitary maps  \eqref{KN-c}; 
\item to construct a unitary representation of $\Diff(\Mc)$ on $\Kf$ from the representations $\{U_N\}$ just defined.   
\end{enumerate}
Note also that $\Kf$ is a non-separable Hilbert space being built from non-separable Hilbert spaces $\{\Kc_N\}$.  

\section{Summary and outlook}

In this paper we constructed two Hilbert spaces $\Hf$ and $\Kf$ over the set $\Qc(\Mc)$ of all metrics of arbitrary signature $(p,p')$, defined on a (smooth connected paracompact) manifold $\Mc$ \footnote{Note, however, that both spaces $\Hf$ and $\Kf$ do exist, even if the set $\Qc(\Mc)$ is empty.}. Each space was obtained by merging the tensor products $\{H_{x_1}\ot\ldots\ot H_{x_N}\}_{N=1,2,\ldots}$---every state in $\Hf$ was built of an uncountable number of elements of these products, while every state in $\Kf$ from a countable number of them. 

The Hilbert spaces $\{H_x\}$ were defined by means of a diffeomorphism invariant field $x\mapsto d\mu_x$ of invariant measures. The diffeomorphism invariance of this measure field resulted in existence of a unitary representation of the diffeomorphism group $\Diff(\Mc)$ on each Hilbert space $\Hf$ and $\Kf$. On the other hand, the measure field is unique up to a multiplicative constant, which resulted in uniqueness of each of the Hilbert spaces $\Hf$ and $\Kf$ up to distinguished isomorphisms. The two Hilbert spaces $\{\Hf\}$ built over $\Mc=\R$ turned out to be separable, while all the Hilbert spaces $\{\Kf\}$ to be non-separable.       

{The Hilbert spaces $\Hf$ and $\Kf$ are constructed on the basis of the {\em pointwise} d.o.f. \eqref{dof}. Since we are going to use the spaces obtained for signature $(3,0)$ to canonical quantization of the ADM formalism, there arises a question whether such pointwise d.o.f. are suitable for this purpose. At the present moment we are unable to justify this choice on physical grounds. On the other hand, from a mathematical point of view, the map \eqref{dof} seems to be very simple and natural. Let us also note that the application of the d.o.f. \eqref{dof} allowed to construct the two Hilbert spaces, both equipped with diffeomorphism invariant inner products, while (to the best of our knowledge) similar constructions based on other d.o.f. defined on $\Qc(\Mc)$, have not been known so far.}

Let us now present an outlook to future research.

The most important question is whether either $\Hf$ or $\Kf$ {constructed} in the case of signature $(3,0)$, can be used for {the} quantization of the ADM formalism. As emphasized in the introduction to this paper, there is no guarantee that the answer to this question is in affirmative. The first step to be done to clarify this issue, is an attempt to define on the Hilbert spaces operators \cite{prep} related to the ADM canonical variables. As it seems, the fact that each measure $d\mu_x$ is an invariant measure on the homogeneous space $\Gamma_x$, should result in self-adjointness of operators related to the momentum variable.  

Other issues we left open here are: \emi the relation between each  Hilbert space $\Hc_N$ (being a building block of $\Hf$) and the set  of all ``square integrable'' Hilbert half-densities on $\Mc_N$, \emii the question whether each space $\Hc_N$ generated by the space $\Hc^c_N$ of special Hilbert half-densities, is ``large enough'' from a physical point of view\footnote{Recall that in the case $\Mc=\R$ the answer to this question is in affirmative.} and \emiii the question whether all the Hilbert spaces $\{\Hf\}$ are separable.

In this paper we considered the bundle-like sets $\tilde{\mathbf{H}}$, $\tilde{\mathbf{H}}^\ot$ and $\mathbf{H}^\ot$ defined, respectively, by the formulas \eqref{t-bfH}, \eqref{t-bfH^ot} and \eqref{bf-H^ot}. It is interesting, at least from a mathematical point of view, whether these spaces can be endowed with local trivializations, which would make them genuine bundles. In particular, it is interesting, whether the set $\mathbf{H}^\ot$ is a Hilbert bundle (see e.g. \cite{hilb_bund}) over $\Mc_N$.

Let us emphasize that the Hilbert spaces $\Hf$ and $\Kf$ are distinctly different from the space $\Sc$ of quantum states built in \cite{qs-metr} by means of the Kijowski's projective method over the same set $\Qc(\Mc)$ of metrics. To construct $\Sc$, we extended each Hilbert space in $\{H_{x_1}\ot\ldots\ot H_{x_N}\}_{N=1,2,\ldots}$ to a larger space $\Sc_\la$, where $\la\equiv\{x_1,\ldots,x_N\}$. Namely, the space $\Sc_\la$ was defined as the set of all algebraic states on the $C^*$-algebra $\Bc_\la$ of all bounded operators on $H_\la\equiv H_{x_1}\ot\ldots\ot H_{x_N}$. Since all the sets $\{\Sc_\la\}$ form naturally a projective family, the space $\Sc$ were obtained as the projective limit of the family. As a result, the space $\Sc$ is not a Hilbert space, but it is rather a convex set of all algebraic states on a ``large'' $C^*$-algebra, obtained by merging all the algebras $\{\Bc_\la\}$ \cite{mod-proj}. 

Despite these differences, the spaces of quantum states: $\Hf$, $\Kf$ and $\Sc$, are constructed of the same building blocks (being the Hilbert spaces $\{H_{x_1}\ot\ldots\ot H_{x_N}\}_{N=1,2,\ldots}$) and in our opinion it is worthwhile to explore more closely the relations between these three spaces.

\paragraph{Acknowledgments} I am very grateful to Jerzy Kijowski, {Jerzy Lewandowski} and Piotr So{\l}tan for valuable discussions and help. This work was partially supported by the Polish National Science Centre grant No. 2018/30/Q/ST2/00811.

\appendix

\section{The set $\Mc_N$ as a manifold }

Let us fix an integer $N\geq 2$ and a smooth connected paracompact manifold $\Mc$ and define
\[
\Mc_N:=\big\{ \ \{x_1,\ldots,x_N\}\equiv\{x_K\}\subset \Mc \ | \ \text{$x_I\neq x_J$ for $I\neq J$} \ \big\}.
\]

\subsection{Smooth atlas on $\Mc_N$ \label{N_N}}

Our goal in this section is to define a smooth atlas on $\Mc_N$, which will allow us to treat this set a smooth manifold.   

\paragraph{A submanifold of $\Mc^N$} To this end let us consider the following set:
\[
\Mc^N_0:=\{\ (x_1,\ldots,x_N)\equiv(x_K)\in\Mc^N \ | \ \text{$x_I\neq x_J$ for $I\neq J$} \ \}.
\] 
$\Mc^N_0$ is an open subset of $\Mc^N$. Indeed, if $(x_K)$ is an arbitrary point in $\Mc^N_0$, then for each $x_I\in(x_K)$ there exists an open neighborhood $U_I\subset\Mc$ of $x_I$ such that 
\begin{equation}
\forall_{I\neq J} \quad {U}_I\cap {U}_J=\varnothing. 
\label{clo-dis}
\end{equation}
Then 
\[
U_1\times\ldots \times U_N\equiv \bigtimes U_K
\]
is an open subset of $\Mc^N$, contains the point $(x_K)$ and is contained in $\Mc^N_0$. 

Consequently, $\Mc^N_0$ is a smooth manifold being an open subset of $\Mc^N$.

\paragraph{Action of permutations on $\Mc^N_0$} Let $\Sigma$ be the group of all permutations of the finite sequence $(1,2,3,\ldots,N)$. Given $\sigma\in\Sigma$, the following map                
\begin{equation}
\Mc^N_0\ni (x_K)\mapsto \bar{\sigma}(x_K):=(x_{\sigma(1)},\ldots,x_{\sigma(N)})\equiv (x_{\sigma(K)})\in\Mc^N_0
\label{bar-si}
\end{equation}
is a diffeomorphism on $\Mc^N_0$. 

To see this let us fix a point $(x^0_K)\in\Mc^N_0$ and for every $x^0_I\in (x^0_K)$ choose its open neighborhood $U_I$ is such a way that \emi the neighborhoods $\{U_I\}$ satisfy \eqref{clo-dis} and \emii each $U_I$ is a domain of an $\R^{\dim\Mc}$-valued map $\varphi_I$, which defines a coordinate system $(x^i_I)_{i=1,\ldots,\dim\Mc}$ on $U_I$. Then the map
\begin{equation}
\bigtimes\varphi_K\equiv\varphi_1\times\ldots\times\varphi_N:U_1\times\ldots \times U_N\to \R^{N\dim\Mc}
\label{map-Mind}
\end{equation}
defines a local coordinate system $(x^{i_1}_1,\ldots,x^{i_N}_N)\equiv (x^{i_K}_K)$ on $\bigtimes U_K$.

Clearly, the following map        
\begin{equation}
(\varphi_{\sigma(1)}\times\ldots\times\varphi_{\sigma(N)})\circ\bar{\sigma}\circ(\varphi^{-1}_1\times\ldots\times\varphi^{-1}_N):(x^{i_1}_1,\ldots,x^{i_N}_N)\mapsto (x^{i_1}_{\sigma(1)},\ldots,x^{i_N}_{\sigma(N)}) 
\label{sm-map}
\end{equation}
between appropriate open subsets of $\R^{N\dim\Mc}$ is smooth. Since $\varphi_{\sigma(1)}\times\ldots\times\varphi_{\sigma(N)}$ is a map on $U_{\sigma(1)}\times\ldots\times U_{\sigma(N)}$ of the same sort as \eqref{map-Mind}, smoothness of \eqref{sm-map} means that the map \eqref{bar-si} is smooth as well. Consequently, the inverse map $\bar{\sigma}^{-1}$ is also smooth, since it is given by the inverse permutation $\sigma^{-1}$. Thus we see that, indeed, \eqref{bar-si} is a diffeomorphism.

\paragraph{Natural projection from $\Mc^N_0$ onto $\Mc_N$}

The map 
\begin{equation}
\Mc^N_0\ni (x_K)\mapsto \pi(x_K):=\{x_K\}\in \Mc_N,
\label{pi}
\end{equation}
is a natural surjection (projection) from $\Mc^N_0$ onto $\Mc_N$, which ``forgets'' about the ordering of points in $(x_K)$. This map  will be used to define the smooth atlas on $\Mc_N$. 

It follows immediately\footnote{If $(x_K)\in\Mc^N_0$, then the set
\[
\{\ \bar{\sigma}(x_K) \ | \ \sigma\in\Sigma\ \}\subset \Mc^N_0
\]  
represents the {\em unordered} set $\{x_K\}$ of pairwise distinct points of $\Mc$. Under this identification, $\Mc_N$ is the set of all orbits of the action \eqref{bar-si} of the group $\Sigma$ on $\Mc^N_0$.} from \eqref{pi} that for every $\{x_K\}\in \Mc_N$ and for every subset $U$ of $\Mc^N_0$,
\begin{align}
\pi^{-1}(\{x_K\})&=\{ \ \bar{\sigma}(x_K) \ | \ \sigma\in\Sigma \ \}, & \pi^{-1}\big(\pi(U)\big)&=\bigcup_{\sigma\in\Sigma}\bar{\sigma}(U).
\label{pi--1}
\end{align}

\paragraph{Topology on $\Mc_N$} In order to define the smooth atlas on $\Mc_N$, let us first equip the set with a suitable topology: we will say that a set $Z\subset \Mc_N$ is open, if its preimage under $\pi$, $\pi^{-1}(Z)$, is an open subset of $\Mc^N_0$. This immediately means that $\pi$ becomes a continuous map. Moreover, the image $\pi(U)$ of every open set $U\subset \Mc^N_0$ is open---this is because by virtue of the second of Equations \eqref{pi--1}, the preimage of $\pi(U)$ is a union of open sets in $\Mc^N_0$. 

To show that the topology just introduced is Hausdorff, consider two distinct elements $\{x_K\}$ and $\{x'_K\}$ of $\Mc_N$. Since $\Mc$ is Hausdorff, for every $x\in\{x_K\}\cup\{x'_K\}$ there exists its open neighborhood $U_x\subset\Mc$ such that $U_x\cap U_{\check{x}}=\varnothing$ if $x\neq\check{x}$. Because $\{x_K\}\neq \{x'_K\}$, there exists $x'\in\{x'_K\}$, which does not belong to $\{x_K\}$. Consequently, $U_{x'}$ is disjoint with every $U_{x_I}$, $x_I\in\{x_K\}$,  and therefore for each two permutation $\sigma,\sigma'\in\Sigma$, 
\begin{equation}
\bar{\sigma}\big(\bigtimes U_{x_K}\big)\cap\bar{\sigma}'\big(\bigtimes U_{x'_K}\big)=\varnothing.
\label{Ux-Ux'}
\end{equation}
 
It is clear that $\pi\big(\bigtimes U_{x_K}\big)$ and $\pi\big(\bigtimes U_{x'_K}\big)$ are open neighborhoods of, respectively, $\{x_K\}$ and $\{x'_K\}$. Suppose that the neighborhoods are not disjoint. Then their preimages under the surjection $\pi$ are also not disjoint:
\begin{multline*}
\varnothing\neq \pi^{-1}\big(\pi\big(\bigtimes U_{x_K}\big)\big)\cap \pi^{-1}\big(\pi\big(\bigtimes U_{x'_K}\big)\big)=\\=\Big(\bigcup_{\sigma\in\Sigma}\bar{\sigma}\big(\bigtimes U_{x_K}\big)\Big)\cap\Big(\bigcup_{\sigma'\in\Sigma}\bar{\sigma}'\big(\bigtimes U_{x'_K}\big)\Big)
\end{multline*}
(here we used the second of Equations  \eqref{pi--1}). But this contradicts Equation \eqref{Ux-Ux'}, which means that the neighborhoods under consideration are disjoint. The topology on $\Mc_N$ is thus Hausdorff.

\paragraph{The atlas on $\Mc_N$} Consider again the domain of the map \eqref{map-Mind}, keeping in mind that for $I\neq J$ the sets ${U}_I$ and ${U}_J$ are disjoint. It turns out that the map $\pi|_{\bigtimes U_K}$, that is, the map $\pi$ restricted to $\bigtimes U_K$, is a bijection onto its image. Indeed, it follows from the first of Equations \eqref{pi--1} that 
\begin{equation}
\pi(x_K)=\pi(x'_K)
\label{pipi}
\end{equation}
if and only if 
\begin{equation}
(x'_K)=\bar{\sigma}(x_K)
\label{x'-pix}
\end{equation}
for some $\sigma\in\Sigma$. But since the sets $\{U_I\}$ are pairwise disjoint, the intersection 
\begin{equation}
\big(\bigtimes U_K\big)\cap\bar{\sigma}\big(\bigtimes U_K\big)=\varnothing
\label{U-sigU}
\end{equation}
for every $\sigma\in\Sigma$ except the identity permutation. This means that if \eqref{pipi} holds for two elements of $\bigtimes U_K$, then the elements coincide.             

Denote by $\pi^{-1}_{\!\bigtimes\!U_K}$ the map from $\pi(\bigtimes U_K)$ onto $\bigtimes U_K$ inverse to the map $\pi|_{\bigtimes U_K}$. Let $U$ be an open subset of the (open) set $\bigtimes U_K$. Then the preimage of $U$ under $\pi^{-1}_{\!\bigtimes\!U_K}$
\[
(\pi^{-1}_{\!\bigtimes\!U_K})^{-1}(U)=\pi(U)
\]   
and therefore is an open subset of $\Mc_N$ (see the second  of Equations \eqref{pi--1}). This means that $\pi^{-1}_{\!\bigtimes\!U_K}$ is a {continuous} map. Since the map $\pi$ is also continuous, $\pi^{-1}_{\!\bigtimes\!U_K}$ is a homeomorphism.
  
This property of $\pi^{-1}_{\!\bigtimes\!U_K}$ allows us to define a local coordinate system on $\Mc_N$: given a map \eqref{map-Mind}, the composition
\begin{equation}
\Phi\equiv\big(\bigtimes\varphi_K\big)\circ \pi^{-1}_{\!\bigtimes\!U_K}: \pi\big(\bigtimes U_K\big)\to \R^{N\dim\Mc}
\label{coor-N}   
\end{equation}
is a homeomorphism onto its image and defines thereby a local coordinate system on $\Mc_N$. Domains of all maps of the form \eqref{coor-N} constitute an open cover of $\Mc_N$. Therefore charts given by all maps \eqref{coor-N} and their domains, form a continuous atlas $\Ac$ on $\Mc_N$. 

\paragraph{$\Ac$ is smooth} Let us show now that $\Ac$ is also smooth. To this end consider the map \eqref{coor-N} and an other one of this sort:    
\begin{equation}
\Phi'\equiv\big(\bigtimes\varphi'_K\big)\circ \pi^{-1}_{\bigtimes U'_K}: \pi\big(\bigtimes U'_K\big)\to \R^{N\dim\Mc},
\label{coor'-N}
\end{equation}
and suppose that the domains of $\Phi$ and $\Phi'$ are not disjoint:
\begin{equation}
Z\equiv\pi\big(\bigtimes U_K\big)\cap \pi\big(\bigtimes U'_K\big)\neq \varnothing.
\label{piU-piU'}
\end{equation}
Our goal now is to show that the transition function $\Phi'\circ \Phi^{-1}$ related to $Z$ is smooth. 

Fix $(x'_K)\in\bigtimes U'_K$. Then $\pi(x'_K)\in Z$ if and only if there exists $(x_K)\in\bigtimes U_K$ satisfying Equation \eqref{pipi}. We know already that \eqref{pipi} is equivalent to Equation \eqref{x'-pix} holding for some $\sigma\in\Sigma$. We are then allowed to state that $\pi(x'_K)\in Z$ if and only if there exists $\sigma\in\Sigma$ such that  
\[
(x'_K)\in \big(\bigtimes U'_K\big)\cap \bar{\sigma}\big(\bigtimes U_K\big)\equiv U_\sigma,
\]      
or, equivalently, if and only if
\[
(x'_K)\in \bigcup_{\sigma\in\Sigma} U_\sigma\subset \bigtimes U'_K.
\]
This fact together with the inclusion $Z\subset \pi(\Mc^N_0)$ (see \eqref{piU-piU'}) mean that
\begin{equation}
Z=\pi\Big(\bigcup_{\sigma\in\Sigma} U_\sigma\Big)=\bigcup_{\sigma\in\Sigma} \pi(U_\sigma).
\label{Z-piU}
\end{equation}
Applying \eqref{U-sigU} we see that if $\sigma\neq\sigma'$, then
\begin{equation}
U_\sigma\cap U_{\sigma'}=\varnothing.
\label{Us-U's}
\end{equation}
But $\pi$ restricted to $\bigtimes U'_K$ is injective and therefore the sets $\{\pi(U_\sigma)\}$ appearing at the r.h.s. of \eqref{Z-piU} are pairwise disjoint.   

Now it is enough to find the transition function $\Phi'\circ\Phi^{-1}$ on each non-empty set $\pi(U_\sigma)$. Consider then 
\begin{equation}
(x'_K)=\bar{\sigma}(x_K)\in U_\sigma,
\label{xx'}
\end{equation}
where $(x'_K)\in \bigtimes U'_K$ and $(x_K)\in \bigtimes U_K$. Then $\{x'_K\}=\pi(x'_K)=\pi(x_K)=\{x_K\}$ and  
\begin{align*}
\Phi'(\{x'_K\})&=\big(\bigtimes \varphi'_K\big)(x'_K)=(x^{\prime i_K}_K)\in\R^{N\dim\Mc},\\
\Phi(\{x_K\})&=(\bigtimes \varphi_K\big)(x_K)=(x^{i_K}_K)\in\R^{N\dim\Mc}.
\end{align*}
Thus $(x^{\prime i_K}_K)$ and $(x^{i_K}_K)$ are values of coordinates defined by, respectively, $\Phi'$ and $\Phi$, of the same point $\{x'_K\}=\{x_K\}$. Using \eqref{xx'} we obtain the following relation between the values:
\begin{equation}
(x^{\prime i_K}_K)=\big(\bigtimes \varphi'_K\big)(x'_K)=\Big(\big(\bigtimes \varphi'_K\big)\circ \bar{\sigma}\Big)(x_K)=\Big(\big(\bigtimes \varphi'_K\big)\circ \bar{\sigma}\circ\big(\bigtimes \varphi_K\big)^{-1}\Big)(x^{i_K}_K)
\label{sig-expr}
\end{equation}
Obviously, this relation is nothing else but the value of the transition function $\Phi'\circ\Phi^{-1}$ at $(x^{i_K}_K)$. We thus conclude that $\Phi'\circ\Phi^{-1}$ on the set $\Phi\big(\pi(U_\sigma)\big)$ is smooth, since it coincides with the coordinate expression \eqref{sig-expr} of the diffeomorphism $\bar{\sigma}$---given $I$,
\begin{equation}
(x^{\prime i_I}_I)=\varphi^{\prime }_I\circ \varphi^{-1}_{\sigma(I)}(x^{j_I}_{\sigma(I)}).
\label{x'i-xi}
\end{equation}

Thus the transition map is smooth on its whole domain $\Phi(Z)=\bigcup_{\sigma\in\Sigma}\Phi\big(\pi(U_\sigma)\big)$.

\paragraph{Conclusions} Consequently, the atlas $\Ac$ is smooth and the set $\Mc_N$ is a smooth manifold. But since we are going to integrate densities over this manifold we extend $\Ac$ to the maximal smooth atlas on $\Mc_N$, since restricting ourselves to the atlas $\Ac$ would be inconvenient.

For every collection $\{U'_K\}_{K=1,\ldots,N}$ of pairwise disjoint open subsets of $\Mc$, the map
\[
\pi^{-1}_{\!\bigtimes\!U'_K}:\pi\big(\bigtimes U'_K\big)\to\bigtimes U'_K
\]    
is a diffeomorphism. Indeed, let us choose a collection of charts $\{(U_K,\varphi_K)\}_{K=1,\ldots,N}$ on $\Mc$ such that $U_K\subset U'_K$. Using the maps \eqref{map-Mind} and \eqref{coor-N} we see that the following coordinate expression for $\pi^{-1}_{\!\bigtimes\!U'_K}$:
\[
\big(\bigtimes\varphi_K\big)\circ \pi^{-1}_{\!\bigtimes\!U'_K} \circ \Phi^{-1}
\]
is the identity on $(\bigtimes\varphi_K(U_K))\subset\R^{N\dim\Mc}$. 

This fact allows us to state that \emi $\Mc_N$ is locally diffeomorphic to $\Mc^N$ and \emii the projection $\pi$ is smooth.

\paragraph{$\Mc_N$ is paracompact} A manifold is paracompact if and only if each connected component of the manifold is second countable \cite{ma-diff-geo}. We assumed that $\Mc$ is paracompact and connected, which means that $\Mc$ is second countable. Thus $\Mc^N$ and $\Mc^N_0$ are second countable as well. 

Let $\mathbb{B}$ be a countable base for the topology of $\Mc^N_0$. Suppose that $Z$ is an open subset of $\Mc_N$. Then, by definition of the topology on $\Mc_N$, the preimage $\pi^{-1}(Z)$ is open subset of $\Mc^N_0$. Consequently, the preimage is a union of open sets $\{U_\alpha\}\subset\mathbb{B}$. We know already that $\pi$ maps open subsets of $\Mc^N_0$ onto open ones in $\Mc_N$. Thus  
\[
Z=\pi\big(\pi^{-1}(Z)\big)=\pi\Big(\bigcup_\alpha U_\alpha\Big)= \bigcup_\alpha \pi(U_\alpha).
\]
The conclusion is that every open subset of $\Mc_N$ is a union of open sets being images of elements of $\mathbb{B}$ under $\pi$---these images form a countable base for the topology of $\Mc_N$.

$\Mc_N$ is thus second countable and thereby paracompact.    
\subsection{An example of $\Mc_N$ \label{exampl-N_N}}

Here we will find an explicit description of $\Mc_N$ for $\Mc=\R$ by means of a global coordinate system on $\Mc_N$.  

Let us fix an integer $N\geq 2$. If $\Mc=\R$, then for every $y\equiv\{x_K\}\in \Mc_N$ it is possible to form a {\em decreasing} sequence from all elements of $y$ i.e. there exists a permutation $\sigma\in\Sigma$ such that
\[
x_{\sigma(1)}>x_{\sigma(2)}>\ldots>x_{\sigma(N-1)}>x_{\sigma(N)}.
\]
This observation allows us to define the following map
\[
\Mc_N\ni \{x_K\}\mapsto \iota(\{x_K\}):=\text{the decreasing sequence of elements of $\{x_K\}$ }\in\Mc^N_0\subset \R^N.
\]
It is obvious that the map is a bijection onto its image such that
\begin{align}
\pi\circ\iota&=\id,& \iota\circ\pi&=\id,
\label{pi-i}
\end{align}
where the last equation holds on $\iota(\Mc_N)\subset \Mc^N_0$. Let
\[
\R^N_>:=\{\ (x_1,\ldots,x_N)\in\R^N \ | \ x_1>x_2>\ldots>x_{N-1}>x_N \ \}\subset \Mc^N_0.
\]
It is clear that $\iota(\Mc_N)=\R^N_>$. 

Fix an arbitrary $z\equiv (x_1,\ldots,x_N)\in\R^N_>$ and define 
\[
2\eps:={\rm min}\{\ x_{K+1}-x_K \ | \ x_{K+1},x_K\in z\ \}.
\]
If 
\[
U_K:=\,]x_K-\eps,x_K+\eps[,
\]
then $\bigtimes U_K\ni z$ is an open subset of $\R^N$. Since $\bigtimes U_K\subset \R^N_>$, the latter set is an open subset of $\R^N$.    

Moreover, since the sets $\{U_K\}$ just defined satisfy \eqref{clo-dis} it is not difficult to realize with help of Equations \eqref{pi-i} that $\iota$ restricted to $\pi\big(\bigtimes U_K\big)$ is a map of the sort of the map \eqref{coor-N} (with $\varphi_K$ being the identity on $U_K$). This means that every $y\in\Mc_N$ possesses an open neighborhood $Z$ such that $(Z,\iota|_Z)$ is a chart belonging to the atlas $\Ac$ on the manifold. This is sufficient to conclude that the map $\iota$ defines a global coordinate system on the manifold $\Mc_N$.     

Thus if $\Mc=\R$, then $\Mc_N$ can be identified with $\R^N_>$ being an open subset of $\R^N$.    

\subsection{Natural decomposition of tangent spaces to $\Mc_N$ \label{app-decomp}} 

Let us fix a point $y\equiv\{x_K\}\in\Mc_N$, the numbering of elements of $y$ and open subsets $\{U_K\}$ of $\Mc$ satisfying \eqref{clo-dis} such that $x_K\in U_K$. Let $y_I$ be a subset of $y$ obtained by removing from $y$ the point $x_I$: $y_I=\{x_1,\ldots,x_{I-1},x_{I+1},\ldots,x_N\}$. Define a map
\begin{equation}
U_I\ni x \mapsto \xi_{y_I}(x):=\{x\}\cup y_I\in\Mc_N.
\label{xi}
\end{equation}
To demonstrate that this map is smooth consider the following smooth map
\[
U_I\ni x \mapsto \xi^0_{y_I}(x):=(x_1,\ldots,x_{I-1},x,x_{I+1},\ldots,x_N)\in\Mc^N_0.
\]
Clearly, 
\begin{equation}
\xi_{y_I}=\pi\circ\xi^0_{y_I},
\label{xi-pixi}
\end{equation}
which means that $\xi_{y_I}$ is a composition of two smooth maps.  
  
If $\xi^t_{y_I}$ is the tangent map defined by $\xi_{y_I}$, then $\xi^t_{y_I}(T_{x_I}\Mc)$ is a linear subspace of $T_y\Mc_N$. This subspace is generated by all curves in $\Mc_N$ of the following form:      
\begin{equation}
t\mapsto \xi_{y_I}\big(x(t)\big)=\{x(t)\}\cup y_I\in\Mc_N,
\label{curves}
\end{equation}
where $t\mapsto x(t)$, $x(0)=x_I$, is a differentiable curve in $U_I\subset \Mc$.
   
It is evident that
\[
T_{(x_K)}\Mc^N_0=\bigoplus_{K=1}^N \xi^{0t}_{y_K}(T_{x_K}\Mc),
\]
where $\xi^{0t}_{y_K}$ denotes the tangent map given by $\xi^{0}_{y_K}$. Let us now act on both sides of this equation by the tangent map $\pi^t$ defined by $\pi$. Since $\pi$ restricted to $\bigtimes U_K$ is a (local) diffeomorphism\footnote{This is because $\pi^{-1}_{\!\bigtimes\! U_K}$ is a local diffeomorphism as proven at the end of Appendix \ref{N_N}.} to $\Mc_N$, we thus obtain 
\begin{equation}
T_y\Mc_N=\pi^t\big(T_{(x_K)}\Mc^N_0\big)=\bigoplus_{K=1}^N \pi^t\xi^{0t}_{y_K}(T_{x_K}\Mc)=\bigoplus_{K=1}^N \xi^t_{y_K}(T_{x_K}\Mc),
\label{decomp-x_0}
\end{equation}
where in the last step we used \eqref{xi-pixi}.

{Suppose now that the product $\bigtimes U_K$ is the domain of the map \eqref{map-Mind}, which is used to define the map $\Phi$ via \eqref{coor-N}. Denote by $(x_K^{i_K})$ the value at $y=\{x_K\}$ of the coordinates defined by $\Phi$. Then the following curve      
\begin{multline*}
t\mapsto \big(\pi\circ(\bigtimes \varphi^{-1}_K)\big)(x^1_1,\ldots,x^{i-1}_I,x^{i}_I+t,x^{i+1}_I,\ldots,x^{\dim\Mc}_N)=\\=\big\{\varphi^{-1}_I(x^1_I,\ldots,x^{i-1}_I,x^{i}_I+t,x^{i+1}_I,\ldots,x^{\dim\Mc}_I)\big\}\cup\{y_I\}\in \Mc_N 
\end{multline*}
defines the tangent vector $\partial_{x^i_I}\in T_y\Mc_N$. Taking into account the formula \eqref{curves} we conclude that $\partial_{x^i_I}\in \xi^t_{y_I}(T_{x_I}\Mc)$. Hence the vectors $(\partial_{x^k_I})_{k=1,\ldots,\dim\Mc}$ (with the fixed index $I$) form a basis of $\xi^t_{y_I}(T_{x_I}\Mc)$.} 

To simplify the notation, in the main body of the paper we will identify 
\begin{equation}
\xi^t_{y_I}(T_{x_I}\Mc)\equiv T_{x_I}\Mc 
\label{T-id-T}
\end{equation}
and write 
\[
T_y\Mc_N=\bigoplus_{K=1}^N T_{x_K}\Mc.
\]  

\subsection{Diffeomorphisms of $\Mc_N$ induced by those of $\Mc$ \label{diff-M-N}} 

\paragraph{Definition} Let $\theta$ be a diffeomorphism on $\Mc$. It induces a map on $\Mc_N$ as follows: 
\begin{equation}
\Mc_N\ni \{x_K\}\mapsto \Theta(\{x_K\}):=\{\theta(x_K)\}\in \Mc_N
\label{diff-ind}
\end{equation}
(if $\{x_K\}$ consists of pairwise distinct points of $\Mc$, then the points in $\big\{\theta(x_K)\big\}$ are pairwise distinct too). Let us show now that $\Theta$ is a diffeomorphism on $\Mc_N$.  

To this end consider the following map
\[
\Mc^N_0\ni (x_K)\mapsto \thetabf(x_K):=\big(\theta(x_K)\big)\in \Mc^N_0
\]
Clearly,
\begin{equation}
\Theta\circ\pi=\pi\circ \thetabf.
\label{chi-pi}
\end{equation}
Consider now maps \eqref{coor-N} and \eqref{coor'-N} assuming that $U'_K=\theta(U_K)$ and $\varphi'_K=\varphi_K\circ\theta^{-1}$. Then using \eqref{chi-pi} we obtain
\begin{multline}
\Phi' \circ \Theta\circ \Phi^{-1}=\Phi' \circ \Theta\circ(\pi^{-1}_{\!\bigtimes\!U_K})^{-1}\circ \big(\bigtimes\varphi_K\big)^{-1}=\Phi' \circ \Theta\circ\pi\circ \bigtimes\varphi^{-1}_K =\\=\big(\bigtimes\varphi'_L\big)\circ \pi^{-1}_{\!\bigtimes\!U'_K}\circ \pi\circ \thetabf\circ \bigtimes\varphi^{-1}_K=\big(\bigtimes(\varphi_L\circ \theta^{-1})\big)\circ \bigtimes(\theta\circ\varphi^{-1}_K)=\id
\label{f-chi-f}
\end{multline}
on $\big(\bigtimes\varphi_L\big)\bigtimes U_K$. This means that for every diffeomorphism $\theta$ of $\Mc$, the map $\Theta$ is smooth. But $\Theta^{-1}$ exists and is smooth, since it is given via \eqref{diff-ind} by $\theta^{-1}$. $\Theta$ is thus a diffeomorphism of $\Mc_N$.      

If $\Theta$ is induced by $\theta$ via \eqref{diff-ind}, then 
\[
\Diff(\Mc)\ni \theta\mapsto \Theta\in \Diff(\Mc_N)
\]
is a homomorphism. Its image is a subgroup of the diffeomorphism group of $\Mc_N$. This subgroup will be denoted by $\Diff_\Mc(\Mc_N)$.    

Note also that it follows from \eqref{f-chi-f} that $\Phi\circ \Theta^{-1}$ coincides with $\Phi'$. This means that $\Theta^{-1}$ maps a chart in $\Ac$ to an other one in $\Ac$. In other words, the atlas $\Ac$ is preserved by all diffeomorphisms in $\Diff_\Mc(\Mc_N)$.

\paragraph{Diffeomorphisms in $\Diff_\Mc(\Mc_N)$ preserve the decompositions \eqref{decomp-x_0}} 

Let $\theta\in\Diff(\Mc)$ generates $\Theta\in\Diff_\Mc(\Mc_N)$. Using the notation introduced in Appendix \ref{app-decomp}, we have for every $x\in U_I$: 
\[
(\Theta\circ \xi_{y_I})(x)=\{\theta(x)\}\cup \Theta(y)_I=(\xi_{\Theta(y)_I}\circ\theta)(x),
\]
where $\Theta(y)_I$ is the set obtained by removing the point $\theta(x_I)$ from $\Theta(y)$. If $\Theta^t$ denotes the tangent map given by $\Theta$, then by virtue of the equation above,
\begin{equation}
\Theta^t\xi^t_{y_I}(T_{x_I}\Mc)=\xi^t_{\Theta(y)_I}\theta^t(T_{x_I}\Mc)=\xi^t_{\Theta(y)_I}(T_{\theta(x_I)}\Mc).
\label{Theta'-TxM}
\end{equation}
This means that $\Theta$ maps the decomposition \eqref{decomp-x_0} at $y$ into the decomposition \eqref{decomp-x_0} at $\Theta(y)$.  
    
\paragraph{Diffeomorphisms in $\Diff_\Mc(\Mc_N)$ preserve the spaces $\{\Gamma^\oplus_y\}$} Let $\theta\in\Diff(\Mc)$, and $\Theta$ be the corresponding element of $\Diff_\Mc(\Mc_N)$. To simplify the notation, in line with the identification \eqref{T-id-T}, we will denote elements of both $\xi_{y_I}(T_{x_I}\Mc)$ and $T_{x_I}\Mc$ by the same symbols $v_I,\check{v}_I$. Similarly, taking into account Equation \eqref{Theta'-TxM} we will identify $\Theta^t v_I$ being an element of $\Theta^t\xi^t_{y_I}(T_{x_I}\Mc)$ with $\theta^tv_I$ being an element of $T_{\theta(x_I)}\Mc$. Then for every $v\in T_y\Mc_N$    
\[
\Theta^tv=\Theta^t\Big(\sum_{I=1}^N v_I\Big)=\sum_{I=1}^N \theta^tv_I.
\] 

Let $y=\{x_K\}\in\Mc_N$ and let $x'_K=\theta^{-1}(x_K)$. Then $y'=\{x'_K\}=\Theta^{-1}(y)$. Suppose that $\{\gamma_{x_K}\}$ are scalar products (of the same signature) such that $\gamma_{x_I}\in \Gamma_{x_I}$ and that $\gamma^\oplus_y\in\Gamma^\oplus_y$ is constructed of these scalar products according to Equation \eqref{g^o+}. Consider now the pull-back $\Theta^{t *} \gamma^{\oplus}_y$:
\begin{equation*}
\big(\Theta^{t *}\gamma^\oplus_y\big)(v,\check{v})=\gamma^\oplus_y\big(\Theta^t v,\Theta^t\check{v}\big)=\sum_{I=1}^N\gamma_{x_I}(\theta^t v_I,\theta^t \check{v}_I)=\sum_{I=1}^N(\theta^{t *}\gamma_{x_I})(v_I,\check{v}_I),
\end{equation*}
where $v,\check{v}\in T_{y'}\Mc_N$. We see that the pull-back $\Theta^{t *}\gamma^\oplus_y$ is a scalar product on $T_{y'}\Mc_N$ constructed of scalar products $\{\theta^{t*}\gamma_{x_I}\}$ via \eqref{g^o+}. Therefore $\Theta^{t *}\gamma^\oplus_y\in \Gamma^\oplus_{y'}$.

We are then allowed to conclude that diffeomorphisms in $\Diff_\Mc(\Mc_N)$ preserve the spaces $\{\Gamma^\oplus_y\}_{y\in\Mc_N}$. Moreover, regarding the correspondence given by the bijection \eqref{1-to-1}, we see that the pull-back $\Theta^{t *}:\Gamma^\oplus_y\to\Gamma^\oplus_{y'}$ corresponds to the pull-back
\begin{equation}
\bigtimes \theta^{t *}\equiv \theta^{t *}\times\ldots\times \theta^{t *}:\Gamma_{x_1}\times\ldots\times\Gamma_{x_N}\to \Gamma_{x'_1}\times\ldots\times\Gamma_{x'_N}.
\label{Xtheta}
\end{equation}
More precisely,
\begin{equation}
b_{y'}\circ \Theta^{t *}=\big(\bigtimes \theta^{t *}\big)\circ b_y
\label{bb}
\end{equation}
provided the ordering of the spaces of scalar products $\{\Gamma_{x_K}\}$ and $\{\Gamma_{x'_K}\}$ is chosen as in \eqref{Xtheta}.

\section{Proof of Lemma \ref{lm-push-prod} \label{app-proof}}

Since $X$, $Y$, $X'$ and $Y'$ are second countable l.c.H. spaces, so are the products $X\times Y$ and $X'\times Y'$ \cite{cohn}. If two regular Borel measures are defined on second countable l.c.H. spaces, then their product is well defined and again is a regular Borel measure \cite{cohn}. Therefore both product measures, which appear in \eqref{mu-x-nu} are regular and Borel. That being the case, by virtue of the Riesz representation theorem, it is enough to show that for every $h\in C^c(X\times Y)$,  
\begin{equation}
\int_{X\times Y}h\,\big((\alpha_{\star}d\mu)\times(\beta_{\star}d\nu)\big)=\int_{X\times Y}h\,\big((\alpha\times\beta)_{\star}(d\mu\times d\nu)\big).
\label{int-int}
\end{equation}

Each second countable l.c.H. space is metrizable \cite{cohn}. Let then $\delta_X$ and $\delta_Y$ be corresponding metrics on $X$ and $Y$. Then 
\[
\delta\big((x,y),(\check{x},\check{y})\big):=\sqrt{\delta^2_X(x,\check{x})+\delta^2_Y(y,\check{y})}
\]   
is a metric on $X\times Y$ compatible with the product topology. The canonical projections $\pi_X:X\times Y\to X$, $(x,y)\mapsto x$,  and $\pi_Y:X\times Y\to Y$, $(x,y)\mapsto y$, are continuous maps.

Given $x\in X$, let us define 
\[
Y\ni y\mapsto h_x(y):=h(x,y)\in\R.
\]
The support of $h_x$, if non-empty, can be characterized as follows: $y\in \supp h_x$ if and only if for every $\eps>0$ there exists $\check{y}\in Y$ such that $h_x(\check{y})\neq 0$ and $\delta_Y(y,\check{y})<\eps$, or, equivalently, if and only if for every $\eps>0$ there exists a pair $(x,\check{y})\in X\times Y$ such that $h(x,\check{y})\neq 0$ and $\delta\big((x,y),(x,\check{y})\big)<\eps$. This last statement implies that $(x,y)\in \supp h$. We thus showed that if $y\in \supp h_x$, then $(x,y)\in \supp h$. But if  $(x,y)\in \supp h$, then $y=\pi_Y(x,y)\in \pi_Y(\supp h)$. Thus $\supp h_x \subset \pi_Y(\supp h)$. This inclusion holds also if $\supp h_x$ is empty. 

Note now that $\pi_Y(\supp h)$ is compact being the image of the compact set $\supp h$ under the continuous map $\pi_Y$. We see that $\supp h_x$ is a closed subset of a compact set and therefore is compact as well\footnote{More precisely, $\supp h_x\subset\pi_Y(\supp h)$ is a closed subset of $Y$ and therefore is a closed subset of $\pi_Y(\supp h)$ (i.e. is closed with respect to the subspace topology on $\pi_Y(\supp h)$). This means that $\supp h_x$ is a compact subset of $\pi_Y(\supp h)$ and therefore it is a compact subset of $Y$.}. 

On the other hand, continuity of $h$ implies continuity of $h_x$.

We conclude that for every $x\in X$, $h_x\in C^c(Y)$ and accordingly to \eqref{*dmu}
\begin{equation}
\int_Y h_x\,(\beta_\star d\nu)=\int_{Y'}(\beta^\star h_x)\,d\nu.
\label{int-hx}
\end{equation}

Consider now the following function
\begin{equation}
X\ni x\mapsto \check{h}(x):=\int_Y h_x\,(\beta_\star d\nu)\in\R.
\label{df-ch}
\end{equation}
Suppose that $x\not\in \pi_X(\supp h)$. This means that for every $y\in Y$, $h_x(y)=h(x,y)=0$ and consequently $\check{h}(x)=0$. Thus if $\check{h}(x)\neq 0$, then $x\in \pi_X(\supp h)$. Therefore $\supp \check{h}\subset\pi_X(\supp h)$, since $ \pi_X(\supp h)$ is compact and thereby closed. We conclude then that $\supp \check{h}$ is compact being a closed subset of a compact set.         

Let
\[
s=\sup_{({x},{y})\in X\times Y}|h({x},{y})|
\]
and
\[
Y\ni y \mapsto \mathbf{h}(y):=
\begin{cases}
s & \text{if $y\in \pi_{Y}(\supp h)$, }\\
0 & \text{otherwise}
\end{cases}.
\]      
Since $\pi_{Y}(\supp h)$ is compact and the measure $\beta_\star d\nu$ is regular, $\mathbf{h}$ is integrable with respect to the measure. Moreover, for every $x\in X$, $|h_x|\leq \mathbf{h}$. These two facts allow us to use the Lebesgue's dominated convergence theorem to conclude that the function $\check{h}$ is continuous\footnote{Note that since $X$ is metrizable we can apply here the criterion of continuity formulated in terms of sequences of arguments and values of a function.}.

We thus showed that $\check{h}\in C^c(X)$ and consequently by virtue of \eqref{*dmu} 
\begin{equation}
\int_X \check{h} \,(\alpha_\star d\mu)=\int_{X'}(\alpha^\star \check{h})\,d\mu.
\label{int-ch}
\end{equation}
  
Now we are ready to show that Equation \eqref{int-int} holds. To this end let us transform the l.h.s. of this equation using the Fubini-Tonelli theorem and the formulas \eqref{int-hx}, \eqref{df-ch} and \eqref{int-ch}:
\begin{multline*}
\int_{X\times Y}h\,\big((\alpha_{\star}d\mu)\times(\beta_{\star}d\nu)\big)=\int_{X}\Big[\int_Y h_x\,(\beta_{\star}d\nu)\Big](\alpha_{\star}d\mu)=\int_{X} \check{h}\,(\alpha_{\star}d\mu)=\\=\int_{X'} (\alpha^\star\check{h})\,d\mu=\int_{X'}\Big[\int_Y h_{\alpha(x')}\,(\beta_\star d\nu)\Big]d\mu=\\=\int_{X'}\Big[\int_{Y'} (\beta^\star h_{\alpha(x')})\,d\nu\Big]d\mu 
=\int_{X'\times Y'}\big((\alpha\times\beta)^\star h\big)\,(d\mu\times d\nu)=\\=\int_{X\times Y}h\, \big((\alpha\times\beta)_\star (d\mu\times d\nu)\big)
\end{multline*}
(note also that $\alpha\times \beta$ is a homeomorphism and therefore $(\alpha\times\beta)^\star h$ is continuous and compactly supported.)

\section{Proof of Lemma \ref{K1-Hilb} \label{K1-proof}}

To prove Lemma \ref{K1-Hilb} we have to show that \emi $\Kc_N$ is a linear space, \emii the map \eqref{KN-inner} is an inner product on $\Kc_N$ and \emiii $\Kc_N$ is complete in the norm defined by the inner product \eqref{KN-inner}. The proof of the lemma we are going to present here, follows a proof of an analogous lemma concerning the Hilbert sequence space $l^2$ (see e.g. \cite{foll,kreyszig}). 

\paragraph{$\Kc_N$ is a linear space} If $\Psi,\,\Psi'\in\Kc_N$ and $z\in\C$, then
\begin{align*}
(z\Psi)(y)&:=z\,\Psi(y), & (\Psi+\Psi')(y)&:=\Psi(y)+\Psi'(y).
\end{align*}
It is now obvious, that $z\Psi\in \Kc_N$ for every $z\in\C$. Regarding the sum $\Psi+\Psi'$ note first that the set
\begin{equation}
\{\ y\in\Mc_N \ | \ \Psi(y)\neq 0 \ \text{or}\ \Psi'(y)\neq 0\ \}
\label{yyy}
\end{equation}
is countable. Therefore all its elements can be ordered into a sequence $(y_n)$. The value $||\Psi+\Psi'||^2$ can be now bounded from above  as follows:
\begin{multline*}
||\Psi+\Psi'||^2=\sum_{n=1}^\infty||\Psi(y_n)+\Psi'(y_n)||^2_{y_n}\leq \sum_{n=1}^\infty \big(||\Psi(y_n)||_{y_n}+||\Psi'(y)||_{y_n}\big)^2\\
=\sum_{n=1}^\infty\big(||\Psi(y_n)||^2_{y_n}+2||\Psi(y_n)||_{y_n}||\Psi'(y_n)||_{y_n}+||\Psi(y_n)||^2_{y_n}\big),
\end{multline*}   
where $||\cdot||_y$ is the norm on $H^\otimes_y$. Note now that $2ab\leq a^2+b^2$ for every real numbers $a$ and $b$. Therefore
\[
||\Psi+\Psi'||^2\leq \sum_{n=1}^\infty \big(2||\Psi(y_n)||^2_{y_n}+2||\Psi'(y_n)||^2_{y_n}\big)=2\big(||\Psi||^2+||\Psi'||^2\big)<\infty.
\]   
Thus $\Psi+\Psi'\in\Kc_N$ and $\Kc_N$ is a linear space.

\paragraph{The map \eqref{KN-inner} is an inner product} To prove that the map \eqref{KN-inner} is well defined, consider again the same elements $\Psi,\,\Psi'\in\Kc_N$ and the same sequence $(y_n)$ of points in \eqref{yyy}. Then 
\[
\scal{\Psi'}{\Psi}=\sum_{y\in\Mc_N}\scal{\Psi'(y)}{\Psi(y)}_y:=\sum_{n=1}^\infty\scal{\Psi'(y_n)}{\Psi(y_n)}_{y_n},
\]   
where $\scal{\cdot}{\cdot}_y$ is the inner product on $H^\otimes_y$. Let us show now that the series above is absolutely convergent---then its sum does not depend on the ordering of points in \eqref{yyy} into a sequence and, consequently, $\scal{\Psi'}{\Psi}$ is well defined. 

To this end we will apply the Schwarz inequality to every term in the following series:
\begin{multline*}
\sum_{n=1}^\infty\big|\scal{\Psi'(y_n)}{\Psi(y_n)}_{y_n}\big|\leq \sum_{n=1}^\infty ||\Psi'(y_n)||_{y_n}||\Psi(y_n)||_{y_n}\leq\\ \leq\frac{1}{2}\sum_{n=1}^\infty \big(||\Psi'(y_n)||^2_{y_n}+||\Psi(y_n)||^2_{y_n}\big)=\frac{1}{2}\big(||\Psi'||^2+||\Psi||^2\big)<\infty 
\end{multline*}
(here in the second step we again used the inequality $2ab\leq a^2+b^2$). 

Thus the map \eqref{KN-inner} is well-defined. It is now an easy exercise to show that it is an inner product on $\Kc_N$ and that the norm defined on the set by the inner product coincides with \eqref{KN-norm}.   

\paragraph{$\Kc_N$ is complete} It remains to show that $\Kc_N$ equipped with the norm is a complete space. Let us then suppose that $(\Psi_m)_{m=1,2,\ldots}$ is a Cauchy sequence of elements of $\Kc_N$. Then the set
\[
\{\ y\in\Mc_N \ | \ \text{$\exists$  $m$ such that $\Psi_m(y)\neq 0$} \ \}
\]  
is countable and we can form a sequence $(y_n)$ using all elements of this set. Thus for every $\eps>0$, there exists $m_0$ such that for every $m,m'>m_0$,     
\begin{equation}
||\Psi_m-\Psi_{m'}||^2=\sum_{n=1}^\infty ||\Psi_m(y_n)-\Psi_{m'}(y_n)||^2_{y_n}<\eps^2.
\label{psim-psim'}
\end{equation}
Consequently, for every $n$ and  every $\eps>0$, there exists $m_0$ such that for each $m,m'>m_0$,
\[
||\Psi_m(y_n)-\Psi_{m'}(y_n)||^2_{y_n}<\eps^2.
\]
This implies that for every (fixed) $n$, the sequence $\big(\Psi_m(y_n)\big)$ has a limit in (the complete space) $H^\otimes_{y_n}$---this limit will be denoted by $\psi_n$. Let then $\Psi$ be a section of $\mathbf{H}^\otimes$ such that  
\[
\Psi(y):=
\begin{cases}
\psi_n & \text{if $y=y_n$}\\
0 & \text{otherwise}
\end{cases}.
\]
We will show now that \emi the sequence $(\Psi_m)$ converges to $\Psi$ in the norm \eqref{KN-norm} and \emii $\Psi\in\Kc_N$.    
 
It follows from \eqref{psim-psim'} that for every $l$ and for every $m,m'>m_0$,  
\[
\sum_{n=1}^l ||\Psi_m(y_n)-\Psi_{m'}(y_n)||^2_{y_n}<\eps^2.
\]
Therefore for every $l$ and for every $m>m_0$,
\[
\lim_{m'\to\infty}\sum_{n=1}^l ||\Psi_m(y_n)-\Psi_{m'}(y_n)||^2_{y_n}=\sum_{n=1}^l ||\Psi_m(y_n)-\Psi(y_n)||^2_{y_n}\leq\eps^2.
\]
Consequently, for $m>m_0$, passing to the limit as $l$ tends to the infinity, we obtain  
\begin{equation}
\sum_{n=1}^\infty ||\Psi_m(y_n)-\Psi(y_n)||^2_{y_n}=\sum_{y\in\Mc_N}||\Psi_m(y)-\Psi(y)||^2_{y}=||\Psi_m-\Psi||^2\leq\eps^2.
\label{last}
\end{equation}
We conclude then that $\Psi$ is the limit of $(\Psi_m)$. 

Evidently, the set \eqref{Psi-neq-0} for $\Psi-\Psi_m$ is countable. It follows from \eqref{last} that $\Psi-\Psi_m$ is of finite norm \eqref{KN-norm}. Thus $\Psi-\Psi_m$ is an element of $\Kc_N$. But because $\Psi_m\in\Kc_N$ and $\Kc_N$ is a linear space (as proven above), $\Psi$ belongs to $\Kc_N$.          

We thus showed that every Cauchy sequence of elements of $\Kc_N$ converges to an element of this space. The space is then complete.



\end{document}